\def\bege{\begin{equation}} \def\ende{\end{equation}}
\def\begr{\begin{eqnarray}} \def\endr{\end{eqnarray}}
\newcommand{\B}{{\mathbb{B}_n}}
\newcommand{\D}{{\mathbb D}}
\newcommand{\C}{{\mathbb C}}
\newcommand{\N}{{\mathbb N}}
\newcommand{\A}{{ \mathcal A}}
\def \psi {u}
\def\a{\alpha}
\def\e{ \varepsilon}
\def\begr{\begin{eqnarray}} \def\endr{\end{eqnarray}}
\DeclareMathOperator*{\esssup}{ess\,sup}
\newtheorem{remark}{Remark}
 \newtheorem{thm}{Theorem}[section]
 \newtheorem{cor}[thm]{Corollary}
 \newtheorem{lem}[thm]{Lemma}
 \theoremstyle{definition}
 \theoremstyle{remark}
 \newtheorem{rem}[thm]{Remark}
 \numberwithin{equation}{section}
\def\bege{\begin{equation}} \def\ende{\end{equation}}
\def\ve{\varepsilon} 
\def\qand{\quad\mbox{ and }\quad}
\def\bege{\begin{equation}} \def\ende{\end{equation}}
\def\begr{\begin{eqnarray}} \def\endr{\end{eqnarray}}
\def\bnum{\begin{enumerate}} \def\enum{\end{enumerate}}
\begin{document}

\title[ Tent space theory and Derivative Area operators]
 {Tent space theory and Derivative Area operators of Hardy spaces into Lebesgue spaces}

\author[Xiaosong Liu]{Xiaosong Liu}
\address{%
    Xiaosong Liu:  Department of Mathematics, Jiaying University, Meizhou 514015, China;}
\email{gdxsliu@163.com}
\thanks{The first author is in part supported by  Department of Education of Guangdong Province (No. 2024KTSCX084).
The second author is in part supported by National Natural Science Foundation of China (No. 12471093), Guangdong Basic and Applied Basic Research Foundation (No. 2024A1515010468), and LKSF STU-GTIIT Joint-research Grant (No. 2024LKSF06).}

\author{Zengjian Lou$^\dagger$}\thanks{$^\dagger$Corresponding author.}
\address{%
    Zengjian Lou:  Department of Mathematics, Shantou University, Shantou 515063, China;}
\email{zjlou@stu.edu.cn}

\author{Zixing Yuan}
\address{%
     Zixing Yuan: School of Mathematics and Statistics, Wuhan University, Wuhan 430072, China;}
\email{zxyuan.math@whu.edu.cn}

\author{Ruhan Zhao$^\dagger$}
\address{Ruhan Zhao:  Department of Mathematics, SUNY Brockport, Brockport, NY 14420, USA.}
\email{rzhao@brockport.edu}

\subjclass[2000]{Primary 47B38, Secondary 32A35, 32A37}

\keywords{Hardy space, Tent space, Derivative area operator, Carleson measure}

\date{\today}

\begin{abstract}
The tent space theory is an effective tool for studying operators from Hardy spaces $H^p$ to Lebesgue spaces $L^q$ for $0<q<p<\infty$. 
The theory of tent spaces  was established by Coifman, Mayer and Stein in 1985.
In this paper, we completely characterize the boundedness and compactness of derivative area operators 
from  $H^p$ on the unit ball $\B$ into $L^q(\mathbb{S}_n)$ for $0<p,q<\infty$. 
As byproducts, we also give complete characterizations for a positive Borel measure $\mu$ 
such that $\mathcal{R}^k:\,H^p\to L^q(\mu)$ is bounded or compact on the unit ball $\B$ for all $0<p,q<\infty$ in terms of Carleson measure and tent spaces,
where $\mathcal{R}^k$ ($k\in\N$) is the radial derivative of order $k$.
\end{abstract}

\maketitle

\section{Introduction}

The area operator induced by a positive measure $\mu$
on the unit disk $\D$ was first studied by Cohn \cite{WSC1997} in 1997,
in which it is showed that the area operator is bounded from 
the Hardy space $H^p(\D)$ to the Lebesgue space $L^p(\partial\D)$
if and only if $\mu$ is a Carleson measure. 
Here $\partial\D$ is the unit circle.
The area operator is considered as a generalization of the area integral, 
which is closely related to Hardy spaces.
In this paper we investigate derivative area operators, which are
compositions of area operators and differentiation operators, 
on the unit ball $\B$ in $\C^n$.

For any two points $z=(z_ 1,\dots,z_ n)$ and $w=(w_ 1,\dots,w_ n)$ in $\C^n$
we write
$\langle z,w\rangle =z_ 1\bar{w}_ 1+\dots +z_ n \bar{w}_ n,$
and
$|z|=\sqrt{\langle z,z\rangle}=\sqrt{|z_ 1|^2+\dots +|z_ n|^2}.$
Let $\mathbb{B}_n=\{z\in \C^n:|z|<1\}$ be the unit ball in $\C^n$,
and let $\mathbb{S}_n=\{z\in \C^n:|z|=1\}$ be the unit sphere. 
We use $dV$ and $d\sigma$ to denote the normalized volume measure on $\mathbb{B}_n$
such that $V(\mathbb{B}_n)=1$
and the normalized area measure on $\mathbb{S}_n$ 
such that $\sigma(\mathbb{S}_n)=1$, respectively,
and denote by $d\tau(z)=dV(z)/(1-|z|^2)^{n+1}$.

For any $\zeta\in \mathbb{S}_n$ and $\alpha>1$, the admissible approach region in $\mathbb{B}_n$ is defined by
\begin{equation}\label{Gamma}
\Gamma_\alpha(\zeta)
=\{z\in\mathbb{B}_n: |1-\langle z,\zeta\rangle|<\alpha(1-|z|)\}.
\end{equation}
In fact, the choice of $\a>1$ (called the aperture) is not important for our purpose 
(see \cite{CMS1985} and \cite{PR2015}), and so
we sometimes drop $\a$ and write it as $\Gamma(\zeta)$. 
We will also write $\widetilde{\Gamma}(\zeta)$ and 
$\widetilde{\widetilde{\Gamma}}(\zeta)$ to indicate 
admissible approach regions with different apertures.

Let $H(\mathbb{B}_n)$ be the space of all holomorphic functions on $\mathbb{B}_n$. 
For $0<p<\infty$, the Hardy space $H^p$ consists of
those  functions $f\in H(\mathbb{B}_n)$ such that
\[
\|f\|_{H^p}^p=\sup_{0<r<1}\int_{\mathbb{S}_n}|f(r\zeta)|^p\,d\sigma(\zeta)<\infty.
\]
We refer to the books \cite{AA1994}, \cite{WR1980} and \cite{KZ2005} for the
theory of Hardy spaces on the unit ball.

For a multi-index $m=(m_1,\cdots,m_n)$ and $f\in H(\B)$, 
let $|m|=m_1+m_2+\cdots+m_n$,
$z^m=z_1^{m_1}\cdots z_n^{m_n}$,
$\partial f(z)=\nabla f(z)=(\partial f/\partial z_1,\cdots,\partial f/\partial z_n)$
and $\partial^mf=\partial^{|m|}f/(\partial z_1^{m_1}\cdots \partial z_n^{m_n})$. 

Suppose the homogeneous expansion of $f\in H(\B)$ is given by
$$
f(z)=\sum_{j=0}^{\infty}f_j(z).
$$ 
Let $\alpha$ is an real parameter. 
we denote a class of fractional radial derivative operator 
$\mathcal{R}^\alpha: H(\B)\rightarrow  H(\B)$ as follows:
\[
\mathcal{R}^\alpha f(z)=\sum_{j=1}^{\infty}j\,^{\alpha} f_j(z).
\]
Throughout this paper we will always assume that $\mu$ is a positive Borel measure on $\B$, 
finite on compact subsets of $\B$.
Many authors studied the boundedness of the embedding derivative operator 
$\partial^m:H^p \rightarrow L^p(\mu)$, i.e.
\begr\label{qhpcm}
\left(\int_{\mathbb{B}_n}|\partial^m f(z)|^q\,d\mu(z)\right)^{1/q}\leq C\|f\|_{H^p}.
\endr
See for example, Luecking \cite{HL1991}, Jevti\'{c} \cite{JM1995}, \cite{JM1996} and Arsenovi\'{c} \cite{MA1999}.
Let $k$ be a positive integer. 
It is known that embedding derivative operators $\mathcal{R}^k:H^p \rightarrow L^p(\mu)$ 
have similar characterizations to embedding derivative operators $\partial^m:H^p \rightarrow L^p(\mu)$.
See, for example \cite{WZ}.
In this paper we consider the following derivative area operator on the Hardy space $H^p$:
\[
\A_{\mu,\,s}^k(f)(\zeta)=\left(\int_{\Gamma(\zeta)}|\mathcal{R}^kf(z)|^s\frac{d\mu(z)}{(1-|z|)^n}\right)^{1/s},
\qquad \textrm{ for all }\,\,\zeta\in  \mathbb{S}_n.
\]
When $s=1$ and $k=0$, we simply denote $\A_{\mu,\,s}^k$ by $\A_{\mu}$.
Minkowski's inequality shows that $\A_{\mu,\,s}^k$ is sublinear.
 
The main goal of this paper is to characterize
boundedness and compactness of derivative area operators $\A_{\mu,\,s}^k$ from $H^p$ into $L^q(\mathbb{S}_n)$
for $0<p,\,q<\infty$.
Relating to this question, 
Cohn \cite{WSC1997} studied $\A_\mu$ from $H^p(\D)$ to $L^p(\partial\mathbb{D})$ on the unit disk $\D$. 
Gong, Lou and Wu \cite{GLW2010} studied $\A_\mu$ from $H^p(\D)$ to $L^q(\partial\mathbb{D})$ when $0<p,\,q<\infty$. 
These results have been generalized to the unit ball in \cite{LLZ2022} and \cite{LP2024}.
The boundedness and compactness of $\A_\mu$ on weighted Bergman spaces on the unit disk $\mathbb{D}$ 
have been also studied intensively recently 
(see, e.g., \cite{ATTY2024}, \cite{PRS2015}, \cite{ZW2006}). 

For $z\in \B$ and $r>0$, let $D(z,r)=\{w\in \B:\beta(w,z)<r\}$ denote the Bergman metric ball 
centered at a point $z\in \B$ with radius $r$, 
where $\beta$ be the Bergman metric on $\B$. 
The following is our main result.
The result involves tent spaces $T^p_q(\tau)$ with $0<p<\infty$, $0<q\le\infty$,
whose definitions will be given in Section 3.

\begin{thm}\label{thm1.1}
Let $k$ be a positive integer, let $0<p, q ,s<\infty$ and let $0<r<1$. 
Let $\mu$ be a positive Borel measure on $\mathbb{B}_n$ that is finite
on compact subsets of $\mathbb{B}_n$. 
Let $\Phi_{\mu}(z)=\Phi_{\mu,r}(z)=\mu(D(z,r))/(1-|z|)^{n+ks}$.
Then the following statements hold.
\begin{itemize}
\item[(i)] If $p<q,\,s>0$ or $p=q, s\geq 2$, then
$\A_{\,\mu,\,s}^k:H^p\rightarrow L^q(\mathbb{S}_n)$
is bounded if and only if $\mu$ is a $(ks/n+1+s/p-s/q)$-Carleson measure,
or equivalently,
$$
\sup_{z\in\B}\frac{\mu(D(z,r))}{(1-|z|)^{ks+(1+s/p-s/q)n}}<\infty.
$$
\item[(ii)] If $p=q$ and $s<2$, then
$\mathcal{A}^k_{\mu,s}: H^p \to L^q(\mathbb S_n)$ is bounded if and only if 
$\Phi_{\mu}\in T^{\infty}_{2/(2-s)}(\tau)$.
\item[(iii)] If  $p>q$ and $s \geq 2$, then $\mathcal{A}_{\mu, s}^k: H^p \to L^q(\mathbb{S}_n)$
 is bounded if and only if $\Phi_{\mu} \in T_{\infty}^{pq/(s(p-q))}(\tau)$.
\item[(iv)] If $p>q$ and  $s <2$, then $\mathcal{A}_{\mu,s}^k: H^p \to L^q(\mathbb{S}_n)$ is bounded 
if and only if $\Phi_{\mu}\in T_{2/(2-s)}^{pq/(s(p-q))}(\tau)$.
\end{itemize}
\end{thm}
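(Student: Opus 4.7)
My overall approach is to reduce the boundedness of $\mathcal{A}^k_{\mu,s}:H^p\to L^q(\mathbb{S}_n)$ to a tent-space inequality. The first ingredient is the classical area-integral characterization of the Hardy space, $\|f\|_{H^p}\asymp\|(1-|z|)^k\mathcal{R}^k f\|_{T^p_2(\tau)}$ for $k\in\N$; the second is the recognition of $\|\mathcal{A}^k_{\mu,s}(f)\|_{L^q(\mathbb{S}_n)}$ as a tent-space quasinorm of $\mathcal{R}^k f$ taken against the weighted measure $d\mu(z)/(1-|z|)^n$. After discretizing $\mu$ on a hyperbolic lattice $\{a_j\}\subset\B$ via Luecking's covering, so that the averages $\Phi_\mu(a_j)$ control the local mass of $\mu$ uniformly up to bounded factors, the original question becomes: for which $\Phi_\mu$ does the pointwise multiplier $g\mapsto\Phi_\mu^{1/s}\cdot g$ send $T^p_2(\tau)$ into the appropriate target tent space?

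For case (i) I would invoke the standard Carleson-measure characterization of the embedding $\mathcal{R}^k:H^p\to L^r(d\mu)$ for a suitable exponent $r$, combined with Minkowski's integral inequality (when $s\le q$) or with a subharmonic pointwise bound on $\mathcal{R}^k f$ (when $s\ge 2$), to reduce the $L^q(\mathbb{S}_n)$-quasinorm of $\mathcal{A}^k_{\mu,s}(f)$ directly to that embedding. The hypothesis $p<q$ or $p=q,\,s\ge 2$ is precisely what makes the exponents line up favorably. Necessity follows by testing on the normalized power kernels $f_w(z)=(1-\langle z,w\rangle)^{-t}$, $w\in\B$, with $\|f_w\|_{H^p}\asymp(1-|w|)^{n/p-t}$ for $t$ sufficiently large, and estimating $\mathcal{A}^k_{\mu,s}(f_w)$ from below on a spherical cap around $w/|w|$.

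For cases (ii)--(iv) the decisive ingredient is the tent-space H\"older inequality $T^{p_1}_{q_1}(\tau)\cdot T^{p_2}_{q_2}(\tau)\hookrightarrow T^p_q(\tau)$ with reciprocals adding (Coifman-Meyer-Stein, together with its ball version). Applying H\"older-type arguments first in the inner integral over $\Gamma(\zeta)$ produces the second index $2/(2-s)$ appearing in (ii) and (iv) when $s<2$ (and becomes trivial when $s\ge 2$, matching the subscript $\infty$ of (iii)), while applying H\"older in the outer $L^q(\mathbb{S}_n)$ integral produces the first index $pq/(s(p-q))$ in (iii) and (iv) when $p>q$ (and becomes trivial when $p=q$, matching the superscript $\infty$ of (ii)). Sufficiency in these cases then follows by combining this two-step H\"older argument with the area characterization of $H^p$. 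Necessity is established through tent-space duality together with the atomic decomposition of tent spaces (valid on the full range $0<p<\infty$), paired with test sums built from reproducing kernels on a Bergman lattice.

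The main obstacle is case (iv), in which both indices of the target tent space are simultaneously finite and nontrivial so that neither a clean Carleson argument nor a single H\"older pairing suffices; here the necessity direction requires a genuine two-parameter duality against a tent-space atomic decomposition, and the sufficiency demands a careful iterated H\"older pairing within the tent-space scale. A secondary technical point is the independence of the aperture $\alpha$ in $\Gamma_\alpha(\zeta)$, which I would handle by a standard Coifman-Meyer-Stein aperture-change argument adapted to the unit ball as in Peloso-Ricci.
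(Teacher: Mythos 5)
Your proposal has two genuine gaps. The first is in the sufficiency direction of case (i). Your reduction via Minkowski's inequality (valid for $s\le q$) gives $\|\mathcal{A}^k_{\mu,s}f\|_{L^q(\mathbb{S}_n)}^s\lesssim\int_{\mathbb{B}_n}|\mathcal{R}^kf(z)|^s\,d\nu(z)$ with $d\nu(z)=(1-|z|)^{ns/q-n}\,d\mu(z)$, and the hypothesis of (i) translates exactly into the Carleson bound $\nu(D(z,r))\lesssim(1-|z|)^{ks+ns/p}$. But the embedding $\mathcal{R}^k:H^p\to L^s(\nu)$ is characterized by that Carleson bound only when $s\ge p$; when $s<p$ (which (i) allows, e.g.\ $s=1$, $p=2$, $q=4$) the correct characterization is tent-space membership of $\Phi_\nu$ (for $s<2$, $\Phi_\nu\in T^{p/(p-s)}_{2/(2-s)}(\tau)$), and a pointwise bound on $\Phi_\nu$ gives no control of $\int_{\Gamma(\zeta)}\Phi_\nu^{2/(2-s)}\,d\tau$ since $\tau(\Gamma(\zeta))=\infty$. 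So the reduction demands strictly more than (i) provides. Moreover the regime $q<s<2$ with $p<q$ is covered by neither of your two tools (Minkowski needs $s\le q$, the maximal-function splitting needs $s\ge 2$). The paper closes this by using the Carleson hypothesis \emph{pointwise on an $r$-lattice}: the subharmonicity estimate $|\mathcal{R}^kf(z)|^s\lesssim(1-|z|)^{-ks-n-1}\int_{D(z,r)}|f|^s\,dV$ converts the $\mu$-mass on each $D(a_j,r)$ into weighted volume, yielding $(\mathcal{A}^k_{\mu,s}f(\zeta))^s\lesssim\int_{\widetilde{\widetilde{\Gamma}}(\zeta)}|f|^s(1-|w|)^{\alpha}\,dV(w)$ with $\alpha=ns(1/p-1/q)-n-1$, and then invokes the embedding $H^p\subseteq HT^q_{s,\alpha}$, proved by factoring through a Bergman space $A^t_{(t/p-1)n-1}$ with $p<t<q$; this is where $p<q$ actually enters, and it works for every $s>0$.

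The second gap is in the necessity direction of (ii)--(iv): testing on deterministic kernel sums $\sum_j\lambda_j k_{a_j}$ cannot produce the required lower bound $\int_{\mathbb{S}_n}\bigl(\sum_{a_j\in\Gamma(\zeta)}|\lambda_j|^s\Phi_{\mu,2r}(a_j)\bigr)^{q/s}d\sigma(\zeta)\lesssim\|\mathcal{A}^k_{\mu,s}\|^q\,\|\lambda\|_{T^p_2(Z)}^q$, because of cancellation among the kernels inside $|\cdot|^s$. The missing idea is Rademacher randomization: replace $\lambda_j$ by $\lambda_jr_j(t)$, integrate in $t$, and use Kahane's inequality to commute the $t$-integral with the $L^q(L^{q/s})$ quasinorm and Khinchine's inequality to convert $\bigl|\sum_j\lambda_jr_j(t)k_{a_j}(z)\bigr|^s$ into the square function $\bigl(\sum_j|\lambda_j|^2|k_{a_j}(z)|^2\bigr)^{s/2}$, after which the lattice localizes the sum and produces $\Phi_{\mu,2r}(a_j)$ (the paper's Lemma 4.2). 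With that test inequality in hand, the paper finishes not by atomic decomposition but by duality of tent \emph{sequence} spaces combined with the factorization $T^p_q(Z)=T^{p_1}_{q_1}(Z)\cdot T^{p_2}_{q_2}(Z)$ of Miihkinen--Pau--Per\"al\"a--Wang, e.g.\ $T^{\infty}_{2\rho/(2-s)}(Z)=\bigl(T^{p\rho/(p\rho-s)}_{\rho'}(Z)\cdot T^{p\rho/s}_{2\rho/s}(Z)\bigr)^{\ast}$; your anticipated ``two-parameter duality'' in (iv) is exactly this mechanism, and it handles (ii) and (iii) uniformly, but without the Khinchine--Kahane step your test sums do not get off the ground. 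On the sufficiency side your plan matches the paper in (iii) and (iv), where plain iterated H\"older works; note, however, that in (ii) pointwise H\"older on cones fails for the same infinite-cone reason as above, and one genuinely needs the Coifman--Meyer--Stein pairing $\int_{\mathbb{B}_n}FG(1-|z|)^n\,d\tau\lesssim\|F\|_{T^1_{2/s}(\tau)}\|G\|_{T^{\infty}_{2/(2-s)}(\tau)}$, with $T^{\infty}_q$ carrying its Carleson-measure definition, or equivalently the lattice factorization -- so your named CMS ingredient is the right one, provided it is applied in this duality form rather than as cone-by-cone H\"older.
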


In the proof of the previous theorem, we can
always assume $f(0)=0$, because for any constant c, $\A_{\,\mu,\,s}^kf=\A_{\,\mu,\,s}^k(f+c)$.
We will also give a complete characterization for compactness of $\mathcal{A}^k_{\mu,s}: H^p \to L^q(\mathbb S_n)$.
As byproducts, we will give complete characterizations for a positive Borel measure $\mu$ 
such that $\mathcal{R}^k:H^p\to L^q(\mu)$ is bounded or compact on the unit ball $\B$ for $0<p,q<\infty$.

The paper is organized as follows.
In Section 2 we give  preliminary results.
In Section 3 we give some characterizations of tent spaces on $\mathbb{B}_n$.
Proof of Theorem 1.1 is in Section 4.
In Section 5, we characterize the compactness of $\mathcal{A}^k_{\mu,s}: H^p \to L^q(\mathbb S_n)$.

Throughout this paper, constants are denoted by $C$.
They are positive and may differ from one occurrence to the other.
The notation $a \lesssim b$ means that there is a positive constant $C$
such that $a \leq C b$.
Moreover, $a \asymp b$ means both $a\lesssim b$ and $b\lesssim a$ hold.
For  $p\geq 1$, we denote its conjugate by $p'$.
Thus $1/p+1/{p'}=1$ for $1<p<\infty$ and $p'=\infty$ for $p=1$.

\section{Preliminary results}
In this section, we give some auxiliary results which will be used in the rest of this paper.

For $w\in \mathbb{C}^n$ and $t>0$, we denote by $B(w,t)$ the Euclidean ball centered at $w$ with radius $t$. 
For $z,w\in\overline{\mathbb{B}}_n$, let
$$
d(z,w)=|1-\langle z,w\rangle|^{1/2}.
$$
It is known that, for any points $z,u,w\in \overline{\mathbb{B}}_n$,
\begin{equation}\label{trig}
d(z,w)\le d(z,u)+d(u,w),
\end{equation}
and, if we restrict $d$ on $\mathbb{S}_n$ it becomes a metric. 
See, for example, \cite[page 113]{KZ2005}.

For $\zeta\in \mathbb{S}_n$ and $r>0$, let
$$
Q(\zeta,r)=\{\xi\in \mathbb{S}_n:|1-\langle \zeta,\xi\rangle|^{1/2}<r\}
$$
and
$$
B_r(\zeta)=\{z\in \mathbb{B}_n:|1-\langle z,\zeta\rangle|^{1/2}<r\}.
$$
$Q(\zeta,r)$ is a non-isotropic metric ball on $\mathbb{S}_n$ centered at $\zeta$ 
with radius $r$.
We call $Q(\zeta,r)$ a $d$-ball following \cite{KZ2005},
and $B_r(\zeta)$ a Carleson tube.
It is obvious that, for $r>\sqrt{2}$, we have $Q(\zeta,r)=\mathbb{S}_n$ and $B_r(\zeta)=\B$.
In particular, for any $a\in \mathbb{B}_n\setminus\{0\}$, we denote by
\[
Q_a=Q\left(\frac{a}{|a|},\sqrt{1-|a|}\right)
=\left\{\zeta\in \mathbb{S}_n:\left|1-\left\langle \zeta,\frac{a}{|a|}\right\rangle\right|<1-|a|\right\}
\]
and
\[
B_a=B_{\sqrt{1-|a|}}\left(\frac{a}{|a|}\right)
=\left\{z\in \mathbb{B}_n:\left|1-\left\langle z,\frac{a}{|a|}\right\rangle\right|<1-|a|\right\}.
\]
If $a=0$, we define $B_a=\mathbb{B}_n$ and $Q_a=\mathbb{S}_n$.
By Lemma 4.6 in \cite{KZ2005}, we know that
\begin{equation}\label{qr}
\sigma(Q(\zeta,r))\asymp r^{2n}
\end{equation}
for a fixed $r$ with $0<r<\sqrt{2}$ and for any $\zeta\in\mathbb{S}_n$,
and so $\sigma(Q_a)\asymp(1-|a|)^n$.

For a $d$-ball $Q(\zeta,r)$ in $\mathbb{S}_n$, we also define 
$$
S(Q(\zeta,r))=\left\{z\in\mathbb{B}_n:\,\frac{z}{|z|}\in Q(\zeta,r), 1-\frac{r^2}{2}<|z|<1\right\}. 
$$
Again, if $r>\sqrt{2}$, we have $S(Q(\zeta,r))=\B$.
For convenience, we denote by $S(a)=S(Q_a)$. 
Let  $0<r<1$, $\zeta\in\mathbb{S}_n$.
Denote by $a=(1-r^2)\zeta$. Then $a/|a|=\zeta$ and $r=\sqrt{1-|a|}$.
Hence,
\begin{equation}\label{BQ}
B_r(\zeta)=B_a \qand Q(\zeta,r)=Q_a.
\end{equation}

For $\alpha>1$ and $z\in\B$, let 
\[
I(z)=I_{\alpha}(z)=\{\zeta\in \mathbb{S}_n:\,z\in \Gamma_{\alpha}(\zeta)\},
\]
where $\Gamma_{\alpha}(\zeta)$ is given by (\ref{Gamma}).
By (\ref{trig}) we can easily see that
\begin{equation}\label{IQ}
Q\left(\frac{z}{|z|}, \left(\sqrt{\frac{\a}{2}}-1\right)\sqrt{1-|z|}\right)
\subseteq I_{\a}(z)
\subseteq Q\left(\frac{z}{|z|}, \left(1+\sqrt{\frac{\a}{2}}\right)\sqrt{1-|z|}\right)
\end{equation}
for $\alpha>2$. Hence,
\begr\label{ei(z)}
\sigma\left(I_{\alpha}(z)\right)
\asymp \sigma(Q_z)
\asymp(1-|z|)^n.
\endr

Let $F$ be any closed subset of $\mathbb{S}_n$ 
and $O=\mathbb{S}_n\setminus F$ be the complement of $F$ on $\mathbb{S}_n$.
Let $\alpha>1$.
We denote by $\mathcal{R}_{\alpha}(F)=\bigcup_{\zeta\in F}\Gamma_{\alpha}(\zeta)$
and $\Lambda(O)=\Lambda_{\alpha(O)}=\mathbb{B}_n\backslash \mathcal{R}_{\alpha}(F)$. 
The set $\Lambda(O)$ is called the tent over $O$.
If $F=\emptyset$ then $O=\mathbb{S}_n$. 
In this case we define $\Lambda(\mathbb{S}_n)=\mathbb{B}_n$. 
For a $d$-ball $Q$ in $\mathbb{S}_n$, the tent over $Q$ 
can be equivalently defined as follows:
$$
\Lambda_{\alpha}(Q)=\{w\in \mathbb{B}_n: I_{\alpha}(w)\subset Q\}.
$$
We will denote $\Lambda(I_{\alpha}(z))$ by $\Lambda_{\alpha}(z)$ or simply $\Lambda(z)$.

Let $s>0$.
We say that a positive Borel measure $\mu$ on $\B$ is a (bounded) $s$-Carleson measure 
in $\mathbb{B}_n$ if
\begin{equation}\label{cm}
\sup_{\zeta\in\mathbb{S}_n,r>0}\frac{\mu(S(Q(\zeta,r)))}{\left(\sigma(Q(\zeta,r))\right)^s}<\infty,
\end{equation}
and $\mu$ is a vanishing $s$-Carleson measure in $\mathbb{B}_n$ if
\begin{equation}\label{vcm}
\lim_{r\to0}\frac{\mu\left(S(Q(\zeta,r))\right)}{\left(\sigma(Q(\zeta,r))\right)^s}=0
\end{equation}
uniformly in $\zeta$.
When $s=1$, we will simply call these measures
the Carleson measure and the vanishing Carleson measure, respectively.

For $0<r<1$ and $\zeta\in \mathbb{S}_n$, we can easily get that
\begin{equation}\label{ect}
S(Q(\zeta,r/2))\subseteq B_r(\zeta)\subseteq S(Q(\zeta,2r)),
\end{equation}
and
\begin{equation}\label{ett}
S(Q(\zeta,r))\subseteq \Lambda(Q(\zeta,3r))\subseteq S(Q(\zeta,6r)).
\end{equation}
Thus, for $a\in\B$, by (\ref{qr}), (\ref{ect}) and (\ref{ett}) we have
\begin{eqnarray}\label{evctt}
V(B_a)&\asymp& V(\Lambda(a))\asymp V(S(a))\\
&\asymp&\int_{Q_a}\int_{(1+|a|)/2}^12nr^{2n-1}drd\sigma(\zeta)\asymp (1-|a|)^{n+1}.\nonumber
\end{eqnarray}
and in the above definitions of $s$-Carleson measures
we may replace $S(Q(\zeta,r))$ by $\Lambda(Q(\zeta,r))$ or $B_r(\zeta)$, and replace $\sigma(Q)$ by $r^{2n}$.
We may also rewrite  them in the following forms:
$\mu$ is an $s$-Carleson measure in $\mathbb{B}_n$ if
\begin{equation}\label{cm1}
\sup_{a\in\mathbb{S}_n}\frac{\mu(S(a))}{(1-|a|)^{ns}}
<\infty,
\end{equation}
and $\mu$ is a vanishing $s$-Carleson measure in $\mathbb{B}_n$ if
\begin{equation}\label{vcm1}
\lim_{|a|\to1}\frac{\mu(S(a))}{(1-|a|)^{ns}}=0,
\end{equation}
and we may replace $S(a)$ in the above definitions by $\Lambda(a)$ or $B_a$.
It is also well-known that, if $s>1$, for a fixed $r\in(0,1)$, 
we may replace $S(a)$ by $D(a,r)$ in (\ref{cm1}) and (\ref{vcm1}).  

We need some well-know characterizations of Carleson measures and  vanishing Carleson measures.
If $\mu$ is a Carleosn measure on $\B$, we denote by
\begin{equation}\label{c-norm}
\|\mu\|=\sup_{a\in\mathbb{S}_n}\frac{\mu(S(a))}{(1-|a|)^{n}}.
\end{equation}
The following result for the unit disk is due to Carleson, and for the unit ball can be found
in Theorem 5.4 in \cite{KZ2005}.

\begin{lem}\label{carleson} Let $\mu$ be a positive Borel measure on $\B$,
finite on compact subsets of $\B$.
Then $\mu$ is a Carelson measure on $\B$ if and only if
$$
\sup_{a\in\B}\int_{\B}\frac{(1-|a|)^n}{|1-\langle a,z\rangle|^{2n}}\,dV(z)<\infty.
$$
Moverover, 
$$
\|\mu\|\asymp\sup_{a\in\B}\int_{\B}\frac{(1-|a|)^n}{|1-\langle a,z\rangle|^{2n}}\,dV(z).
$$
\end{lem}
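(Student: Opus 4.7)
My plan is to prove the two inequalities separately: first $\|\mu\|\lesssim \sup_{a\in\B}\int_{\B}\frac{(1-|a|)^n}{|1-\langle a,z\rangle|^{2n}}\,d\mu(z)$, then its reverse. I interpret the stated $dV(z)$ as $d\mu(z)$, which is what makes the statement nontrivial. Throughout, I use the elementary bound $|1-\langle a,z\rangle|\ge 1-|a|$ for $a,z\in\B$ (a consequence of $\mathrm{Re}\langle a,z\rangle\le |a|$) together with the non-isotropic geometry of $\B$ encoded in $d$.

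For the necessity direction, I would show that the kernel $(1-|a|)^n/|1-\langle a,z\rangle|^{2n}$ is bounded below by a constant multiple of $(1-|a|)^{-n}$ on the Carleson tube $S(a)$. Using the defining inequalities $1-|z|<(1-|a|)/2$ and $|1-\langle a/|a|,z/|z|\rangle|<1-|a|$ for $z\in S(a)$, together with the decomposition $1-\langle a,z\rangle=(1-|a||z|)+|a||z|(1-\langle a/|a|,z/|z|\rangle)$, one quickly obtains $|1-\langle a,z\rangle|\le \tfrac{5}{2}(1-|a|)$. Restricting the integral to $S(a)$ then produces a lower bound of $\mu(S(a))/(1-|a|)^n$ up to a constant, which is exactly what is needed.

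For the sufficiency direction, I would perform a dyadic decomposition in the size of $|1-\langle a,z\rangle|$. Fix $a\in\B$ with $|a|\ge 1/2$ (the case of small $|a|$ is immediate since $\mu(\B)<\infty$ follows from Carleson applied to a suitable tube). Setting $r_k=2^k(1-|a|)$, define
\[
E_k=\{z\in\B: r_k\le|1-\langle a,z\rangle|<r_{k+1}\},\qquad 0\le k\le K,
\]
where $K$ is the largest integer with $r_K\le 2$. These shells partition $\B$ thanks to the lower bound noted above, and on $E_k$ the kernel is pointwise at most $2^{-2nk}(1-|a|)^{-n}$. The crucial geometric step is the inclusion $E_k\subset B_{c\sqrt{r_{k+1}}}(a/|a|)$, which follows from $|1-\langle a/|a|,z\rangle|\le ((1-|a|)+|1-\langle a,z\rangle|)/|a|\le 4r_{k+1}$. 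Combined with the Carleson hypothesis, rewritten in terms of the tubes $B_r(\zeta)$ as noted after (\ref{evctt}), this yields $\mu(E_k)\lesssim \|\mu\|\,r_{k+1}^n\asymp \|\mu\|\,2^{nk}(1-|a|)^n$. Each piece therefore contributes at most $\|\mu\|\,2^{-nk}$, and summing the geometric series gives the desired bound.

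The main obstacle is the geometric inclusion $E_k\subset B_{c\sqrt{r_{k+1}}}(a/|a|)$ together with the identification of the measure of this non-isotropic tube with the Carleson norm $\|\mu\|$; once these are secured via the triangle inequality (\ref{trig}) and the equivalences (\ref{evctt}), the remainder of the argument reduces to summing a routine geometric series.
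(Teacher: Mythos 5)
Your proof is correct, including your reading of the misprinted $dV(z)$ as $d\mu(z)$ (which is how the lemma is actually used, e.g.\ in part (ii) of Lemma~\ref{carleson-v}). The paper itself gives no proof of this lemma --- it simply cites Theorem 5.4 of \cite{KZ2005} --- and your argument (lower-bounding the kernel by $c(1-|a|)^{-n}$ on the tube $S(a)$ for necessity; dyadic shells $E_k$ in $|1-\langle a,z\rangle|$ with the inclusion $E_k\subset B_{2\sqrt{r_{k+1}}}(a/|a|)$ and a geometric series for sufficiency) is exactly the standard proof found in that reference, with all the estimates ($|1-\langle a,z\rangle|\le\tfrac52(1-|a|)$ on $S(a)$, $\mu(E_k)\lesssim\|\mu\|\,2^{nk}(1-|a|)^n$) checking out.
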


\begin{lem}\label{carleson-v} Let $\mu$ be a positive Borel measure on $\B$.
Let $0<R<1$, and let $\mu_R=\mu|_{\B\setminus\overline{B(0,R)}}$.
Suppose that $\mu$ is a Carleson measure on $\B$. Then the following statements are 
equivalent.
\begin{itemize}
\item[(i)] $\mu$ is a vanishing Carelson measure on $\B$.
\item[(ii)] 
$$
\lim_{|a|\to1}\int_{\B}\frac{(1-|a|)^n}{|1-\langle a,z\rangle|^{2n}}\,d\mu(z)=0.
$$
\item[(iii)] 
$$
\lim_{R\to1}\|\mu_R\|=0.
$$
\end{itemize}
\end{lem}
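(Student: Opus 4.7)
The plan is to establish the chain of implications (i) $\Rightarrow$ (iii) $\Rightarrow$ (ii) $\Rightarrow$ (i), which closes the equivalence. A preliminary observation is that the Carleson hypothesis already forces $\mu$ to be a finite measure on $\B$: taking $a=0$ in (\ref{cm1}), for which $S(a)=\{z\in\B:1/2<|z|<1\}$, bounds the $\mu$-mass of this annulus by $\|\mu\|$, while $\mu$ is finite on the compact set $\overline{B(0,1/2)}$ by hypothesis. Consequently $\mu(\B)<\infty$ and $\mu_R(\B)\to 0$ as $R\to 1$ by continuity of measure from above.

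For (i) $\Rightarrow$ (iii), fix $\varepsilon>0$ and use (\ref{vcm1}) to pick $R_0\in(0,1)$ with $\mu(S(a))/(1-|a|)^n<\varepsilon$ whenever $|a|>R_0$. The supremum defining $\|\mu_R\|$ then splits according to whether $|a|>R_0$ or $|a|\le R_0$. In the first regime, $\mu_R(S(a))\le\mu(S(a))$ keeps the quotient below $\varepsilon$. In the second, $\mu_R(S(a))\le\mu_R(\B)$ combined with $(1-|a|)^n\ge(1-R_0)^n$ gives a bound that is controlled by the preliminary observation once $R$ is chosen close enough to $1$ that $\mu_R(\B)/(1-R_0)^n<\varepsilon$.

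For (iii) $\Rightarrow$ (ii), set $\psi_a(z)=(1-|a|)^n/|1-\langle a,z\rangle|^{2n}$ and write
$$
\int_\B\psi_a(z)\,d\mu(z)=\int_{B(0,R)}\psi_a\,d\mu+\int_\B\psi_a\,d\mu_R.
$$
On $B(0,R)$ the trivial bound $|1-\langle a,z\rangle|\ge 1-R$ gives $\psi_a\le(1-|a|)^n(1-R)^{-2n}$, which vanishes as $|a|\to 1$ uniformly on $B(0,R)$; the second piece is bounded by $C\|\mu_R\|$ by Lemma \ref{carleson} applied to $\mu_R$. A standard $\varepsilon/2$-argument (first pick $R$, then send $|a|\to 1$) yields (ii). For (ii) $\Rightarrow$ (i), the usual triangle-inequality estimate, using $1-|z|<(1-|a|)/2$ and $z/|z|\in Q_a$ for $z\in S(a)$, gives $|1-\langle a,z\rangle|\lesssim 1-|a|$ on $S(a)$, hence $\psi_a(z)\gtrsim(1-|a|)^{-n}$ there and
$$
\frac{\mu(S(a))}{(1-|a|)^n}\lesssim\int_\B\psi_a(z)\,d\mu(z).
$$
Letting $|a|\to 1$ and invoking (ii) produces (i).

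The main technical point is the bookkeeping in (i) $\Rightarrow$ (iii): the vanishing Carleson assumption only directly controls $S(a)$ for $a$ near the boundary, so the handling of interior $a$ relies crucially on the (non-obvious but cheap) finiteness $\mu(\B)<\infty$ extracted from $\|\mu\|<\infty$. Everything else is a matter of splitting integrals or suprema and invoking Lemma \ref{carleson}.
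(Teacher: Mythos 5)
Your argument is correct, but it takes a genuinely more self-contained route than the paper. The paper proves only one implication directly: it cites the proof of Theorem 5.10 in \cite{KZ2005} for (i)$\Leftrightarrow$(ii) and page 130 of \cite{CM1995} for (i)$\Rightarrow$(iii), and then establishes (iii)$\Rightarrow$(ii) by writing $\mu=\mu_R+\mu|_{\overline{B(0,R)}}$, bounding the $\mu_R$-term by $\|\mu_R\|$ via Lemma~\ref{carleson}, and disposing of the compact part by noting that $\mu|_{\overline{B(0,R)}}$ is a vanishing Carleson measure and invoking the already-cited implication (i)$\Rightarrow$(ii) for that truncated measure. You instead close the full cycle (i)$\Rightarrow$(iii)$\Rightarrow$(ii)$\Rightarrow$(i) with elementary estimates: your (i)$\Rightarrow$(iii) replaces the \cite{CM1995} citation by a direct splitting of the supremum, whose crux --- which you correctly isolate --- is that the Carleson hypothesis together with finiteness on compact sets forces $\mu(\B)<\infty$, so that $\mu_R(\B)\to0$ controls the quotient for $|a|\le R_0$; your (iii)$\Rightarrow$(ii) uses the same decomposition as the paper but handles the compact piece by the trivial uniform bound $(1-|a|)^n/|1-\langle a,z\rangle|^{2n}\le (1-|a|)^n(1-R)^{-2n}$ on $B(0,R)$, which avoids the paper's somewhat roundabout appeal to (i)$\Rightarrow$(ii) for $\mu|_{\overline{B(0,R)}}$; and your (ii)$\Rightarrow$(i) reproduces, via the triangle inequality (\ref{trig}), the standard kernel lower bound $(1-|a|)^n/|1-\langle a,z\rangle|^{2n}\gtrsim(1-|a|)^{-n}$ for $z\in S(a)$, which is exactly the content of the Zhu reference, and lands on the form (\ref{vcm1}) of the vanishing condition, which the paper has already declared equivalent to (\ref{vcm}). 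What each approach buys: the paper's proof is shorter because two of the three implications are outsourced to the literature, while yours is fully self-contained and logically cleaner in that no implication leans on another. One cosmetic caveat: your evaluation at $a=0$ in (\ref{cm1}), giving $\mu(\{1/2<|z|<1\})\le\|\mu\|$, tacitly reads the suprema in (\ref{cm1}) and (\ref{c-norm}) as taken over $a\in\B$; the printed ``$a\in\mathbb{S}_n$'' there is evidently a typo, so this is the intended reading, and in any case a finite covering of $\mathbb{S}_n$ by $d$-balls would yield the same finiteness of $\mu$.
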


\begin{proof} The proof for the equivalence between (i) and (ii) can be found, for example, 
the proof of Theorem 5.10 in \cite{KZ2005}.
The proof for (i)$\Rightarrow$(iii) can be found in page 130 in \cite{CM1995}.
The proof for (iii)$\Rightarrow$(ii) for the case of the unit disk was given in \cite{LL2015}.
For the unit ball the proof is similar. We give it here for completion.
Thus 
\begin{eqnarray*}
\int_{\B}\frac{(1-|a|)^n}{|1-\langle a,z\rangle|^{2n}}\,d\mu(z)
&=&\int_{\B}\frac{(1-|a|)^n}{|1-\langle a,z\rangle|^{2n}}\,d\mu_R(z)
 +\int_{\B}\frac{(1-|a|)^n}{|1-\langle a,z\rangle|^{2n}}\,d(\mu-\mu_R)(z)\\
&\lesssim& \left\|\mu_R\right\|+\int_{\B}\frac{(1-|a|)^n}{|1-\langle a,z\rangle|^{2n}}\,d(\mu-\mu_R)(z)
\end{eqnarray*}
Since $\mu$ is a Carleson measure, 
we can easily see that $\mu-\mu|_R=\mu|_{\overline{B(0,R)}}$ is a vanishing Carleson measure.
Letting $|a|\to1$ and then $R\to 1$ in the above inequality, we get that
$$
\lim_{|a|\to1}
\int_{\B}\frac{(1-|a|)^n}{|1-\langle a,z\rangle|^{2n}}\,d\mu(z)=0.
$$
Thus $\mu$ is a vanishing Carlesn measure. The proof is complete.
\end{proof}

We will use the classical Khinchine's inequality. 
For all $k \in \mathbb N = \{1,2, \ldots\}$ we denote
by $r_k :[0,1] \to \{0,\pm 1\}$, $r_k(t) = {\rm sign} \, \sin (2^k \pi t )$, the $k$th Rademacher function. 

\medskip

\noindent \emph{Khinchine's inequality}: Let $0<p<\infty$. Then, 
\begin{align}\label{001}
\left(\sum_{k}|c_k|^{2}\right)^{p/2} 
\asymp \int_{0}^{1}\left|\sum_{k} c_{k} r_{k}(t)\right|^{p}\,dt,
\end{align}
where $\{c_k\}_{k=1}^\infty$ is an arbitrary sequence of complex scalars. 
See Appendix A in \cite{PD2000} for more details.
\medskip

Next, we recall Kahane's inequality. The details can be found in \cite{HL1993}. 

\medskip

\noindent
\emph{Kahane's inequality}: Let $X$ be a quasi-Banach space and let $0<p, q<\infty$. 
There holds 
\begin{align}\label{002}
\left(\int_{0}^{1}\left\|\sum_{k} r_{k}(t) x_{k}\right\|_{X}^{q} 
dt \right)^{1 / q} \asymp\left(\int_{0}^{1}\left\|\sum_{k} r_{k}(t) x_{k}
\right\|_{X}^{p} d t\right)^{1 / p},
\end{align}
where $\left\{x_{k}\right\}_{k=1}^\infty$ is an arbitrary sequence in $X$. 
Moreover, the implicit constants in \eqref{002}  depend only on $p$ and $q$ and not on the quasi-Banach space $X$.

We need several lemmas. 
Recall that an admissible maximal function for a continuous function $f$ on $\mathbb{B}_n$ is defined by
\[
f^*(\zeta)=\sup_{z\in \Gamma(\zeta)}|f(z)|.
\]
The following result can be found in \cite[Theorem 5.6.5]{WR1980}.

\begin{lem}\label{l2.41}
Let $0<p<\infty$ and $f\in H(\mathbb{B}_n)$. Then
\[
\|f^*\|_{L^p(\mathbb{S}_n)}\leq C \|f\|_{H^p}.
\]
\end{lem}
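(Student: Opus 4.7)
The plan is to reduce this classical estimate to the $L^{p/q}$-boundedness of the Hardy--Littlewood maximal operator on the sphere, after first dominating $|f|$ by the Poisson integral of its boundary values. First, I would recall that a function $f\in H^p$ admits admissible boundary values $\tilde f\in L^p(\mathbb{S}_n)$ with $\|\tilde f\|_{L^p(\mathbb{S}_n)}=\|f\|_{H^p}$. Since $|f|^q$ is plurisubharmonic on $\B$ for every $q>0$, the standard subharmonic majorization yields
\[
|f(z)|^q\le\int_{\mathbb{S}_n}\mathcal{P}(z,\xi)\,|\tilde f(\xi)|^q\,d\sigma(\xi),\qquad z\in\B,
\]
where $\mathcal{P}(z,\xi)\asymp (1-|z|^2)^n/|1-\langle z,\xi\rangle|^{2n}$ is the invariant Poisson kernel. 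Here I would fix some $0<q<p$; this is the key flexibility that removes any $p\ge 1$ restriction at the end of the argument.

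Next, for $\zeta\in\mathbb{S}_n$ and $z\in\Gamma_\alpha(\zeta)$, I would perform the standard dyadic decomposition of $\mathbb{S}_n$ into annuli $Q(\zeta,2^{j+1}\sqrt{1-|z|})\setminus Q(\zeta,2^{j}\sqrt{1-|z|})$. On each annulus, the triangle inequality (\ref{trig}) for $d(z,\xi)=|1-\langle z,\xi\rangle|^{1/2}$ combined with $z\in\Gamma_\alpha(\zeta)$ forces $|1-\langle z,\xi\rangle|\gtrsim 2^{2j}(1-|z|)$, while $\sigma(Q(\zeta,2^{j+1}\sqrt{1-|z|}))\asymp 2^{2jn}(1-|z|)^n$ by (\ref{qr}). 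Summing the resulting geometric series yields the pointwise bound
\[
\int_{\mathbb{S}_n}\mathcal{P}(z,\xi)\,|\tilde f(\xi)|^q\,d\sigma(\xi)\lesssim M|\tilde f|^q(\zeta),
\]
where $M$ denotes the Hardy--Littlewood maximal operator on $\mathbb{S}_n$ associated with the non-isotropic metric $d$. Taking the supremum over $z\in\Gamma_\alpha(\zeta)$ then gives $f^*(\zeta)^q\lesssim M|\tilde f|^q(\zeta)$ for all $\zeta\in\mathbb{S}_n$.

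Since $p/q>1$ and $(\mathbb{S}_n,d,\sigma)$ is a doubling metric measure space, a standard Vitali covering argument shows that $M$ is bounded on $L^{p/q}(\mathbb{S}_n)$, so
\[
\|f^*\|_{L^p(\mathbb{S}_n)}^p=\bigl\|(f^*)^q\bigr\|_{L^{p/q}(\mathbb{S}_n)}^{p/q}\lesssim\bigl\|M|\tilde f|^q\bigr\|_{L^{p/q}(\mathbb{S}_n)}^{p/q}\lesssim\bigl\||\tilde f|^q\bigr\|_{L^{p/q}(\mathbb{S}_n)}^{p/q}=\|\tilde f\|_{L^p(\mathbb{S}_n)}^p=\|f\|_{H^p}^p.
\]
The main obstacle is the geometric Poisson-kernel estimate in the second step: one must exploit the non-isotropic ball structure carefully so that the condition $z\in\Gamma_\alpha(\zeta)$ translates into a clean, uniform lower bound for $|1-\langle z,\xi\rangle|$ on each dyadic annulus centered at $\zeta$. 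Once that geometric input is in place, the rest of the argument is essentially formal, and the trick of choosing $0<q<p$ freely is what makes the inequality valid in the full range $0<p<\infty$ rather than only for $p>1$.
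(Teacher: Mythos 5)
Your proof is correct and is essentially the paper's own route: the paper gives no argument here but simply cites \cite[Theorem 5.6.5]{WR1980}, and your combination of plurisubharmonic majorization $|f|^q\le P[|\tilde f|^q]$ with a free exponent $0<q<p$, the dyadic non-isotropic annulus estimate $\sup_{z\in\Gamma_\alpha(\zeta)}\int_{\mathbb{S}_n}\mathcal{P}(z,\xi)\,g(\xi)\,d\sigma(\xi)\lesssim Mg(\zeta)$, and $L^{p/q}$-boundedness of the non-isotropic Hardy--Littlewood maximal operator is precisely the classical Kor\'anyi--Rudin proof of that cited theorem. One small logical caveat: the existence of admissible boundary values $\tilde f$ with $\|\tilde f\|_{L^p(\mathbb{S}_n)}=\|f\|_{H^p}$ is classically derived \emph{from} this very maximal inequality, so to keep your argument non-circular you should either obtain $\tilde f$ via slice functions or, more simply, replace $|\tilde f|^q\,d\sigma$ by a weak-$*$ limit $g\,d\sigma$ of the measures $|f_r|^q\,d\sigma$ as $r\to1$ (such a limit is automatically absolutely continuous with $\|g\|_{L^{p/q}(\mathbb{S}_n)}\le\|f\|_{H^p}^q$ because $p/q>1$), after which every step of your proof runs verbatim with $g$ in place of $|\tilde f|^q$.
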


The next result may be known for experts. Its proof for the case $n=1$
can be found in \cite[Lemma 2.1]{HL1985}. However, we could not find the
exact reference for the case $n>1$. 
Since it is one of the key lemmas in our paper, we provide a proof here for completion.

\begin{lem}\label{dmk}
Let $k \geq m$ be non-negative integers, $0<p<\infty$ and $0<r<1$. Then for $f\in H(\B)$ and $w\in \B$, we have 
\begin{equation}\label{rf}
|\mathcal{R}^kf(w)|^p\lesssim (1-|w|)^{(m-k)p}\int_{D(w,r)}|\mathcal{R}^mf(z)|^p\,d\tau(z),
\end{equation}
where the inequality constant depends on $r$.
\end{lem}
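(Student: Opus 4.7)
The plan is to reduce everything to the plurisubharmonic sub-mean-value inequality after absorbing the extra orders of differentiation through a Cauchy bound. A direct check on homogeneous expansions shows that $\mathcal{R}^k f=\mathcal{R}^{k-m}(\mathcal{R}^m f)$: writing $f=\sum_j f_j$, one has $\mathcal{R}^{k-m}(\mathcal{R}^m f)=\sum_j j^{k-m}\cdot j^m f_j=\sum_j j^k f_j=\mathcal{R}^k f$. Setting $g=\mathcal{R}^m f$, it therefore suffices to prove
\[
|\mathcal{R}^{k-m} g(w)|^p\lesssim (1-|w|)^{(m-k)p}\int_{D(w,r)}|g(z)|^p\,d\tau(z).
\]

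First I would rewrite the radial derivative in terms of Euclidean partials: $\mathcal{R} h(z)=\sum_{i=1}^n z_i\,\partial_{z_i} h(z)$, and iterating one sees that $\mathcal{R}^{k-m} g(w)$ is a finite linear combination of terms of the form $w^\beta\,\partial^\alpha g(w)$ with $|\alpha|\le k-m$. Since $|w^\beta|\le 1$ on $\B$, this gives
\[
|\mathcal{R}^{k-m} g(w)|\lesssim \sum_{|\alpha|\le k-m}|\partial^\alpha g(w)|.
\]

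Next I would apply the Cauchy integral formula on a Euclidean polydisk $P(w,c(1-|w|))$, where $c=c(r)>0$ is chosen (after possibly rotating coordinates to align $w$ with the first axis) so that this polydisk is contained in $D(w,r/2)$; such a $c$ exists because the Bergman ball has Euclidean ``thickness'' of order at least $1-|w|$ in every direction. Cauchy's estimate then yields
\[
|\partial^\alpha g(w)|\lesssim (1-|w|)^{-|\alpha|}\sup_{D(w,r/2)}|g|,
\]
and since $1-|w|<1$ the dominant term among $|\alpha|\le k-m$ is $|\alpha|=k-m$, so
\[
|\mathcal{R}^{k-m} g(w)|^p\lesssim (1-|w|)^{-(k-m)p}\sup_{u\in D(w,r/2)}|g(u)|^p.
\]

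Finally I would eliminate the supremum by plurisubharmonicity of $|g|^p$. For each $u\in D(w,r/2)$ the Bergman ball $D(u,r/2)$ is contained in $D(w,r)$, and $V(D(u,r/2))\asymp(1-|w|)^{n+1}$ with constants depending only on $r$; the sub-mean-value inequality therefore gives
\[
|g(u)|^p\lesssim \frac{1}{V(D(u,r/2))}\int_{D(u,r/2)}|g|^p\,dV\lesssim \int_{D(w,r)}|g(z)|^p\,d\tau(z)
\]
uniformly in $u\in D(w,r/2)$, and combining with the previous display completes the proof. The main delicate point is verifying the polydisk-in-Bergman-ball containment with the sharp power $(1-|w|)$, which is the standard anisotropy fact for the unit ball; everything else is routine bookkeeping in $p$ and $k-m$.
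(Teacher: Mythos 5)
Your argument is correct, and it takes a genuinely different route from the paper's. The paper makes the same initial reduction (it suffices to bound $\mathcal{R}^{\ell}g$ for $g=\mathcal{R}^m f$ and $\ell=k-m$, i.e.\ the case $m=0$), but then proceeds by induction on the order of the derivative: the base case $k=1$ is obtained from the gradient estimate $|\nabla h(0)|^p\lesssim\int_{D(0,r)}|h|^p\,dV$ by composing with the automorphism $\varphi_a$ and changing variables (via Lemmas 2.14, 2.20 and 2.24 in \cite{KZ2005}), and the inductive step uses Fubini's theorem together with the doubling property $D(w,r)\subset D(a,2r)$. You instead absorb all $k-m$ derivatives in one stroke through Cauchy estimates on a polydisk of polyradius $\asymp(1-|w|)$ inscribed in $D(w,r/2)$. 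That inscription is legitimate: if $|z-w|\le c(1-|w|)$, then $1-|\varphi_w(z)|^2=\frac{(1-|z|^2)(1-|w|^2)}{|1-\langle z,w\rangle|^2}$ is close to $1$ for $c=c(r)$ small, so the Euclidean ball $B(w,c(1-|w|))$ — and hence a polydisk of polyradius $c(1-|w|)/\sqrt{n}$, with no rotation of coordinates needed — lies in $D(w,r/2)$. Your route trades the paper's induction and Möbius-composition step for a single anisotropy estimate plus Cauchy's formula, which is arguably more economical and makes the dependence on $k-m$ completely transparent.

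One point deserves more care than your phrase ``the sub-mean-value inequality'' suggests. For $u\in D(w,r/2)$, the bound $|g(u)|^p\lesssim V(D(u,r/2))^{-1}\int_{D(u,r/2)}|g|^p\,dV$ with $V(D(u,r/2))\asymp(1-|w|)^{n+1}$ is \emph{not} the plain plurisubharmonic sub-mean-value inequality over an inscribed Euclidean ball: an inscribed ball of radius $\asymp(1-|w|)$ has volume $\asymp(1-|w|)^{2n}$, which for $n\ge2$ would yield the wrong (larger) power in the denominator and destroy the exponent $(m-k)p$. What you need is the Bergman-ball version — precisely Lemma 2.24 in \cite{KZ2005}, the same lemma underpinning the paper's base case — which follows either by composing with $\varphi_u$ and changing variables, or by noting that $D(u,r/2)$ contains a complex ellipsoid centered at $u$ with radial semi-axis $\asymp(1-|u|)$ and complex-tangential semi-axes $\asymp(1-|u|)^{1/2}$, over which the sub-mean-value inequality holds after an affine change of variables. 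So your earlier heuristic that the Bergman ball has Euclidean thickness $\gtrsim 1-|w|$ ``in every direction'' is the right fact for the Cauchy step, but by itself it is insufficient for the volume normalization in your last step; with the correct lemma invoked there, the proof is complete.
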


\begin{proof} Clearly, we just need to prove (\ref{rf}) for the case $m=0$.
The case $k=0$ is a direct consequence of Lemma 2.24 and Lemma 2.20 in \cite{KZ2005}.
Now we consider the case $k=1$. 
Let $0<p<\infty$, and let $0<r<1$.
By Lemma 2.4 in \cite{KZ2005} we can easily obtain that, for any $f\in H(\B)$,
$$
|\nabla f(0)|^p\lesssim \int_{D(0,r)}|f(z)|^p\,dV(z)
$$
(see page 66 in \cite{KZ2005} for details).
Replacing $f$ by $f\circ \varphi_a$, where $\varphi_a$ is the automorphism of $\B$
that interchanges $0$ and $a$ (see, for example, p. 5 in \cite{KZ2005}), 
and by a change of variable, we get
$$
|\nabla f\circ\varphi_a(0)|^p
\lesssim \int_{D(0,r)}|f\circ\varphi_a(z)|^p\,dV(z)
=\int_{D(a,r)}|f(w)|^p\left(\frac{1-|a|^2}{|1-\langle w,a\rangle|^2}\right)^{n+1}\,dV(w)
$$
By Lemma 2.14 and Lemma 2.20 in \cite{KZ2005},  we get
\begin{eqnarray*}
(1-|a|)^p|\mathcal{R}f(a)|^p
\lesssim |\nabla f\circ\varphi_a(0)|^p
\lesssim\int_{D(a,r)}|f(w)|^p\,d\tau(w),
\end{eqnarray*}
which proved (\ref{rf}) for $k=1$.
For $k>1$ we use induction. Suppose (\ref{rf}) holds for $k\ge 1$, we prove it for $k+1$.
By the case $k=1$ and the induction assumption, we get
\begin{eqnarray*}
|\mathcal{R}^{k+1}f(a)|^p
&=&|\mathcal{R}(\mathcal{R}^{k}f)(a)|^p\\
&\lesssim&(1-|a|)^{-p}\int_{D(a,r)}|\mathcal{R}^kf(w)|^p\,d\tau(w)\\
&\lesssim&(1-|a|)^{-p}\int_{D(a,r)}(1-|w|)^{-kp}\int_{D(w,r)}|f(z)|^p\,d\tau(z)\,d\tau(w)
\end{eqnarray*}
By Lemma 2.20 in \cite{KZ2005} and Fubini's theorem, we obtain
\begin{eqnarray*}
|\mathcal{R}^{k+1}f(a)|^p
&\lesssim&(1-|a|)^{-(k+1)p}\int_{D(a,2r)}|f(z)|^p\int_{D(z,r)}\,d\tau(w)\,d\tau(z)\\
&\asymp&(1-|a|)^{-(k+1)p}\int_{D(a,2r)}|f(z)|^p\,d\tau(z).
\end{eqnarray*}
The induction proof is complete.
\end{proof}

The following result can be found in \cite[Lemma 2.3]{ZW2006}.

\begin{lem}\label{l51}
Suppose $\sigma>0,r>0$ and $\zeta \in \mathbb S_n$. Then there exists a $\sigma_r>0$ such that
\[
D(z,r)\subset \Gamma_{\sigma_r}(\zeta),\qquad\forall z\in \Gamma_\sigma(u).
\]
\end{lem}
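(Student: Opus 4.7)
The plan is to combine the quasi-triangle inequality (2.1) for $d(z,w)=|1-\langle z,w\rangle|^{1/2}$ with the standard comparability estimates on the Bergman metric ball $D(z,r)$. Intuitively, each factor $\sqrt{1-|z|}$ appearing in $d(z,\zeta)<\sqrt{\sigma}\sqrt{1-|z|}$ should be perturbed only by an $r$-dependent multiplicative constant when one replaces $z$ by a neighboring point $w\in D(z,r)$.

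First I would fix $z\in \Gamma_{\sigma}(\zeta)$ and an arbitrary $w\in D(z,r)$, and apply (2.1) to the three points $w,z,\zeta$:
$$d(w,\zeta)\le d(w,z)+d(z,\zeta).$$
Since $z\in\Gamma_{\sigma}(\zeta)$, the second term satisfies $d(z,\zeta)<\sqrt{\sigma}\sqrt{1-|z|}$. For the first term I would invoke the well-known Bergman-metric facts (Lemmas 2.20 and 2.24 in \cite{KZ2005}, which are already used in Lemma 2.5 above): for $w\in D(z,r)$ one has $|1-\langle w,z\rangle|\lesssim 1-|z|^2$ and $1-|w|\asymp 1-|z|$, where all implicit constants depend only on $r$. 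Hence $d(w,z)\lesssim \sqrt{1-|z|}$.

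Squaring the triangle inequality and then invoking $1-|z|\le C_r(1-|w|)$ from the second comparability, I would obtain
$$|1-\langle w,\zeta\rangle|\le \bigl(d(w,z)+d(z,\zeta)\bigr)^{2}\lesssim (1+\sqrt{\sigma})^{2}(1-|z|)\lesssim 1-|w|,$$
with constants depending only on $\sigma$ and $r$. Taking $\sigma_r$ to be this combined constant gives $w\in \Gamma_{\sigma_r}(\zeta)$, which is the desired inclusion. Since $w$ was arbitrary in $D(z,r)$, we conclude $D(z,r)\subset \Gamma_{\sigma_r}(\zeta)$.

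The only real obstacle is constant-tracking: one must verify that the two standard comparabilities on $D(z,r)$ (namely the estimate on $|1-\langle w,z\rangle|$ and the two-sided equivalence of $1-|w|$ with $1-|z|$) can be combined into a single aperture $\sigma_r$ that is uniform over all $z\in\Gamma_{\sigma}(\zeta)$ and all $\zeta\in\mathbb{S}_n$. No delicate analysis is needed; the lemma is essentially a triangle-inequality computation paired with two facts from the geometry of the Bergman metric.
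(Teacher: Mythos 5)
Your argument is correct, and in fact it supplies something the paper does not contain: the paper states this lemma with only a citation to \cite[Lemma 2.3]{ZW2006} and gives no proof of its own, so there is no in-paper argument to compare against. Your route is the standard (and surely intended) one: for $w\in D(z,r)$ apply the triangle inequality \eqref{trig} for $d(\cdot,\cdot)$ to $w,z,\zeta$, bound $d(z,\zeta)<\sqrt{\sigma}\sqrt{1-|z|}$ from $z\in\Gamma_\sigma(\zeta)$, bound $d(w,z)\le \sqrt{C_r}\sqrt{1-|z|}$ via Lemma 2.20 of \cite{KZ2005} (which gives $|1-\langle w,z\rangle|\asymp 1-|z|\asymp 1-|w|$ on $D(z,r)$ with constants depending only on $r$), square, and convert $1-|z|\lesssim 1-|w|$ to land in $\Gamma_{\sigma_r}(\zeta)$ with $\sigma_r=\bigl(\sqrt{C_r}+\sqrt{\sigma}\bigr)^2C_r'$. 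The uniformity you flag is indeed unproblematic, since the constants in Lemma 2.20 of \cite{KZ2005} depend only on $r$ and the resulting $\sigma_r$ depends only on $\sigma$ and $r$, not on $z$, $w$, or $\zeta$. Two cosmetic remarks: \eqref{trig} is a genuine triangle inequality for $d$, not merely a quasi-triangle inequality, so no constant is lost there; and since apertures are required to satisfy $\alpha>1$ in \eqref{Gamma}, you should (harmlessly) replace $\sigma_r$ by $\max(\sigma_r,2)$, say. You also silently corrected the typo $\Gamma_\sigma(u)$ to $\Gamma_\sigma(\zeta)$ in the statement, which is the right reading.
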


The next result is from \cite[Lemma 1]{PD2000}.

\begin{lem}\label{duren}
Let $0<p<\infty$. If
$\{\varphi_i\}_{n=0}^\infty\subset L^p(\mathbb{S}_n)$ and $\varphi\in L^p(\mathbb{S}_n)$ satisfy
$\displaystyle\lim_{i\rightarrow \infty}\|\varphi_i\|_{L^p(\mathbb{S}_n)}=\|\varphi\|_{L^p(\mathbb{S}_n)}$ and
$\displaystyle\lim_{i\to\infty}\varphi_i(\zeta)=\varphi(\zeta)$ a.e. on $\mathbb{S}_n$,
then $\displaystyle\lim_{i\to\infty}\|\varphi_i-\varphi\|_{L^p(\mathbb{S}_n)}=0$.
\end{lem}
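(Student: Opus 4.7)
The plan is to prove this by a Fatou's lemma argument applied to a cleverly chosen nonnegative auxiliary sequence, exploiting the pointwise a.e.\ convergence together with the norm convergence. This is the classical Riesz--Scheff\'e trick, and it works uniformly for all $0<p<\infty$.

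First I would fix a constant $C_p=\max(1,2^{p-1})$ so that the elementary inequality
\[
|x-y|^p \leq C_p(|x|^p+|y|^p)
\]
holds for all $x,y\in\C$. For $0<p\leq 1$ this is just subadditivity of $t\mapsto t^p$; for $p\geq 1$ it follows from convexity of $t\mapsto t^p$. Then I would define
\[
h_i(\zeta) := C_p\bigl(|\varphi_i(\zeta)|^p + |\varphi(\zeta)|^p\bigr) - |\varphi_i(\zeta)-\varphi(\zeta)|^p,
\]
which is nonnegative by construction and belongs to $L^1(\mathbb{S}_n)$.

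The key observation is that, from the assumption $\varphi_i(\zeta)\to \varphi(\zeta)$ a.e., one has $h_i(\zeta)\to 2C_p|\varphi(\zeta)|^p$ pointwise a.e.\ on $\mathbb{S}_n$. Applying Fatou's lemma to $\{h_i\}$ yields
\[
\int_{\mathbb{S}_n} 2C_p|\varphi|^p\,d\sigma
\leq \liminf_{i\to\infty}\int_{\mathbb{S}_n} h_i\,d\sigma
= C_p\lim_{i\to\infty}\int_{\mathbb{S}_n}|\varphi_i|^p\,d\sigma
+ C_p\int_{\mathbb{S}_n}|\varphi|^p\,d\sigma
- \limsup_{i\to\infty}\int_{\mathbb{S}_n}|\varphi_i-\varphi|^p\,d\sigma.
\]
Here I used the hypothesis $\lim_i\|\varphi_i\|_{L^p(\mathbb{S}_n)}^p = \|\varphi\|_{L^p(\mathbb{S}_n)}^p$ to replace the $\liminf$ of the first term by an honest limit, and the fact that $\liminf(a_i - b_i) = a - \limsup b_i$ when $a_i\to a$.

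Rearranging gives $\limsup_i\int |\varphi_i-\varphi|^p\,d\sigma \leq 0$, and since the integrand is nonnegative this forces $\|\varphi_i-\varphi\|_{L^p(\mathbb{S}_n)}\to 0$. I do not expect any real obstacle here: the only subtle point is the unified handling of the two ranges $0<p\leq 1$ and $p\geq 1$ via the constant $C_p$, and the split of the $\liminf$ into a limit plus a $\limsup$, both of which are routine. The hypothesis of norm convergence is absolutely essential for the cancellation step; without it, Fatou produces only an inequality with no useful content.
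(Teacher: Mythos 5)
Your proof is correct, and it is essentially the classical argument: the paper itself gives no proof of this lemma but cites Duren's \emph{Theory of $H^p$ Spaces}, where the result is established by exactly this Fatou's-lemma trick applied to the nonnegative functions $C_p(|\varphi_i|^p+|\varphi|^p)-|\varphi_i-\varphi|^p$. All the steps you flag as potentially subtle (the constant $C_p=\max(1,2^{p-1})$ covering both ranges of $p$, and splitting $\liminf(a_i-b_i)=\lim a_i-\limsup b_i$ once the norm hypothesis makes the first term a genuine limit, with $\int|\varphi|^p<\infty$ justifying the final subtraction) are handled correctly.
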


The next result is Lemma 4 in \cite{MA1999}.

\begin{lem}\label{arsen}
Let $\mu$ be a positive Borel measure in $\mathbb{B}_n$,
let $s>0$ and let $t>max(1,1/s)$.
Then 
\begin{eqnarray*}
\int_{\mathbb{S}_n}\left(\int_{\mathbb{B}_n}\left(\frac{1-|z|}
{|1-\langle\zeta, z\rangle|}\right)^{nt} d\mu(z)\right)^s\,d\sigma(\zeta)
\asymp
\int_{\mathbb{S}_n}\left(\mu\left(\Gamma(\zeta)\right)\right)^s\,d\sigma(\zeta).
\end{eqnarray*}
\end{lem}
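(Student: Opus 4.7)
My plan is to prove the comparison $\asymp$ by establishing each inequality separately. The lower bound---that the right-hand side is controlled by the left-hand side---is immediate: for $z\in\Gamma_\alpha(\zeta)$, the defining inequality $|1-\langle\zeta,z\rangle|<\alpha(1-|z|)$ gives the pointwise estimate
\[
\left(\frac{1-|z|}{|1-\langle\zeta,z\rangle|}\right)^{nt}>\alpha^{-nt}.
\]
Restricting the inner integral to $\Gamma(\zeta)$ therefore yields $\int_{\B}\left(\frac{1-|z|}{|1-\langle\zeta,z\rangle|}\right)^{nt}\,d\mu(z)\gtrsim \mu(\Gamma(\zeta))$ pointwise in $\zeta$, and raising to the $s$-th power and integrating over $\mathbb{S}_n$ completes this half.

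For the upper bound on the left-hand side, my strategy is a dyadic decomposition of $\B$ with respect to expanding admissible approach regions. Set $A_0(\zeta)=\Gamma_2(\zeta)$ and, for $k\geq 1$, $A_k(\zeta)=\Gamma_{2^{k+1}}(\zeta)\setminus \Gamma_{2^k}(\zeta)$; these sets partition $\B$ for every $\zeta\in\mathbb{S}_n$. On $A_k(\zeta)$ with $k\geq 1$ we have $|1-\langle\zeta,z\rangle|\geq 2^k(1-|z|)$, so the integrand is dominated by $2^{-knt}$, and summing over $k$ gives
\[
\int_{\B}\left(\frac{1-|z|}{|1-\langle\zeta,z\rangle|}\right)^{nt}\,d\mu(z)\lesssim \sum_{k=0}^{\infty} 2^{-knt}\mu(\Gamma_{2^{k+1}}(\zeta)).
\]

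I then raise both sides to the $s$-th power and integrate against $d\sigma$, splitting into cases. For $s\leq 1$ I apply the subadditivity $(\sum_k a_k)^s\leq \sum_k a_k^s$, reducing the estimate to $\sum_{k\geq 0} 2^{-knts}\int_{\mathbb{S}_n}\mu(\Gamma_{2^{k+1}}(\zeta))^s\,d\sigma(\zeta)$; for $s>1$ I apply Minkowski's inequality in $L^s(\mathbb{S}_n)$ to pull the sum outside the $L^s$ norm. Either way the remaining task is to control $\int_{\mathbb{S}_n}\mu(\Gamma_\alpha(\zeta))^s\,d\sigma(\zeta)$ in terms of $\int_{\mathbb{S}_n}\mu(\Gamma(\zeta))^s\,d\sigma(\zeta)$ with explicit dependence on $\alpha$.

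The main obstacle is this change-of-aperture estimate
\[
\int_{\mathbb{S}_n}\mu(\Gamma_\alpha(\zeta))^s\,d\sigma(\zeta)\lesssim \alpha^{n\max(1,s)}\int_{\mathbb{S}_n}\mu(\Gamma(\zeta))^s\,d\sigma(\zeta).
\]
For $s=1$ this follows directly from Fubini together with the shadow estimate $\sigma(I_\alpha(z))\asymp \alpha^n(1-|z|)^n$ recorded in \eqref{IQ}. For general $s$ I would combine the shadow estimate with a Fefferman--Stein style good-$\lambda$ inequality $\sigma\{\zeta:\mu(\Gamma_\alpha(\zeta))>\lambda\}\lesssim \alpha^n\,\sigma\{\zeta:\mu(\Gamma(\zeta))>\lambda\}$ and the layer-cake formula. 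Substituting these estimates back, the geometric series in $k$ converges precisely when $nts>n$ (for $s\leq 1$) or $nt>n$ (for $s>1$), i.e., exactly under the hypothesis $t>\max(1,1/s)$, completing the proof.
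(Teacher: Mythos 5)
Your skeleton---the trivial lower bound, the decomposition of $\B$ into $\Gamma_2(\zeta)$ and the annuli $\Gamma_{2^{k+1}}(\zeta)\setminus\Gamma_{2^k}(\zeta)$, and the reduction to a change-of-aperture estimate---is the standard route to this lemma (the paper itself gives no proof; it simply cites Lemma 4 of \cite{MA1999}), and your bookkeeping is right: an aperture estimate with integral factor $\alpha^{n\max(1,s)}$ makes the geometric series converge exactly on the range $t>\max(1,1/s)$, and the cases $s=1$ (Fubini plus $\sigma(I_\alpha(z))\lesssim\alpha^n(1-|z|)^n$) and the lower bound are fine. The genuine gap is your justification of the aperture estimate: the claimed same-level good-$\lambda$ inequality $\sigma\{\zeta:\mu(\Gamma_\alpha(\zeta))>\lambda\}\lesssim\alpha^n\,\sigma\{\zeta:\mu(\Gamma(\zeta))>\lambda\}$ is \emph{false}. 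Counterexample: fix small $h$, place mass $\lambda$ at $N\asymp\alpha^n$ points $z_j=(1-h)\xi_j$, where the $\xi_j$ lie in a cap of $d$-radius $\tfrac12\sqrt{\alpha h}$ about some $\zeta_0$ and are $C\sqrt{h}$-separated with $C$ so large that the reference-aperture shadows $I(z_j)$ are pairwise disjoint. Then $\mu(\Gamma(\zeta))\le\lambda$ for every $\zeta$, so the right-hand side vanishes, while $\mu(\Gamma_\alpha(\zeta))\asymp\alpha^n\lambda>\lambda$ for all $\zeta$ in a cap of measure $\asymp\alpha^nh^n$. Widening the aperture inflates the \emph{value} of the functional by $\alpha^n$, not merely the measure of its level sets, so any correct distributional inequality must carry a level shift $\lambda\mapsto c\,\alpha^{-n}\lambda$. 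Feeding the level-shifted version into the layer-cake formula yields only $\int_{\mathbb{S}_n}\mu(\Gamma_\alpha)^s\,d\sigma\lesssim\alpha^{n(1+s)}\int_{\mathbb{S}_n}\mu(\Gamma)^s\,d\sigma$, and then your series demands $t>1+1/s$, strictly stronger than the hypothesis in both regimes; so the proof as written does not reach the stated range.

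The estimate you actually need, $\int_{\mathbb{S}_n}\mu(\Gamma_\alpha)^s\,d\sigma\lesssim\alpha^{n\max(1,s)}\int_{\mathbb{S}_n}\mu(\Gamma)^s\,d\sigma$, is true---it is essentially the change-of-angle theorem for the tent spaces $T^s_1$---but it is the real content of the lemma and requires more than your sketch supplies. For $s>1$ a clean proof is by duality rather than good-$\lambda$: for $g\ge0$ one has $\int_{I_\alpha(z)}g\,d\sigma\lesssim\alpha^n\int_{I(z)}Mg\,d\sigma$ (with $M$ the nonisotropic Hardy--Littlewood maximal function on $\mathbb{S}_n$), and Fubini together with the $L^{s'}$-boundedness of $M$ gives the norm factor $\alpha^n$, i.e.\ $\alpha^{ns}$ on the integrals. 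For $s<1$ the standard proof is the Coifman--Meyer--Stein density argument with an \emph{aperture-dependent} threshold: setting $O_\lambda=\{\mu(\Gamma)>\lambda\}$ and $O^*_\lambda=\{M\chi_{O_\lambda}>c\,\alpha^{-n}\}$, the weak $(1,1)$ inequality gives $\sigma(O^*_\lambda)\lesssim\alpha^n\sigma(O_\lambda)$, and for $z$ whose $\alpha$-shadow meets the complement of $O^*_\lambda$ one gets $\sigma(I(z)\cap O_\lambda^c)\gtrsim(1-|z|)^n$, leading via Fubini to $\sigma\{\mu(\Gamma_\alpha)>\lambda\}\lesssim\alpha^n\bigl(\sigma(O_\lambda)+\lambda^{-1}\int_{\mathbb{S}_n}\min(\mu(\Gamma),\lambda)\,d\sigma\bigr)$; integrating in $\lambda$ then produces the factor $\alpha^n$ for all $0<s<1$. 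The threshold $c\,\alpha^{-n}$, and the extra averaged term on the right, are exactly what your ``good-$\lambda$ plus layer cake'' formulation misses; until this aperture lemma is proved along such lines (or cited), the upper half of your argument is incomplete.
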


The following result is well-known. 
See, for example, Problem 18 on p. 74 of \cite{WR1987}.

\begin{lem}\label{rudin}
Let $1\le p<\infty$, let $L^p(X,\mu)$ be the $L^p$ space on a measure space $X$ with a positive Borel measure,
and let $\{f_i\}$ be a sequence in $L^p(X,\mu)$ such that 
$\|f_i-f\|_{L^p(X,\mu)}\to0$ as $i\to\infty$. Then $\{f_i\}$ has a subsequence $\{f_{i_m}\}$
such that $f_{i_m}\to f$ almost everywhere on $X$.
\end{lem}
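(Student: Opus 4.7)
The plan is to extract a rapidly norm-convergent subsequence and then control the pointwise behavior via a sum. First, since $\|f_i-f\|_{L^p(X,\mu)}\to 0$, I can choose inductively indices $i_1<i_2<\cdots$ with
\[
\|f_{i_m}-f\|_{L^p(X,\mu)}\le 2^{-m}
\quad\text{for each } m\ge 1.
\]
This is the only place where the $L^p$ convergence hypothesis enters; all subsequent work is with this fixed subsequence.

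Next, I would turn the norm control into an a.e. pointwise statement by forming the telescoping-style sums $g_N(x)=\sum_{m=1}^{N}|f_{i_m}(x)-f(x)|$ and their pointwise monotone limit $g(x)=\sum_{m=1}^{\infty}|f_{i_m}(x)-f(x)|$. Since $p\ge 1$, Minkowski's inequality applies and gives
\[
\|g_N\|_{L^p(X,\mu)}\le \sum_{m=1}^{N}\|f_{i_m}-f\|_{L^p(X,\mu)}\le \sum_{m=1}^{N}2^{-m}<1.
\]
The monotone convergence theorem then yields $\|g\|_{L^p(X,\mu)}\le 1$, so in particular $g(x)<\infty$ for $\mu$-a.e.\ $x\in X$.

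Finally, at any $x\in X$ with $g(x)<\infty$ the series $\sum_{m=1}^{\infty}|f_{i_m}(x)-f(x)|$ converges, which forces its general term to tend to zero, i.e.\ $f_{i_m}(x)\to f(x)$. Hence $\{f_{i_m}\}$ converges to $f$ $\mu$-almost everywhere on $X$, as required.

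This is a standard real-analysis argument and I do not anticipate a genuine obstacle; the only point deserving care is the application of Minkowski's inequality (valid precisely because $p\ge 1$, which is why the hypothesis is stated in that range) together with the monotone convergence theorem to pass from the uniform bound on $\|g_N\|_{L^p(X,\mu)}$ to $g<\infty$ a.e.
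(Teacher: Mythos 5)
Your proof is correct, and it is the standard argument for this classical fact; the paper itself offers no proof, citing instead Problem 18 on p.~74 of Rudin's \emph{Real and Complex Analysis}, whose intended solution is essentially what you wrote (choose a rapidly convergent subsequence with $\|f_{i_m}-f\|_{L^p}\le 2^{-m}$, bound $g=\sum_m |f_{i_m}-f|$ via Minkowski and monotone convergence, and conclude $g<\infty$ a.e.). The only cosmetic difference from the textbook version is that Rudin telescopes consecutive differences $|f_{i_{m+1}}-f_{i_m}|$ and identifies the a.e.\ limit with $f$ afterward, whereas you compare directly with the given limit $f$, which is slightly cleaner here since $f$ is already supplied by the hypothesis; your handling of the monotone convergence step (applying it to $g_N^p\uparrow g^p$, valid since $p\ge 1$) is sound.
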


\begin{rem}\label{rudin1} Note that, if $\{f_i\}$ is a nonnegative sequence, 
then the above result is also true for $0<p<1$ since in this case,
$f\in L^p(X,\mu)$ if and only if $f^p\in L^1(X,\mu)$.
\end{rem}




%

We also need a well-known covering lemma.
A sequence $\{a_j\}$ in $\B$ is said to be 
 \textit{separated } (in the Bergman metric)
if there is a constant $\delta>0$ such that
$\beta(a_i,a_j)>\delta$ for all $i\neq j$. 
The following result is from Theorem 2.23 in \cite{KZ2005}.

\begin{lem}\label{covering}
There exists a positive integer $N$ such that, for any $0<r<1$, 
there is a sequence $\{a_j\}$ in $\B$ satisfying the following properties:
\begin{itemize}
\item[(i)] $\B=\cup_{j}D(a_j,r)$.
\item[(ii)] The sets $D(a_j,r/4)$ are mutually disjoint.
\item[(iii)] Each point $z\in\B$ belongs to at most $N$ of the sets $D(a_j,4r)$.
\end{itemize}
\end{lem}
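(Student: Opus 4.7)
The plan is to construct the sequence $\{a_j\}$ by a standard maximality argument and then verify the three properties using the $\mathrm{Aut}(\B)$-invariance of the measure $d\tau$, which makes $\tau(D(a,\rho))$ depend only on $\rho$. The main difficulty will be extracting a bound $N$ in (iii) that does not depend on the radius $r\in(0,1)$.

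First I would apply Zorn's lemma to the collection (partially ordered by inclusion) of subsets $S\subset \B$ with the property that $\beta(a,b)\ge r/2$ for every pair of distinct $a,b\in S$. Call a maximal such subset $\{a_j\}$. Then I would verify properties (i) and (ii) directly from the selection: for (i), maximality forces every $z\in \B$ that is not already an $a_j$ to satisfy $\beta(z,a_j)<r/2<r$ for some $j$, since otherwise $\{a_j\}\cup\{z\}$ would still be $r/2$-separated and strictly larger; for (ii), if some $z$ lay in $D(a_i,r/4)\cap D(a_j,r/4)$ with $i\ne j$, the triangle inequality for $\beta$ would give $\beta(a_i,a_j)<r/2$, contradicting the separation.

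For (iii), fix $z\in\B$ and set $J_z=\{j:z\in D(a_j,4r)\}$. For each $j\in J_z$ the triangle inequality yields $D(a_j,r/4)\subset D(z,4r+r/4)\subset D(z,5r)$, and since by (ii) the balls $D(a_j,r/4)$ are pairwise disjoint, the $\mathrm{Aut}(\B)$-invariance of $d\tau$ gives
\begin{equation*}
|J_z|\,\tau(D(0,r/4))
=\sum_{j\in J_z}\tau(D(a_j,r/4))
\le \tau(D(z,5r))
=\tau(D(0,5r)).
\end{equation*}
Hence $|J_z|\le \tau(D(0,5r))/\tau(D(0,r/4))$.

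The subtle point is that this ratio must be bounded uniformly in $r\in(0,1)$. I would argue this by a direct computation in polar coordinates for the Bergman metric: the function $g(\rho):=\tau(D(0,\rho))$ satisfies $g(\rho)\asymp \rho^{2n}$ as $\rho\to 0^+$, while on any compact subinterval of $(0,\infty)$ it is continuous and strictly positive. Consequently the ratio $g(5r)/g(r/4)$ extends continuously to $[0,1]$ with limit $20^{2n}$ at $0$ and a finite value at $1$, so it is bounded above by some constant $N$ depending only on $n$. This supplies the uniform bound in (iii) and completes the construction.
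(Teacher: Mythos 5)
Your proposal is correct and follows essentially the same route as the paper's source for this lemma: the paper gives no proof but cites Theorem 2.23 in Zhu's \emph{Spaces of Holomorphic Functions in the Unit Ball}, whose argument is exactly yours — a maximal $r/2$-separated set (maximality giving (i), the triangle inequality for $\beta$ giving (ii)) followed by a disjointness-plus-volume count with the automorphism-invariant measure $d\tau$ for (iii). Your uniformity argument for $N$ is sound and can even be made explicit: since $\beta(0,z)=\frac12\log\frac{1+|z|}{1-|z|}$, one computes $\tau(D(0,\rho))=\sinh^{2n}\rho$ up to a dimensional constant, so the ratio $\tau(D(0,5r))/\tau(D(0,r/4))=\left(\sinh(5r)/\sinh(r/4)\right)^{2n}$ is continuous on $(0,1]$ with limit $20^{2n}$ as $r\to 0^{+}$, hence bounded by a constant depending only on $n$ (the only cosmetic point worth adding is that the maximal separated set is automatically countable, by (ii) and separability of $\B$, so it can indeed be enumerated as a sequence).
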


If a sequence  $\{a_j\}$ in $\B$ satisfies the conditions of the above lemma,
it is called a \textit{lattice} (or an $r$-\textit{lattice}).

\section{Tent spaces}

The concept of tent spaces was introduced by Coifman, Meyer and Sein \cite{CMS1985} in 1985. 
The theory of tent space is an useful tool to study the operator theory on Hardy spaces and weighted Bergman spaces. 
See Pau \cite{JP2016}, Liu, Lou and Zhao \cite{LLZ2019}, Lv and Pau \cite{LP2024}, Miihkinen, Pau, Per\" al\"a and Wang \cite{Mi}
for some applications of tent spaces.

Let $\omega$ be a positive Borel measure on $\mathbb{B}_n$, finite on compact subsets.
Let $0<q<\infty$. Denote by
\[
A_{q,\,\omega}^q(f)(\zeta)=\int_{\Gamma(\zeta)}|f(z)|^q\,d\omega(z)
\]
and
\[
A_{\infty,\,\omega}(f)(\zeta)=\omega-\esssup_{z\in\Gamma(\zeta)}|f(z)|,
\]
where $\zeta\in\mathbb S_n$.
For $0<p<\infty$ and $0<q\leq \infty$, the tent space $T_q^p(\omega)$ is the space of all
measurable functions $f$ on $\mathbb B_n$ such that
\[
\|f\|_{T_q^p(\omega)}=\|A_{q,\,\omega}(f)\|_{L^p(\mathbb{S}_n)}<\infty.
\]

For $0<q<\infty$ and $\zeta\in\mathbb S_n$, denote by
\[
C_{q,\,\omega}^q(f)(\zeta)
=\sup_{a\in\Gamma(\zeta)}\frac{1}{(1-|a|)^n}
\int_{\Lambda(a)}|f(z)|^q(1-|z|)^n\,d\omega(z).
\]
The tent space $T_q^\infty(\omega)$ consists of all measurable functions $f$ on $\mathbb B_n$ such that
\[
\|f\|_{T_q^\infty(\omega)}=\|C_{q,\,\omega}(f)\|_{L^\infty(\mathbb{S}_n)}.
\]
Let $\alpha>-n-1$. For $d\omega(z)=(1-|z|)^\alpha dV(z)$ we write $T_{q,\alpha}^p$ instead of  $T_q^p(\omega)$. 
The Hardy type tent space $HT_{q,\alpha}^p$ consists of $f \in H(\B)\cap T_{q,\alpha}^p$; 
and $HT_{\infty,\alpha}^p$ consists of $f \in H(\B)\cap T_{\infty,\alpha}^p$.

If $\omega=\sum_j\delta_{z_j}$ where $\{z_j\}$ is a separated sequence, and
$\delta_{z_j}$ are Dirac measures at $z_j$, then
we write $T_q^p(\{z_j\})$ instead of $T_q^p(\omega)$.
More precisely, for $0<p, q<\infty$ and for a separated sequence $Z=\{z_j\}_{j=1}^\infty$,
the tent sequence space $T_{q}^{p}(Z)$ consists of complex sequences 
$\lambda=\{\lambda_j\}_{k=1}^\infty$, where $\lambda_j=f(z_j)$, satisfying
$$
\|\lambda\|_{T_{q}^{p}(Z)}^{p}
:=\int_{\mathbb S_n}\left(\sum_{z_j\in\Gamma(\zeta)}
|\lambda_j|^{q}\right)^{\frac{p}{q}}\,d\sigma(\zeta)<\infty.
$$
Also, $\lambda=\{\lambda_j\} \in T_{\infty}^{p}(Z)$ if
$$
\|\lambda\|_{T_{\infty}^{p}(Z)}^{p}
:=\int_{\mathbb S_n} \left(\sup\{|\lambda_j|\,:\,z_j\in\Gamma(\zeta)\}\right)^p\,d\sigma(\zeta)<\infty.
$$
Finally, $\lambda=\{\lambda_j\}\in T_{q}^{\infty}(Z)$ if
$$
\|\lambda\|_{T_{q}^{\infty}(Z)}
=\sup_{\zeta\in\mathbb{S}_n}
\left(\sup_{a\in\Gamma(\zeta)}\frac{1}{(1-|a|)^n} 
\sum_{z_j\in\Lambda(a)}|\lambda_j|^{q}(1-|z_j|)^n\right)^{1/q}<\infty.
$$
It is well known that $\lambda\in T_{q}^{\infty}(Z)$ if and only if 
$\mu_{\lambda}=\sum_{j=1}^\infty|\lambda_j|^q(1-|z_j|)^n \delta_{z_j}$ 
is a Carleson measure.



We  will need the following discrete version's duality results for tent sequence spaces. 
See \cite{MA1999}, \cite{JM1996}, and \cite{HL1991} for the proofs.  

\begin{lem}\label{dual}
Let $1\le p<\infty$ and $Z=\{z_j\}$ be a separated sequence. 
If $1<q<\infty$, then the dual of $T_q^p(Z)$ is isomorphic to $T_{q'}^{p'}(Z)$ under the pairing
$$
\langle\lambda,\mu\rangle_{T_2^2(Z)}
=\sum_{j}\lambda_{j}\overline{\mu_{j}}(1-|z_j|^{2})^n, \ \  \mbox{where}\ 
\lambda\in T_{q}^{p}(Z), \ \mu\in T_{q'}^{p'}(Z).
$$
If $1<p<\infty$ and $0<q \leq 1$, then the dual of $T_{q}^{p}(Z)$ is isomorphic to $T_{\infty}^{p^{\prime}}(Z)$ under the same pairing.
\end{lem}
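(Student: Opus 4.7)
The plan is to prove both duality statements by verifying (a) each element of the predicted dual space induces a bounded functional via the stated pairing, and (b) every bounded linear functional on $T_q^p(Z)$ arises in this way.

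For (a), both cases rest on the Fubini identity
\[
\sum_{j}\lambda_{j}\overline{\mu_{j}}(1-|z_j|^{2})^n \;\asymp\; \int_{\mathbb S_n}\sum_{z_j\in\Gamma(\zeta)}\lambda_j\overline{\mu_j}\,d\sigma(\zeta),
\]
which follows from $\sigma(I(z_j))\asymp(1-|z_j|)^n$; see \eqref{ei(z)}. In the range $1<q<\infty$ I would apply H\"older in the fiber ($\ell^{q}/\ell^{q'}$) followed by H\"older in the base ($L^{p}/L^{p'}$), giving $|\langle\lambda,\mu\rangle|\lesssim \|\lambda\|_{T_q^p(Z)}\|\mu\|_{T_{q'}^{p'}(Z)}$. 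For $0<q\le 1$ the fiberwise step is replaced by the embedding $\sum|a_j|\le(\sum|a_j|^{q})^{1/q}$ (valid because $x\mapsto x^{q}$ is subadditive), which bounds the inner sum pointwise by $A_{\infty}(\mu)(\zeta)\cdot A_{q}(\lambda)(\zeta)$; H\"older in $L^{p}/L^{p'}$ then yields $|\langle\lambda,\mu\rangle|\lesssim\|\lambda\|_{T_{q}^{p}(Z)}\|\mu\|_{T_{\infty}^{p'}(Z)}$.

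For (b) in the range $1<q<\infty$ I would realize $T_q^p(Z)$ as a closed subspace of $L^{p}(\mathbb S_n;\ell^{q})$ through the embedding $J\lambda(\zeta)=(\lambda_{j}\chi_{\Gamma(\zeta)}(z_j))_{j}$, which is isometric up to a fixed constant by the same Fubini computation. A given $L\in (T_{q}^{p}(Z))^{*}$ transfers to a functional on $J(T_{q}^{p}(Z))$, extends by Hahn--Banach to all of $L^{p}(\mathbb S_n;\ell^{q})$, and under the classical vector-valued duality is represented by some $F=(F_j)\in L^{p'}(\mathbb S_n;\ell^{q'})$. Reading off
\[
\overline{\mu_j}\,(1-|z_j|^{2})^{n} \;=\; \int_{I(z_j)}\overline{F_j(\zeta)}\,d\sigma(\zeta)
\]
and exploiting the bounded overlap of the caps $\{I(z_j)\}$ (a consequence of $Z$ being separated, via Lemma~\ref{covering}) one checks $\mu\in T_{q'}^{p'}(Z)$ with norm controlled by $\|L\|$; by construction $L(\lambda)=\langle\lambda,\mu\rangle$ holds on finitely supported $\lambda$ and hence everywhere by density.

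For (b) in the range $0<q\le 1<p$ the Hahn--Banach route is unavailable because $T_{q}^{p}(Z)$ is only a quasi-Banach space. Given $L\in (T_{q}^{p}(Z))^{*}$, I would instead set $\mu_j = \overline{L(e_j)}/(1-|z_j|^{2})^{n}$, so that $L(\lambda)=\langle\lambda,\mu\rangle$ on finitely supported sequences. To bound the non-tangential maximal sequence $A_{\infty}(\mu)$ in $L^{p'}(\mathbb S_n)$, I would appeal to an atomic decomposition of $T_{q}^{p}(Z)$: every $\lambda\in T_{q}^{p}(Z)$ can be written as $\sum_{k}c_{k}a^{(k)}$ with each atom $a^{(k)}$ supported in a single tent $\Lambda(Q_{k})$, satisfying $\|a^{(k)}\|_{T_{q}^{\infty}(Z)}\lesssim\sigma(Q_k)^{-1/p}$, and with $(\sum_k|c_k|^{p})^{1/p}\lesssim\|\lambda\|_{T_{q}^{p}(Z)}$. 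Pairing such atoms with $\mu$ and testing $L$ against indicator-type sequences localized to a fixed $\Lambda(a)$ then pins down $A_{\infty}(\mu)\in L^{p'}(\mathbb S_n)$ with the required norm bound. The main obstacle of the whole argument is precisely this atomic decomposition in the discrete quasi-normed setting: it has to replace Hahn--Banach, and its construction requires the covering and bounded-overlap properties from Lemma~\ref{covering} combined with the comparison $\sigma(I(z_j))\asymp(1-|z_j|)^n$.
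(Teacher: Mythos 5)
The paper never proves this lemma: it is quoted with pointers to Luecking \cite{HL1991}, Jevti\'c \cite{JM1996} and Arsenovi\'c \cite{MA1999}, so your proposal has to be measured against those standard arguments, and against them it has genuine gaps. In the range $1<p,q<\infty$ your architecture (embed $T^p_q(Z)$ into $L^p(\mathbb S_n;\ell^q)$ via $J$, which is in fact exactly isometric, then Hahn--Banach) is the classical one, but the verification that the represented sequence $\mu$ lies in $T^{p'}_{q'}(Z)$ is precisely the nontrivial point, and the justification you give is false: for a separated sequence the caps $I(z_j)$ do \emph{not} have bounded overlap. Take $z_j=(1-2^{-j})e_1$; this is separated in the Bergman metric, yet the caps $I(z_j)$ are essentially nested and all contain $e_1$, so the overlap is infinite. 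Lemma~\ref{covering} controls overlap of Bergman balls $D(a_j,4r)$ \emph{inside} $\B$, not of boundary caps across scales. The step can be repaired, but by a different tool: since only $F_j\chi_{I(z_j)}$ enters the pairing, and $z_j\in\Gamma(\zeta)$ iff $\zeta\in I(z_j)$, one gets $|\mu_j|\lesssim MF_j(\zeta)$ for $z_j\in\Gamma(\zeta)$ (with $M$ the nonisotropic Hardy--Littlewood maximal operator), and then the Fefferman--Stein vector-valued maximal inequality yields $\|A_{q'}(\mu)\|_{L^{p'}}\lesssim\|F\|_{L^{p'}(\ell^{q'})}$. Moreover the statement includes the endpoint $p=1$, where $T^{p'}_{q'}(Z)$ means $T^{\infty}_{q'}(Z)$ with the \emph{Carleson-type} norm $C_{q'}$, not $\sup_\zeta A_{q'}(\mu)(\zeta)$; your H\"older step in the base would need $A_{q'}(\mu)\in L^\infty$, a strictly stronger condition, so part (a) of your argument fails at $p=1$ and must be replaced by the stopping-time/Carleson-embedding argument of Coifman--Meyer--Stein type \cite{CMS1985}.

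The most serious gap is your plan for $0<q\le1<p<\infty$. You propose an atomic decomposition of $T^p_q(Z)$ with tent-supported atoms and $\ell^p$-summable coefficients; such decompositions are a $p\le1$ phenomenon and no decomposition of this kind characterizes $T^p_q(Z)$ for $p>1$. Independently, ``testing $L$ against indicator-type sequences localized to a fixed $\Lambda(a)$'' can only produce Carleson-type information (a bound on the $C$-functional of $\mu$), which is never enough to conclude $A_\infty(\mu)\in L^{p'}(\mathbb S_n)$. The cited proofs proceed differently: setting $\mu_j=\overline{L(e_j)}/(1-|z_j|^2)^n$ as you do, one takes an arbitrary $0\le g\in L^p(\mathbb S_n)$, makes a measurable selection $\zeta\mapsto j(\zeta)$ nearly attaining $\sup_{z_j\in\Gamma(\zeta)}|\mu_j|$, defines $\lambda_j\asymp(1-|z_j|)^{-n}\int_{\{\zeta:\,j(\zeta)=j\}}g\,d\sigma$, proves $\|\lambda\|_{T^p_q(Z)}\lesssim\|g\|_{L^p}$ by maximal-function estimates, and deduces $\int_{\mathbb S_n}A_\infty(\mu)\,g\,d\sigma\lesssim\|L\|\,\|g\|_{L^p}$; this selection argument is what your atomic-decomposition step would have to become. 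For the record, your part (a) computations (fiber/base H\"older for $1<q<\infty$ with $p>1$, and the embedding $\ell^q\hookrightarrow\ell^1$ for $q\le1$) are correct as far as they go.
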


The following factorization of tent sequence spaces was proved by Miihkinen, Pau, Per\"al\" a and Wang in \cite{Mi}.

\begin{lem}\label{fact}
Assume  that $0<p, q<\infty$ and that  $Z=\{z_j\}$ is an $r$-lattice.  
Let $p\le p_{1}, p_{2}\le\infty, q\le q_{1}, q_{2}\le\infty$ satisfy
$$
\frac{1}{p}=\frac{1}{p_{1}}+\frac{1}{p_{2}}, 
\quad \frac{1}{q}=\frac{1}{q_{1}}+\frac{1}{q_{2}},
$$
and $p_iq_i>pq$, $i=1,2$.
Then, we have
$$
T_{q}^{p}(Z)=T_{q_{1}}^{p_{1}}(Z) \cdot T_{q_{2}}^{p_{2}}(Z);
$$
in other words, if 
$\gamma=\{\gamma_j\}\in T_{q_1}^{p_1}(Z)$ 
and 
$\beta=\{\beta_j\}\in T_{q_2}^{p_2}(Z)$, 
then   
$\gamma \cdot \beta=\{\gamma_j\beta_j\}_{j=1}^\infty\in T_{q}^{p}(Z)$ 
with 
$\|\gamma \cdot \beta\|_{T_{q}^{p}(Z)}\lesssim\|\gamma\|_{T_{q_1}^{p_1}(Z)}\cdot\|\beta\|_{T_{q_{2}}^{p_{2}}(Z)}$,
and conversely, if $\lambda \in T_{q}^{p}(Z)$, then there are sequences 
$\gamma \in T_{q_1}^{p_1}(Z)$ and $\beta \in T_{q_2}^{p_2}(Z)$ 
such that 
$\lambda=\gamma\cdot\beta$ and 
$\|\gamma\|_{T_{q_1}^{p_1}(Z)}\cdot\|\beta\|_{T_{q_2}^{p_2}(Z)}\lesssim\|\lambda\|_{T_{q}^{p}(Z)}$.
\end{lem}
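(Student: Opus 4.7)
The strategy is to establish the equality $T^p_q(Z)=T^{p_1}_{q_1}(Z)\cdot T^{p_2}_{q_2}(Z)$ by proving the two inclusions separately. The containment ``$\supseteq$'' is the H\"older-type inequality for tent sequence spaces and is proved by pointwise estimates combined with H\"older's inequality in both the sequence index and the sphere variable. The reverse containment ``$\subseteq$'' is the factorization statement itself, and it is the technical core: I would reduce it to transplanting an $L^p(\mathbb S_n)$-factorization of the ``area function'' $F(\zeta)=A_q(\lambda)(\zeta)$ onto the lattice $Z$ via local averages over admissible slices.

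For the forward inclusion, I would fix $\gamma\in T^{p_1}_{q_1}(Z)$ and $\beta\in T^{p_2}_{q_2}(Z)$ and first establish the pointwise bound
$$
A_q(\gamma\cdot\beta)(\zeta)\;\le\; A_{q_1}(\gamma)(\zeta)\,A_{q_2}(\beta)(\zeta)
$$
by applying H\"older in the $j$-variable to $\sum_{z_j\in\Gamma(\zeta)}|\gamma_j\beta_j|^q$ with conjugate exponents $q_1/q,\,q_2/q$ (replacing the sum by a supremum if $q_i=\infty$). When $p_1,p_2<\infty$, a second application of H\"older in $\zeta$ with exponents $p_1/p,\,p_2/p$ then immediately delivers $\|\gamma\cdot\beta\|_{T^p_q(Z)}\le\|\gamma\|_{T^{p_1}_{q_1}(Z)}\|\beta\|_{T^{p_2}_{q_2}(Z)}$. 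If one of the $p_i$ equals $\infty$, the $T^\infty_{q_i}(Z)$-norm is the Carleson-measure norm of $\mu_\gamma=\sum|\gamma_j|^{q_i}(1-|z_j|)^n\delta_{z_j}$ rather than an $L^\infty$ norm of $A_{q_i}(\gamma)$, so there I would instead use the embedding of $T^{p_2}_{q_2}(Z)$ into the weighted tent space with weight $\mu_\gamma$, combined with the Carleson-measure characterization in Lemma \ref{carleson}, to produce the same norm bound.

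For the reverse inclusion, given $\lambda\in T^p_q(Z)$ I would set $F(\zeta)=A_q(\lambda)(\zeta)\in L^p(\mathbb S_n)$ and factor $F=F_1F_2$ on $\mathbb S_n$ by $F_i=F^{p/p_i}$, so that $\|F_1\|_{L^{p_1}}\|F_2\|_{L^{p_2}}=\|F\|_{L^p}=\|\lambda\|_{T^p_q(Z)}$. To transfer this to the lattice, for each $z_j$ I would choose weights $H_i(z_j)$ equal to a suitable average of $F_i$ over the admissible slice $I_\alpha(z_j)\subset\mathbb S_n$ (whose area is $\asymp(1-|z_j|)^n$ by \eqref{ei(z)}), normalized so that $H_1(z_j)H_2(z_j)=1$, and set
$$
\gamma_j=|\lambda_j|^{q/q_1}\operatorname{sgn}(\lambda_j)H_1(z_j),\qquad \beta_j=|\lambda_j|^{q/q_2}H_2(z_j),
$$
which yields $\gamma_j\beta_j=\lambda_j$ since $q/q_1+q/q_2=1$. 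For the norm bound, expanding $A_{q_1}(\gamma)(\zeta)^{q_1}=\sum_{z_j\in\Gamma(\zeta)}|\lambda_j|^q|H_1(z_j)|^{q_1}$, using the bounded-overlap property of the $r$-lattice from Lemma \ref{covering} to swap the $j$-sum against the averaging integrals over $I_\alpha(z_j)$, I would dominate the result by a Hardy--Littlewood-type maximal average of $F_1^{q_1}$ multiplied by $F(\zeta)^q$; the $L^{p_1/q_1}$-bound for that maximal average, combined with an outer H\"older in $\zeta$, would then give $\|\gamma\|_{T^{p_1}_{q_1}(Z)}\lesssim\|F_1\|_{L^{p_1}}$, and symmetrically for $\beta$.

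The principal obstacle is making this transplantation rigorous at the correct exponents. The strict inequality $p_iq_i>pq$ is precisely what provides the room to apply the \emph{strong}-type maximal estimate, rather than only a weak-type bound at the endpoint $p_iq_i=pq$. An additional delicate point is the case analysis for $p_i=\infty$ or $q_i=\infty$: the na\"ive factorization $F=F^{p/p_1}\cdot F^{p/p_2}$ degenerates, so one would replace it with a stopping-time/Whitney decomposition of the super-level sets $\{F>2^k\}\subset\mathbb S_n$ and define the weights $H_i(z_j)$ through the associated tents, after which the previous argument carries through level by level, with the Carleson-measure characterization of $T^\infty_q(Z)$ (noted in the paper right after its definition) playing the role that the $L^\infty$ norm would play in the finite case.
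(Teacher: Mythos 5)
You should first note a mismatch of premise: the paper does not prove this lemma at all — it is quoted verbatim from Miihkinen--Pau--Per\"al\"a--Wang \cite{Mi}, where the proof is a sequence-space adaptation of the Cohn--Verbitsky factorization of tent spaces. So the comparison is with that known proof. Your easy direction is essentially right, including the correct observation that when $p_i=\infty$ the cone-sum H\"older estimate is unavailable and one must pass through the Carleson-measure description of $T^\infty_{q_i}(Z)$; one caveat is that the regions $I(z_j)$ do \emph{not} have bounded overlap across scales (every $\zeta\in\mathbb{S}_n$ lies in infinitely many of them), so any step that ``swaps the $j$-sum against the averaging integrals'' must lean on the weights $|\lambda_j|^q$, not on Lemma~\ref{covering} alone.

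The genuine gap is in the hard direction, and it is structural, not a matter of rigor. Your two requirements on the weights are inconsistent in a way that matters: since $p/p_1+p/p_2=1=q/q_1+q/q_2$, the only scaling compatible with $\gamma_j\beta_j=\lambda_j$ and with membership in the target spaces is $H_i(z_j)\approx\bigl(\mathrm{avg}_{I(z_j)}F^s\bigr)^{(p/p_i-q/q_i)/s}$, and the exponents $p/p_1-q/q_1$ and $p/p_2-q/q_2$ sum to zero. So after normalizing $H_1H_2=1$, exactly one factor carries a \emph{negative} power of the local average of $F^s$. For that factor your central estimate fails: the Hardy--Littlewood bound $\mathrm{avg}_{I(z_j)}F^s\le M(F^s)(\zeta)$ for $z_j\in\Gamma(\zeta)$ runs the wrong way for negative powers, and the substitute lower bound $\mathrm{avg}_{I(z_j)}F^s\ge|\lambda_j|^s$ (valid because $F\ge|\lambda_j|$ on $I(z_j)$) only yields
\begin{equation*}
A_{q_i}(\beta)(\zeta)^{q_i}\le\sum_{z_j\in\Gamma(\zeta)}|\lambda_j|^{pq_i/p_i},
\end{equation*}
where in the bad case $pq_i/p_i<q$; since $\sum_j a_j^{u}\ge\bigl(\sum_j a_j^{q}\bigr)^{u/q}$ for $u<q$, this sum is \emph{not} dominated by $F(\zeta)^{pq_i/p_i}$, and no pointwise-on-cones argument can close it. (Your maximal-function scheme does verify the factor with the favorable sign: there $A_{q_1}(\gamma)^{q_1}\le F^q\,[M(F^s)]^{q_1(p/p_1-q/q_1)/s}$, and H\"older plus the strong type of $M$ on $L^{p/s}$, $s<p$, gives $\|\gamma\|_{T^{p_1}_{q_1}(Z)}\lesssim\|F\|_{L^p}^{p/p_1}$ — that half is fine.) The missing idea is that the stopping-time/Whitney decomposition of the level sets $\{F>2^k\}$ (or a duality argument in the spirit of Lemma~\ref{dual}), which you reserve only for the degenerate cases $p_i=\infty$ or $q_i=\infty$, is in fact needed for the bad factor in the generic finite-exponent case as well; that is where the strict hypothesis $p_iq_i>pq$ earns its keep, making the geometric sums over levels converge, exactly as in the Cohn--Verbitsky argument that \cite{Mi} adapts.
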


The following result can be found in \cite{Mi}, Lemma 2 and Lemma 3.

\begin{lem}\label{3}
Let $0<p,q<\infty$. There exists $r_0\in(0,1)$ so that, if $0<r<r_0$ and $Z=\{z_j\}$ is an $r$-lattice, then
$$\int_{\mathbb{S}_n}\sup _{z\in\Gamma(\zeta)}|f(z)|^p(1-|z|)^\beta\,d\sigma(\zeta) 
\lesssim\int_{\mathbb{S}_n}\sup_{z_j\in\Gamma(\zeta)}|f(z_j)|^p(1-|z_j|)^{\beta}\,d\sigma(\zeta)$$ 
for $\beta\ge0$, and 
$$\int_{\mathbb{S}_n}\left(\int_{\Gamma(\zeta)}|f(z)|^p\,dV_\beta(z)\right)^q\,d\sigma(\zeta) 
\lesssim\int_{\mathbb{S}_n}\left(\sum_{z_j\in\Gamma(\zeta)}|f(z_j)|^p(1-|z_j|)^{n+1+\beta}\right)^q\,d\sigma(\zeta),$$
for $\beta>-n-1$, 
whenever $f\in H(\mathbb{B}_n)$ is such that the left-hand side is finite.
\end{lem}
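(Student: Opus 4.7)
The plan is to prove both inequalities by discretizing the continuous quantities on $\Gamma(\zeta)$ via the $r$-lattice $\{z_j\}$, using the covering properties of Lemma \ref{covering}, the sub-mean value property for the subharmonic function $|f|^p$, and the geometric fact (from Lemma \ref{l51}) that if $z\in D(z_j,r)$ and $z\in\Gamma_\alpha(\zeta)$, then $z_j\in\Gamma_{\alpha'}(\zeta)$ for a slightly larger aperture. Because tent-space quantities are invariant under aperture changes (\cite{CMS1985,PR2015}), we can freely pass between $\Gamma(\zeta)$, $\widetilde\Gamma(\zeta)$ and $\widetilde{\widetilde\Gamma}(\zeta)$ underneath the outer integration in $\zeta$.

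For the first inequality, fix $\zeta$ and $z\in\Gamma(\zeta)$. Pick $z_j$ with $z\in D(z_j,r)$ via Lemma \ref{covering}(i); then $(1-|z|)^\beta\asymp(1-|z_j|)^\beta$ and $z_j\in\widetilde\Gamma(\zeta)$. Applying Lemma \ref{dmk} with $k=m=0$ (the sub-mean value for $|f|^p$) on $D(z,r)\subset D(z_j,2r)$ yields $|f(z)|^p\lesssim\int_{D(z_j,2r)}|f|^p\,d\tau$. For $r$ below a threshold $r_0$, a finite subcover of $D(z_j,2r)$ by Bergman balls $D(z_k,r/4)$ furnished by Lemma \ref{covering}(iii), together with iterated sub-mean value and holomorphy, dominates this integral by $\sup_k|f(z_k)|^p$ with the surrounding lattice points $z_k$ lying in $\widetilde{\widetilde\Gamma}(\zeta)$. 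Taking the supremum in $z$, integrating in $\zeta$, and absorbing the aperture change gives the claim; the hypothesis $\beta\ge 0$ plays no essential role beyond the comparability $(1-|z|)^\beta\asymp(1-|z_k|)^\beta$.

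For the second inequality, cover $\Gamma(\zeta)$ by $\{D(z_j,r):z_j\in\widetilde\Gamma(\zeta)\}$ with bounded overlap $N$ (Lemma \ref{covering}(iii)); this gives
\[
\int_{\Gamma(\zeta)}|f(z)|^p(1-|z|)^\beta\,dV(z)\;\lesssim\;\sum_{z_j\in\widetilde\Gamma(\zeta)}(1-|z_j|)^\beta\int_{D(z_j,r)}|f|^p\,dV.
\]
The central step is the sampling estimate
\[
\int_{D(z_j,r)}|f|^p\,dV\;\lesssim\;V(D(z_j,r))\sum_{z_k\in D(z_j,2r)}|f(z_k)|^p,
\]
valid for $r<r_0$, obtained by iterating sub-mean value on the finite subcover of $D(z_j,2r)$ by balls $D(z_k,r/4)$ from Lemma \ref{covering}(iii). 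Since $V(D(z_j,r))\asymp(1-|z_k|)^{n+1}$ and each $z_k$ meets only boundedly many $D(z_j,2r)$, the double sum collapses to a single sum over $z_k\in\widetilde{\widetilde\Gamma}(\zeta)$ producing the weight $(1-|z_k|)^{n+1+\beta}$. Raising to the $q$-th power, integrating in $\zeta$, and invoking aperture-invariance completes the proof.

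The main obstacle is the sampling bound in the second inequality: a generic subharmonic function satisfies sub-mean value in the opposite direction (sum of point values $\le$ integral), so this step genuinely uses holomorphy of $f$ along with the smallness of $r$. For $r<r_0$ any $z\in D(z_j,r)$ is surrounded by lattice points on which the distribution of $|f|^p$ is rigid enough to reverse the usual sub-mean value up to a constant; this is the local holomorphic sampling principle, and the detailed execution follows Lemmas 2 and 3 of \cite{Mi}.
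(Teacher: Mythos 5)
There is a genuine gap, and it sits exactly at the step you yourself flag as ``the main obstacle''. Your central sampling estimate
\[
\int_{D(z_j,r)}|f|^p\,dV \;\lesssim\; V(D(z_j,r))\sum_{z_k\in D(z_j,2r)}|f(z_k)|^p
\]
is false as a pointwise (fixed $j$) statement: by Lemma~\ref{covering} only a bounded number $N$ of lattice points $z_k$ lie in $D(z_j,2r)$, so a polynomial vanishing at those finitely many points but not identically zero on $D(z_j,r)$ makes the right-hand side zero while the left-hand side is positive. The same objection kills your first part, where you claim $\int_{D(z_j,2r)}|f|^p\,d\tau\lesssim\sup_k|f(z_k)|^p$ over the lattice points in the ball. ``Iterating the sub-mean value property'' cannot repair this: subharmonicity of $|f|^p$ only bounds point values by integrals, never integrals by point values, and your closing appeal to a ``local holomorphic sampling principle \dots the detailed execution follows Lemmas 2 and 3 of \cite{Mi}'' is not an argument but a citation --- indeed the very citation the paper itself gives in lieu of a proof (the paper states the lemma is taken from \cite{Mi}), so as a self-contained proof nothing has been established.

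The mechanism that actually works, and which your outline never engages, is oscillation plus absorption. A Cauchy-type estimate gives, for $z\in D(z_j,r)$, $|f(z)|^p\le C|f(z_j)|^p+Cr^{p}\sup_{w\in D(z_j,2r)}|f(w)|^p$, the essential feature being that the coefficient of the error term tends to $0$ as $r\to0$. Taking suprema (resp.\ summing, after converting $\sup_{D(z_j,2r)}|f|^p$ back to $(1-|z_j|)^{-(n+1)}\int_{D(z_j,3r)}|f|^p\,dV$ by sub-mean value and using bounded overlap) over $z_j\in\widetilde{\Gamma}(\zeta)$, the error contribution is dominated by $Cr^p$ times the original left-hand side over a larger aperture; aperture invariance of these tent-space quantities then bounds its $L^p(\mathbb{S}_n)$- resp.\ $L^q(\mathbb{S}_n)$-contribution by $C'r^p$ times the left-hand side of the lemma, which for $r<r_0$ with $C'r_0^p<1/2$ can be absorbed --- \emph{provided the left-hand side is finite}. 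This is precisely why the statement carries both the threshold $r_0$ and the finiteness hypothesis; in your proposal neither plays any genuine role (you invoke $r<r_0$ only as vague ``rigidity'' and never use finiteness of the left-hand side), which is the telltale sign that the actual content of the lemma is missing from the argument.
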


\begin{lem}\label{l4.1}\cite[Proposition 7.2]{JP2016}
Let $Z=\{a_j\}$ be a separated sequence in $\mathbb{B}_n$ and let $0<p<\infty$.
If $b>n \max(2/p,1)$, then the operator $S_Z:\,T^p_2(Z)\rightarrow H^p$
defined by
\[
S_Z(\{\lambda_j\})(z)=\sum_j \lambda_j\frac{(1-|a_j|^2)^b}{(1-\langle z,a_j\rangle)^b}
\]
is bounded.
\end{lem}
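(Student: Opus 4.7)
The plan is to establish $\|S_Z\lambda\|_{H^p}\lesssim \|\lambda\|_{T^p_2(Z)}$ via the atomic decomposition of the tent sequence space. First, I would decompose $\lambda = \sum_k c_k \lambda^{(k)}$, where each atom $\lambda^{(k)}$ is supported on $\{j : a_j \in \Lambda(Q_k)\}$ for some $d$-ball $Q_k \subset \mathbb{S}_n$ and satisfies $\|\lambda^{(k)}\|_{T^2_2(Z)} \lesssim \sigma(Q_k)^{1/2 - 1/p}$, with $\sum_k|c_k|^p \lesssim \|\lambda\|_{T^p_2(Z)}^p$. By the quasi-triangle inequality in $H^p$, it then suffices to show $\|f_k\|_{H^p} \le C$ uniformly in $k$, where $f_k = S_Z\lambda^{(k)}$.

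For a fixed atom $\lambda^{(k)}$ supported in $\Lambda(Q_k)$, I would estimate $\|f_k\|_{H^p}^p$ via the admissible-maximal-function characterization $\|f_k\|_{H^p} \asymp \|f_k^*\|_{L^p(\mathbb{S}_n)}$ (Lemma 2.2), splitting the integral over $\mathbb{S}_n$ into an enlarged tent base $\widetilde{Q}_k$ and its complement. On $\widetilde{Q}_k$, the Cauchy--Schwarz inequality together with the disjointness of the Bergman balls $D(a_j,r/4)$ (Lemma 2.15) would yield a contribution of size $\sigma(\widetilde{Q}_k)^{1-p/2}\|\lambda^{(k)}\|_{T^2_2(Z)}^p\lesssim 1$. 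For $\zeta \in \mathbb{S}_n\setminus \widetilde{Q}_k$, the admissible region $\Gamma(\zeta)$ lies far from the support of the atom, so I would use the pointwise bound $|k_j(z)|\lesssim (1-|a_j|^2)^b/|1-\langle z,a_j\rangle|^b$ combined with standard Forelli--Rudin integral estimates, aggregating across $j$ by exploiting that $\{a_j\}$ is separated (so one can pass from a sum to an integral by distributing each $a_j$-contribution to $D(a_j,r/4)$). This tail step is where the hypothesis $b > n\max(2/p,1)$ enters.

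The main obstacle is precisely this tail estimate on $\mathbb{S}_n\setminus\widetilde{Q}_k$, which requires careful control of the interaction between $\Gamma(\zeta)$ and the positions of the points $a_j \in \Lambda(Q_k)$. The threshold $b > n\max(2/p,1)$ is sharp and reflects two regimes: for $p \le 2$, the condition $b > 2n/p$ guarantees that the $H^p$-scale $\|k_j\|_{H^p}\asymp(1-|a_j|^2)^{b-n/p}$ couples correctly with the $\ell^2$-scaling of the atom, while for $p > 2$ the milder condition $b > n$ suffices since $T^p_2(Z)$ is a genuine Banach space and a duality argument (via Lemma 3.1) becomes available, reducing boundedness of $S_Z$ to a standard Carleson-measure estimate on its formal adjoint.
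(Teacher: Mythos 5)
The paper does not prove Lemma~\ref{l4.1} at all: it is quoted verbatim from \cite[Proposition 7.2]{JP2016}, so there is no internal argument to compare against and your sketch must stand on its own. As written it has a genuine gap in the range $1<p\le 2$. The atomic machinery you invoke --- decomposing $\lambda=\sum_k c_k\lambda^{(k)}$ into tent atoms with $\sum_k|c_k|^p\lesssim\|\lambda\|_{T^p_2(Z)}^p$ and then summing via the quasi-triangle inequality $\|\sum_k c_kf_k\|_{H^p}^p\le\sum_k|c_k|^p\|f_k\|_{H^p}^p$ --- is available only for $0<p\le 1$: the Coifman--Meyer--Stein atomic decomposition of tent spaces (and of their discrete analogues) holds precisely for $p\le 1$, and for $p>1$ the space $H^p$ is normed, so the $p$-power triangle inequality is false and $\ell^p$ control of the coefficients with $p>1$ gives no control of $\|\sum_k c_k f_k\|_{H^p}$. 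Your case split (``atoms for $p\le 2$, duality for $p>2$'') therefore leaves $1<p\le 2$ unproven. The standard repair is to run the duality argument for \emph{all} $1<p<\infty$: writing $b=n+t$ with $t>0$, one has $\langle S_Z\lambda,g\rangle\asymp\sum_j\lambda_j\,\overline{\mathcal{R}^{t}g(a_j)}\,(1-|a_j|^2)^{n+t}$ for $g\in H^{p'}$, and by Lemma~\ref{dual} it suffices to show that $\{\mathcal{R}^tg(a_j)(1-|a_j|^2)^t\}\in T^{p'}_2(Z)$ with norm $\lesssim\|g\|_{H^{p'}}$, which follows from subharmonicity on the balls $D(a_j,r)$ together with the area description of $H^{p'}$ (Lemma~\ref{hp}); the atomic argument is then only needed for $p\le1$.

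Two further points inside the per-atom estimate need attention. First, your ``local'' bound on $\widetilde{Q}_k$ silently uses $\|S_Z\lambda^{(k)}\|_{H^2}\lesssim\|\lambda^{(k)}\|_{T^2_2(Z)}$, which is the case $p=2$ of the lemma itself; Cauchy--Schwarz plus disjointness of the balls $D(a_j,r/4)$ does not deliver it. One needs almost-orthogonality of the kernels, e.g. the estimate $|\langle k_i,k_j\rangle_{H^2}|\lesssim(1-|a_i|)^b(1-|a_j|)^b\,|1-\langle a_i,a_j\rangle|^{n-2b}$ followed by a Schur test exploiting the separation of $\{a_j\}$ (this is where $b>n$ is genuinely used). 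Second, the asserted asymptotics $\|k_j\|_{H^p}\asymp(1-|a_j|^2)^{b-n/p}$ is incorrect: by the Forelli--Rudin estimates, $\|(1-\langle\cdot,a_j\rangle)^{-b}\|_{H^p}\asymp(1-|a_j|)^{n/p-b}$ when $bp>n$, so $\|k_j\|_{H^p}\asymp(1-|a_j|)^{n/p}$, and the requirement hiding here is $b>n/p$, not $b>2n/p$; relatedly, your claim that the threshold $b>n\max(2/p,1)$ is \emph{sharp} is unsupported --- it is a sufficient hypothesis as used in the paper and in \cite{JP2016}, and nothing in your argument addresses necessity. The tail estimate over $\mathbb{S}_n\setminus\widetilde{Q}_k$ is fine in outline: for $\zeta$ at $d$-distance $t$ from the center of $Q_k$ one indeed has $|1-\langle z,a_j\rangle|\gtrsim t^2$ for $z\in\Gamma(\zeta)$ and $a_j\in\Lambda(Q_k)$, and the resulting integral converges under the stated hypothesis.
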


The following result \cite{AB1988, FS1972, JP2016}
that gives an equivalent definition of Hardy spaces by tent spaces
is the version of the famous Calder\'on area theorem \cite{AC1965} for $\B$.

\begin{lem}\label{hp}
Let $0<p<\infty$ and $g\in H(\B)$. Then $g\in H^p$ if and only if
$\mathcal{R}g\in HT^p_{2,1-n}$. Moreover, if $g(0)=0$ then
$$
\|g\|_{H^p}^p\asymp \|\mathcal{R}g\|_{HT^p_{2,1-n}}^p
=\int_{\mathbb{S}_n}\left(\int_{\Gamma(\zeta)}|\mathcal{R}g(z)|^2(1-|z|)^2\,d\tau(z)\right)^{p/2}\,d\tau(\zeta).
$$
\end{lem}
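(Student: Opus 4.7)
The plan is to invoke the classical Calder\'on area theorem in the form adapted to the unit ball, following \cite{AB1988, FS1972, JP2016}. Since the statement is an $L^p$-norm equivalence between $g$ and the area function
$$
S(g)(\zeta)=\left(\int_{\Gamma(\zeta)}|\mathcal{R}g(z)|^2(1-|z|)^2\,d\tau(z)\right)^{1/2},
$$
I would first simplify the measure: because $d\tau(z)=dV(z)/(1-|z|^2)^{n+1}$, one has $(1-|z|)^2\,d\tau(z)\asymp (1-|z|)^{1-n}\,dV(z)$, so the integrand matches the standard Littlewood--Paley weight on $\B$.

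For the base case $p=2$, I would apply Fubini to interchange the integrals over $\mathbb{S}_n$ and $\Gamma(\zeta)$, using the fact (see \eqref{ei(z)}) that $\sigma(I(z))\asymp (1-|z|)^n$, to obtain
$$
\|S(g)\|_{L^2(\mathbb{S}_n)}^2\asymp\int_{\B}|\mathcal{R}g(z)|^2(1-|z|)\,dV(z).
$$
The right-hand side is, for $g$ with $g(0)=0$, a standard Littlewood--Paley expression comparable to $\|g\|_{H^2}^2$ (for instance by expanding $g$ in homogeneous polynomials and applying Parseval on each sphere $|z|=r$).

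For general $0<p<\infty$, I would reduce matters to comparisons with the admissible maximal function $g^\ast$ via good-$\lambda$ inequalities. Concretely, one proves that for any $\lambda>0$ and small $\gamma>0$,
$$
\sigma\bigl(\{S(g)>2\lambda,\ g^\ast\le\gamma\lambda\}\bigr)\lesssim \gamma^2\,\sigma\bigl(\{S(g)>\lambda\}\bigr),
$$
by fixing an open set $O=\{S(g)>\lambda\}$ and working inside its tent $\Lambda(O)$, where subharmonicity of $|\mathcal{R}g|^2$ plus the size constraint from $g^\ast\le\gamma\lambda$ on the boundary give an $L^2$ bound on $\mathcal{R}g$ that converts (via the $p=2$ case localized to tents) into the desired estimate. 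Integrating in $\lambda$ and combining with Lemma \ref{l2.41} yields $\|S(g)\|_{L^p}\lesssim\|g^\ast\|_{L^p}\lesssim\|g\|_{H^p}$.

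For the reverse inequality $\|g\|_{H^p}\lesssim \|S(g)\|_{L^p}$, I would use the dual good-$\lambda$ inequality controlling $g^\ast$ by $S(g)$ (again via a tent/stopping-time argument together with the $L^2$ identity of the first step), or alternatively the atomic/factorization approach of \cite{JP2016} which expresses $g$ through an atomic decomposition of $S(g)$ via Lemma \ref{l4.1}. The main technical obstacle is the good-$\lambda$ step: one must carefully exploit the non-isotropic Kor\'anyi geometry (the apertures $\Gamma,\widetilde{\Gamma},\widetilde{\widetilde{\Gamma}}$ must be chosen so that tents and approach cones nest correctly) and the subharmonicity of $|\mathcal{R}g|^2$ on the ball, rather than the Euclidean geometry available in \cite{FS1972}. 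Since all of this is established in \cite{AB1988, JP2016}, the argument reduces to citing those sources after the weight identification $(1-|z|)^2 d\tau\asymp (1-|z|)^{1-n}\,dV$.
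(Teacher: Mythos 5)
Your proposal is correct and is essentially the paper's own treatment: the paper gives no internal proof of Lemma \ref{hp}, presenting it as the unit-ball version of Calder\'on's area theorem quoted from \cite{AB1988, FS1972, JP2016}, which is exactly where you land after your (accurate) weight identification $(1-|z|)^2\,d\tau(z)\asymp(1-|z|)^{1-n}\,dV(z)$. Your sketch of the internals --- Fubini plus a Littlewood--Paley computation for $p=2$, then the good-$\lambda$ comparison of the area function with $g^\ast$ using Lemma \ref{l2.41} --- is precisely the standard argument carried out in those references, so deferring to them is legitimate and nothing further is needed.
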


\section{Boundedness}

In this section we prove Theorem~\ref{thm1.1}.
Due to the nature of the proofs, we break down this theorem into three parts.
First, we consider the case $0<p<q<\infty$, $s>0$ and the case $0<p=q<\infty$, $s\ge 2$. 

\begin{thm}\label{thm1.1a}
Let $k$ be a positive integer, let $0<p<q<\infty$, $s>0$; or $0<p=q<\infty$, $s\ge 2$. 
Fix an $r\in(0,1)$. 
Let $\mu$ be a positive Borel measure on $\mathbb{B}_n$ that is finite
on compact subsets of $\mathbb{B}_n$. 
Then 
$\A_{\,\mu,\,s}^k:H^p\rightarrow L^q(\mathbb{S}_n)$
is bounded if and only if $\mu$ is an $(ks/n+1+s/p-s/q)$-Carleson measure, that is,
\begin{equation}\label{t5.41}
\sup_{z\in\B}\frac{\mu(D(z,r))}{(1-|z|)^{ks+(1+s/p-s/q)n}}<\infty.
\end{equation}
\end{thm}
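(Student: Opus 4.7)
The plan is to prove the two directions of Theorem~\ref{thm1.1a} separately, with sufficiency split into the distinct regimes $p=q,\,s\ge 2$ and $p<q,\,s>0$. The main obstacle is the $p<q$ sufficiency, since a coarse pointwise bound $\mathcal{A}_{\mu,s}^k f\lesssim\tilde f^*$ would only yield $\|\mathcal{A}_{\mu,s}^k f\|_{L^q}\lesssim\|f\|_{H^q}$, which is insufficient when $p<q$; the integral structure of the area operator must be retained through a tent sequence-space argument.

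For necessity, I would test the boundedness on the family $f_a(z)=(1-|a|^2)^{t-n/p}/(1-\langle z,a\rangle)^{t}$ for $a\in\B$ and $t>n/p$ sufficiently large. Standard estimates give $\|f_a\|_{H^p}\asymp 1$ and, computing termwise from the series definition of $\mathcal R^k$, $|\mathcal{R}^k f_a(z)|\asymp(1-|a|)^{-k-n/p}$ for $z\in D(a,r)$. Since $D(a,r)\subset\Gamma(\zeta)$ for every $\zeta$ in the cap $Q_a$ (with $\sigma(Q_a)\asymp(1-|a|)^n$),
\[
\mathcal{A}_{\mu,s}^k(f_a)(\zeta)^s\gtrsim\frac{\mu(D(a,r))}{(1-|a|)^{ns/p+ks+n}},\qquad \zeta\in Q_a,
\]
and raising to the $q/s$-th power, integrating over $Q_a$, and using the hypothesis $\|\mathcal{A}_{\mu,s}^k f_a\|_{L^q}^q\lesssim\|f_a\|_{H^p}^q\asymp 1$ yield the claimed Carleson condition.

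For sufficiency, applying Lemma~\ref{dmk} with $m=k$ to obtain $|\mathcal{R}^k f(z)|^s\lesssim\int_{D(z,r)}|\mathcal{R}^k f(w)|^s\,d\tau(w)$, inserting into the area integral, applying Fubini (using $w\in D(z,r)\iff z\in D(w,r)$), and invoking the Carleson hypothesis $\mu(D(w,r))\lesssim (1-|w|)^{ks+n+\alpha}$ with $\alpha:=n(s/p-s/q)\ge 0$ yield the master estimate
\[
\mathcal{A}_{\mu,s}^k(f)(\zeta)^s\lesssim\int_{\widetilde\Gamma(\zeta)} h(w)^s\,(1-|w|)^{\alpha}\,d\tau(w),
\]
where $h(w):=|\mathcal{R}^k f(w)|(1-|w|)^k$. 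Lemma~\ref{dmk} combined with Lemma~\ref{l2.41} also furnishes the pointwise bound $h(w)\le C\tilde f^*(\zeta)$ for $w\in\widetilde\Gamma(\zeta)$.

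When $p=q$, $s\ge 2$ (so $\alpha=0$), I would write $h^s\le(\tilde f^*)^{s-2}h^2$, raise to the power $p/s$, and apply H\"older on $\mathbb{S}_n$ with conjugate exponents $s/(s-2)$ and $s/2$; Lemma~\ref{l2.41} bounds the first factor by $\|f\|_{H^p}^{p(s-2)/s}$, and the higher-order Calder\'on area theorem (the $\mathcal{R}^k$-version of Lemma~\ref{hp}) bounds the second by $\|f\|_{H^p}^{2p/s}$, giving $\|\mathcal{A}_{\mu,s}^k f\|_{L^p}\lesssim\|f\|_{H^p}$. When $p<q$, I would discretize the master estimate on an $r$-lattice $\{z_j\}$ via Lemma~\ref{3} and close by combining the tent sequence-space factorization (Lemma~\ref{fact}) with the discretized area theorem: factor
\[
h(z_j)^s(1-|z_j|)^{n+1+\alpha}=\bigl(h(z_j)^2(1-|z_j|)^{n+1}\bigr)^{s/2}\cdot(1-|z_j|)^{\alpha+(n+1)(1-s/2)},
\]
where the first factor is controlled in $T^{p/2}_1(\{z_j\})$ with norm $\asymp\|f\|_{H^p}^2$ by the discretized Calder\'on area theorem, and the weight factor lies in a complementary tent sequence space whose summability follows from $\alpha>0$ via a geometric series in the dyadic shells $\{(1-|z_j|)\asymp 2^{-j}\}$. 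When $s>2$ renders the weight exponent $\alpha+(n+1)(1-s/2)$ potentially negative, one first uses $h\le\tilde f^*$ to absorb the excess $h^{s-2}$ before factoring, reducing to the $s=2$ situation; the precise pairing of exponents in this factorization is the principal technical obstacle.
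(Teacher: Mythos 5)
Your necessity argument and your $p=q$, $s\ge 2$ sufficiency are essentially the paper's own proofs. The test family is the same (the paper takes $f_{a,p}(z)=(1-|a|^2)^{mn}/(1-\langle a,z\rangle)^{(1/p+m)n}$ and integrates over $I(a)$, exactly your cap computation), and for $p=q$, $s\ge2$ the paper also splits $|\mathcal{R}^kf|^s=|\mathcal{R}^kf|^{s-2}\,|\mathcal{R}^kf|^{2}$ and applies H\"older on $\mathbb{S}_n$ with exponents $s/(s-2)$ and $s/2$, Lemma~\ref{l2.41}, and the area theorem. The one difference there is cosmetic: you invoke an unproved ``higher-order Calder\'on area theorem,'' whereas the paper never needs it — it applies Lemma~\ref{dmk} with $m=1$ inside the cone integral to replace $|\mathcal{R}^kf(z)|^2$ by $(1-|z|)^{-2(k-1)}\int_{D(z,r)}|\mathcal{R}f(w)|^2\,d\tau(w)$, so that only the first-order Lemma~\ref{hp} is used. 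The one-sided inequality you need is recoverable by the same device, so this is not a gap.

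The $p<q$ case, however, contains a genuine gap. You propose to factor $\lambda_j=h(z_j)^s(1-|z_j|)^{n+1+\alpha}$ as $\bigl(h(z_j)^2(1-|z_j|)^{n+1}\bigr)^{s/2}$ times a pure weight and close via Lemma~\ref{fact}. Count the exponents: the area-theorem factor lies in $T^{p/s}_{2/s}(Z)$ with norm $\lesssim\|f\|_{H^p}^s$, while the conclusion requires $\{\lambda_j\}\in T^{q/s}_{1}(Z)$. Any H\"older/factorization pairing would need a complementary space $T^{p_2}_{q_2}(Z)$ with $s/q=s/p+1/p_2$, i.e.\ $1/p_2=s/q-s/p<0$ because $q>p$: no such space exists, and Lemma~\ref{fact} only produces products with \emph{smaller} outer exponent. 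Your geometric-series observation — that $\{(1-|z_j|)^{\gamma}\}$ with $\gamma>0$ has uniformly bounded cone sums (there are $O(1)$ lattice points per dyadic shell per cone), hence lies in $T^{u}_{v}(Z)$ for every $u\le\infty$ — only lets H\"older land you in $T^{P}_{1}(Z)$ with $P\le p/s<q/s$, which on the finite measure $\sigma$ is strictly \emph{weaker} than membership in $T^{q/s}_1(Z)$. Multiplication by a fixed weight sequence can never raise the outer Lebesgue exponent from $p$ to $q$; the integrability gain must come from function theory, namely a Hardy--Littlewood type embedding. This is exactly what the paper uses after the master estimate: keeping $|f|$ itself (Lemma~\ref{dmk} with $m=0$, not $m=k$) so that $(\mathcal{A}^k_{\mu,s}f(\zeta))^s\lesssim\int_{\widetilde{\widetilde{\Gamma}}(\zeta)}|f(w)|^s(1-|w|)^{\alpha}\,dV(w)$, it picks an intermediate $t\in(p,q)$ and runs the chain $H^p\subseteq \mathrm{A}^t_{(t/p-1)n-1}\subseteq HT^q_{s,\alpha}$ (Theorem 4.48 in \cite{KZ2005}, Lemma 3.1 in \cite{WZ2}, and $f\in \mathrm{A}^t_{\beta}\iff\mathcal{R}f\in \mathrm{A}^t_{\beta+t}$), with continuity by the closed graph theorem. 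Your ``principal technical obstacle'' is therefore not a matter of pairing exponents more carefully — no admissible pairing exists — and the repair is to replace the factorization step by this embedding chain.
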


\begin{proof}
Suppose $\A_{\,\mu,\,s}^k:H^p\rightarrow L^q(\mathbb{S}_n)$ is bounded.
Fix $a\in\B$ and $0<r<1$. Let
$$
f_{a,p}(z)=\frac{(1-|a|^2)^{mn}}{(1-\langle a,z\rangle)^{(1/p+m)n}}.
$$
It can be easily checked that $f_{a,p}\in H^{p}$,
$\|f_{a,p}\|_{H^{p}}\asymp 1$ (uniformly for $a\in\mathbb{B}_n$).
Since $|1-\langle a,z\rangle|\asymp (1-|z|)\asymp(1-|a|^2)$ for $z\in D(a,r)$,
we get
$$
\left|\mathcal{R}^kf_{a,p}(z)\right|
\asymp\frac{(1-|a|^2)^{mn}}{|1-\langle a,z\rangle|^{(1/p+m)n+k}}
\asymp\frac1{(1-|a|^2)^{n/p+k}}.
$$
By the above estimate, (\ref{ei(z)}) and Lemma \ref{l51}, we get
\begr\label{t5.42}
&~&\left(\frac{\mu(D(a,r))}{(1-|a|)^{ks+(1+s/p-s/q)n}}\right)^{q/s}\\
&~&\qquad=(1-|a|)^n\left(\frac{\mu(D(a,r))}{(1-|a|)^{ks+(1+s/p)n}}\right)^{q/s}\nonumber\\
&~&\qquad\asymp\int_{I(a)}\left(\int_{D(a,r)}\left|\mathcal{R}^k f_{a,p}(z)\right|^s
\frac{d\mu(z)}{(1-|z|)^n}\right)^{q/s}d\sigma(\zeta)\nonumber\\
&~&\qquad\leq\int_{\mathbb{S}_n}\left(\int_{\widetilde{\Gamma}(\zeta)}\left|\mathcal{R}^kf_{a,p}(z)\right|^s
\frac{d\mu(z)}{(1-|z|)^n}\right)^{q/s}d\sigma(\zeta)\nonumber\\
&~&\qquad\asymp\|\A_{\,\mu,\,s}^kf_{a,p}\|_{L^q(\mathbb{S}_n)}^q
\lesssim \|\A_{\,\mu,\,s}^k\|_{H^p\rightarrow L^q(\mathbb{S}_n)}^q,\nonumber
\endr
where $I(a)=I_{\alpha}(a)=\{\zeta \in \mathbb S_n: a \in \Gamma_\alpha(\zeta)\},\ \alpha>2$.
Hence (\ref{t5.41}) is true.

Conversely, 
suppose that $\mu$ is an $(ks/n+1+s/p-s/q)$-Carleson measure.
we first consider the case $p<q$. 
Assume that (\ref{t5.41}) holds.
Let $r>0$ and $\{a_j\}_{j=1}^\infty$ be an $r$-lattice. 
By Lemma~\ref{dmk} with $m=0$ and Lemma~\ref{l51}, we have  
\begr\label{11}
(\A^k_{\mu, s}f(\zeta))^s
&\leq &\sum_{a_j \in \widetilde{\Gamma}(\zeta)}\int_{D(a_j, r)}|\mathcal{R}^kf(z)|^s\frac{d\mu(z)}{(1-|z|)^n}\\
&\lesssim&\sum_{a_j \in \widetilde{\Gamma}(\zeta)}\int_{D(a_j, r)}\frac{1}{(1-|z|)^{n+1+ks}}\int_{D(z, r)}| f(w)|^s\,dV(w)
\frac{d\mu(z)}{(1-|z|)^n}\nonumber\\
&\lesssim &\sum_{a_j \in \widetilde{\Gamma}(\zeta)}\int_{D(a_j, 2r)}|f(w)|^s\,dV(w)\frac{\mu(D(a_j, r))}{(1-|a_j|)^{2n+1+ks}}\nonumber\\
&\lesssim &\sum_{a_j \in \widetilde{\Gamma}(\zeta)}\int_{D(a_j, 2r)}|f(w)|^s(1-|w|)^{\alpha}\,dV(w)
\frac{\mu(D(a_j, r))}{(1-|a_j|)^{ks+n(1+s/p-s/q)}},\nonumber
\endr
where $\alpha=ns\left(\frac{1}{p}-\frac{1}{q}\right)-n-1$.
Clearly, $\alpha>-1-n$.
By Lemma \ref{l51} and (\ref{t5.41}) we get
\begr\label{11-0}
(\A^k_{\mu, s}f(\zeta))^s
&\lesssim&\int_{\widetilde{\widetilde{\Gamma}}(\zeta)}|f(w)|^s(1-|w|)^\alpha\,dV(w).
\endr
Choose $p<t<q$, then it is easy to see that 
$$
\left(\frac{t}{q}-1\right)n-1+\frac{t(n+1+\alpha)}{s}=\left(\frac{t}{p}-1\right)n-1.
$$
It is well-known that for $p<t$,  
\begin{equation}\label{h-b}
H^p \subseteq \mathcal{A}^t_{\left(\frac{t}{p}-1\right)n-1}.
\end{equation}
See, for example, Theorem 4.48 in \cite{KZ2005}.
By Lemma 3.1 in \cite{WZ2} and the well-known fact that 
$f\in \mathrm{A}^{t}_{\alpha}$ if and only if $\mathcal{R}f\in \mathrm{A}^t_{\alpha+t}$
(see, for example, Theorem 2.16 in \cite{KZ2005}),
we get that  
\begin{equation}\label{b-ht}
\mathrm{A}^t_{\left(\frac{t}{p}-1\right)n-1}
=\mathrm{A}^t_{\left(\frac{t}{q}-1\right)n-1+\frac{t\,(n+1+\alpha)}{s}} 
\subseteq HT^q_{s,\alpha}.
\end{equation}
Combining (\ref{h-b}) and (\ref{b-ht}) we get
\begin{equation}\label{h-ht}
H^p \subseteq HT^q_{s,\alpha},
\end{equation}
and a standard application of the Closed Graph Theorem shows that 
the inclusion is continuous. Thus, by \eqref{11-0} and \eqref{h-ht},
\begin{eqnarray*}
\|\A^k_{\mu, s}f\|_{L^q}^q
&=&\int_{\mathbb S_n}\left(\A^k_{\mu, s}f(\zeta))^s\right)^{q/s}d\sigma(\zeta)\\
&\lesssim&\int_{\mathbb S_n} \left(\int_{\widetilde{\widetilde{\Gamma}}(\zeta)}|f(w)|^s(1-|w|)^\alpha dV(w)\right)^{q/s}d\sigma(\zeta)\\
&\asymp&\|f\|^q_{HT^q_{s, \alpha}}\lesssim \|f\|^q_{H^p}.
\end{eqnarray*}
Hence, $\A_{\,\mu,\,s}^k:H^p\rightarrow L^q(\mathbb{S}_n)$ is bounded.

Next, we consider the case $p=q$, $s\geq 2$. By Lemma 2.2 and Fubini's theorem, we get
\begin{eqnarray}\label{Rin}
&~&(\A^k_{\mu, s}f(\zeta))^s\\
&~&\qquad\le\sum_{a_j \in \widetilde{\Gamma}(\zeta)}\int_{D(a_j,r)}|\mathcal{R}^kf(z)|^s\frac{d\mu(z)}{(1-|z|)^n}\nonumber\\
&~&\qquad\le\sum_{a_j \in \widetilde{\Gamma}(\zeta)}\int_{D(a_j r)}
|\mathcal{R}^kf(z)|^{s-2}|\mathcal{R}^kf(z)|^{2}\frac{d\mu(z)}{(1-|z|)^n}\nonumber\\
&~&\qquad\lesssim\sum_{a_j \in \widetilde{\Gamma}(\zeta)}\int_{D(a_j, r)}
\frac{1}{(1-|z|)^{k(s-2)}}\int_{D(z,r)}|f(w)|^{s-2}\,d\tau(w)\times\nonumber\\
&~&\qquad\quad\times\frac{1}{(1-|z|)^{2(k-1)}}\int_{D(z,r)}|\mathcal{R}f(w)|^2\,d\tau(w)
\frac{d\mu(z)}{(1-|z|)^n}\nonumber\\
&~&\qquad\lesssim(f^*(\zeta))^{s-2}\sum_{a_j \in \widetilde{\Gamma}(\zeta)}\int_{D(a_j,r)}
\int_{D(z,r)}|\mathcal{R}f(w)|^2(1-|w|)^{1-n}\,dV(w)\frac{d\mu(z)}{(1-|z|)^{n+ks}}\nonumber\\
&~&\qquad=(f^*(\zeta))^{s-2}\sum_{a_j\in\widetilde{\Gamma}(\zeta)}\int_{D(a_j,2r)}
|\mathcal{R}f(w)|^2(1-|w|)^{1-n}\int_{D(w,r)}\frac{d\mu(z)}{(1-|z|)^{n+ks}}\,dV(w)\nonumber\\
&~&\qquad\lesssim(f^*(\zeta))^{s-2}\sum_{a_j\in\widetilde{\Gamma}(\zeta)}\int_{D(a_j,2r)}
|\mathcal{R}f(w)|^2(1-|w|)^{1-n}\frac{\mu(D(w,r))}{(1-|w|)^{n+ks}}\,dV(w).\nonumber
\end{eqnarray}
By (\ref{t5.41}), Lemma~\ref{l51} and Lemma~\ref{covering}, we get
\begin{eqnarray*}
(\A^k_{\mu, s}f(\zeta))^s
&\lesssim&(f^*(\zeta))^{s-2}\sum_{a_j\in\widetilde{\Gamma}(\zeta)}\int_{D(a_j,2r)}|\mathcal{R}f(w)|^2(1-|w|)^{1-n}\,dV(w)\\
&\lesssim&(f^*(\zeta))^{s-2}\int_{\widetilde{\widetilde{\Gamma}}(\zeta)}|\mathcal{R}f(w)|^2(1-|w|)^{1-n}\,dV(w)
\end{eqnarray*}
By H\"older's inequality, Lemma~\ref{l2.41}, Lemma~\ref{hp} we get
\begin{eqnarray*}
\|\A^k_{\mu, s}f\|_{L^p(\mathbb{S}_n)}^p
&\lesssim&\int_{\mathbb{S}_n}(f^*(\zeta))^{p(s-2)/s}
\left(\int_{\widetilde{\widetilde{\Gamma}}(\zeta)}|\mathcal{R}f(w)|^2(1-|w|)^{1-n}\,dV(w)\right)^{p/s}\,d\sigma(\zeta)\\
&\lesssim&\left(\int_{\mathbb{S}_n}(f^*(\zeta))^{p}\,d\sigma(\zeta)\right)^{(s-2)/s}\times\\
&~&\qquad\times\left(\int_{\mathbb{S}_n}\left(\int_{\widetilde{\widetilde{\Gamma}}(\zeta)}|\mathcal{R}f(w)|^2(1-|w|)^{1-n}\,dV(w)\right)^{p/2}\,d\sigma(\zeta)\right)^{2/s}\\
&\lesssim&\|f\|_{H^p}^{p(s-2)/s}\cdot\|f\|_{H^p}^{2p/s}=\|f\|_{H^p}^p.
\end{eqnarray*}
Thus $\A_{\,\mu,\,s}^k:H^p\rightarrow L^p(\mathbb{S}_n)$ is bounded, and the proof is complete.
\end{proof}

For the other cases, we first prove a technical lemma.

\begin{lem}\label{test-b}
Let $k$ be a positive integer, let $0<p, q ,s<\infty$ and let $0<r<1$. 
Let $\mu$ be a positive Borel measure on $\mathbb{B}_n$ that is finite
on compact subsets of $\mathbb{B}_n$. 
Let $\Phi_{\mu}(z)=\mu(D(z,r))/(1-|z|)^{n+ks}$.
If $\mathcal{A}_{\mu, s}^k: H^p \to L^q(\mathbb{S}_n)$ is bounded then
for any $r$-lattice $Z=\{z_j\}\in\B$ and any complex sequence $\lambda=\{\lambda_j\}\in T^p_2(Z)$,
we have
\begin{equation}\label{lambda-phi}
\int_{\mathbb{S}_n}\left(\sum_{z_j\in \Gamma(\zeta)}|\lambda_j|^s\Phi_{\mu,2r}(z_j)\right)^{q/s}\,d\sigma(\zeta)
\lesssim\|\A_{\,\mu,\,s}^k\|_{H^p\rightarrow L^q(\mu)}^q\|\lambda\|_{T_2^p(Z)}^q.
\end{equation}
\end{lem}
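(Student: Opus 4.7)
The plan is to establish the necessary condition by a randomized test-function argument. Given an $r$-lattice $Z = \{z_j\}$ and $\lambda \in T_2^p(Z)$, for each $t \in [0,1]$ I would set
\[
f_t(z) = \sum_{j} r_j(t)\,\lambda_j\,\frac{(1-|z_j|^2)^{b}}{(1-\langle z, z_j\rangle)^{b}}
\]
with $b > n\max(2/p,1)$, so that Lemma~\ref{l4.1} gives $\|f_t\|_{H^p} \lesssim \|\lambda\|_{T_2^p(Z)}$ uniformly in $t$ (the sequence $\{r_j(t)\lambda_j\}$ shares the $T_2^p(Z)$-norm of $\lambda$ since $|r_j(t)|=1$). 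Feeding $f_t$ into the hypothesized boundedness of $\mathcal{A}^k_{\mu,s}$ and integrating in $t$ yields
\[
\int_0^1\!\int_{\mathbb{S}_n}\!\left(\int_{\Gamma(\zeta)} |\mathcal{R}^k f_t(z)|^s \frac{d\mu(z)}{(1-|z|)^n}\right)^{\!q/s} d\sigma(\zeta)\, dt \lesssim \|\mathcal{A}^k_{\mu,s}\|^q\, \|\lambda\|_{T_2^p(Z)}^q.
\]

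The next step is to linearize the square function by combining Kahane's inequality~\eqref{002}, applied to the quasi-Banach space $X_\zeta = L^s(\Gamma(\zeta), d\mu/(1-|z|)^n)$ with exponents $q$ and $s$, with Khinchine's inequality~\eqref{001} applied pointwise in $z$ to the Rademacher sum $\mathcal{R}^k f_t(z) = \sum_j r_j(t)\lambda_j\,\mathcal{R}^k k_b(z,z_j)$, where $k_b(z,w)=(1-|w|^2)^b/(1-\langle z,w\rangle)^b$. Fubini then converts the previous display into a constant multiple of
\[
\int_{\mathbb{S}_n}\!\left(\int_{\Gamma(\zeta)}\!\left(\sum_j |\lambda_j|^2\, |\mathcal{R}^k k_b(z, z_j)|^2\right)^{\!s/2}\!\frac{d\mu(z)}{(1-|z|)^n}\right)^{\!q/s}\! d\sigma(\zeta),
\]
which is therefore bounded above by $\|\mathcal{A}^k_{\mu,s}\|^q \|\lambda\|_{T_2^p(Z)}^q$.

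To reach \eqref{lambda-phi} I would discretize. The kernel estimate $|\mathcal{R}^k k_b(z,z_j)| \asymp (1-|z_j|)^{-k}$ on $D(z_j,2r)$ yields
\[
\left(\sum_j |\lambda_j|^2 |\mathcal{R}^k k_b(z, z_j)|^2\right)^{\!s/2} \gtrsim \left(\sum_j |\lambda_j|^2(1-|z_j|)^{-2k}\chi_{D(z_j,2r)}(z)\right)^{\!s/2}.
\]
By Lemma~\ref{covering}(iii) the multiplicity of the cover $\{D(z_j,2r)\}_j$ is bounded by an integer $N$, and the equivalence of $\ell^2$ and $\ell^s$ norms on $\mathbb{R}^N$ reduces the last display to an equivalent constant multiple of $\sum_j |\lambda_j|^s(1-|z_j|)^{-ks}\chi_{D(z_j,2r)}(z)$. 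Integrating over $\Gamma(\zeta)$ against $(1-|z|)^{-n}d\mu$, using $(1-|z|)\asymp (1-|z_j|)$ on $D(z_j,2r)$ and Lemma~\ref{l51} to ensure $D(z_j,2r)\subset \Gamma(\zeta)$ whenever $z_j$ belongs to a smaller-aperture region $\Gamma_{\alpha'}(\zeta)$, leads to
\[
\int_{\Gamma(\zeta)}\!\left(\sum_j |\lambda_j|^2 |\mathcal{R}^k k_b(z, z_j)|^2\right)^{\!s/2}\!\frac{d\mu(z)}{(1-|z|)^n} \gtrsim \sum_{z_j\in\Gamma_{\alpha'}(\zeta)} |\lambda_j|^s\,\Phi_{\mu,2r}(z_j).
\]
Raising to the $q/s$-power, integrating over $\mathbb{S}_n$, and using aperture-invariance of tent-space (quasi-)norms completes the proof.

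The main obstacle I anticipate is the two-sided estimate $|\mathcal{R}^k k_b(z,z_j)|\asymp (1-|z_j|)^{-k}$ on $D(z_j,2r)$. The upper bound is routine, but the matching lower bound requires identifying the leading term in the explicit expansion of $\mathcal{R}^k[(1-\langle z,z_j\rangle)^{-b}]$ as a combination of powers of $(1-\langle z,z_j\rangle)^{-1}$ (with coefficients like $b(b+1)\cdots(b+k-1)$) and verifying that this leading term dominates on the Bergman ball; this uses standard but careful asymptotics for Cauchy-type kernels on $\mathbb{B}_n$. All other steps are routine applications of the tools assembled in Sections~2 and~3.
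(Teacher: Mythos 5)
Your proposal follows essentially the same route as the paper's own proof of Lemma~\ref{test-b}: the same Rademacher-randomized kernel test functions fed through Lemma~\ref{l4.1}, the same Kahane--Khinchine linearization to a square function, and the same diagonal extraction on the lattice using the bounded overlap from Lemma~\ref{covering} and the aperture enlargement of Lemma~\ref{l51}. Your characteristic-function and $\ell^2$--$\ell^s$ reduction is just a reformulation of the paper's device of keeping only the single term $m=j$ of the square sum on $D(z_j,2r)$; in particular only the \emph{lower} kernel bound is ever needed, since the off-diagonal terms are nonnegative and may simply be dropped.

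One inaccuracy is worth flagging: the two-sided estimate $|\mathcal{R}^k k_b(z,z_j)|\asymp(1-|z_j|)^{-k}$ on $D(z_j,2r)$, which you identify as the main obstacle, is false as stated for lattice points near the origin. Writing $u=\langle z,z_j\rangle$, one computes $\mathcal{R}^k(1-u)^{-b}=P_k(u)\,(1-u)^{-b-k}$ where $P_k$ is a polynomial with $P_k(1)=b(b+1)\cdots(b+k-1)>0$ but $P_k(0)=0$; in particular $\mathcal{R}^k k_b(0,z_j)=0$, and $0\in D(z_j,2r)$ for the central lattice points, so no uniform pointwise lower bound can hold there. The estimate \emph{is} uniform once $|1-\langle z,z_j\rangle|$ is small, hence on all of $D(z_j,2r)$ provided $1-|z_j|\le\delta_0$ for a suitable $\delta_0$ depending on $b$, $k$, $r$. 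The finitely many lattice points with $1-|z_j|>\delta_0$ must then be absorbed separately, which is routine: for each such $j$ one has $|\lambda_j|\lesssim\|\lambda\|_{T^p_2(Z)}$ (integrate the defining expression of $\|\lambda\|_{T^p_2(Z)}$ over $I(z_j)$), while $\Phi_{\mu,2r}(z_j)\lesssim\|\A_{\,\mu,\,s}^k\|_{H^p\rightarrow L^q(\mathbb{S}_n)}^s$ by the testing estimate \eqref{t5.42}, whose derivation does not use any relation among $p,q,s$. With this repair your argument is complete. It is only fair to add that the paper's own proof makes the same silent identification of $\mathcal{R}^k$ applied to the kernel with the single power $(1-\langle z_j,z\rangle)^{-nt-k}$ (the step beginning ``Hence''), so your proposal, once patched as above, is if anything more careful than the printed argument.
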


\begin{proof}
Suppose $\A_{\,\mu,\,s}^k:H^p\rightarrow L^q(\mathbb{S}_n)$ is bounded. 
Let $Z=\{a_m\}$ be an $r$-lattice in $\B$. 
For  $t>\max(2/p,1)$, let $\lambda=\{\lambda_m\}\in T_2^p(Z)$ and 
$$
S_Z(\lambda)(z)=\sum_{m}\lambda_m\left(\frac{1-|a_m|}{1-\langle a_m,z\rangle}\right)^{nt},\ z\in\B.
$$
By Lemma~\ref{l4.1}, $S_{Z}:T_2^p(Z)\rightarrow H^p$ is bounded. 
Thus,
$$
\|\A_{\,\mu,\,s}^k(S_{Z}(\lambda))\|_{L^q(\mathbb{S}_n)}^q
\leq  \|\A_{\,\mu,\,s}^k\|_{H^p\rightarrow L^q(\mathbb{S}_n)}^q\|S_{Z}(\lambda)\|_{H^p}^q
\lesssim \|\A_{\,\mu,\,s}^k\|_{H^p\rightarrow L^q(\mathbb S_n)}^q\|\lambda\|_{T^p_2(Z)}^q.
$$
Hence,
$$
\int_{\mathbb{S}_n}\left(\int_{\Gamma(\zeta)}\left|\sum_{m}\lambda_m
\frac{(1-|a_m|)^{nt}}{(1-\langle a_m,z\rangle)^{nt+k}}\right|^s\frac{d\mu(z)}{(1-|z|)^n}\right)^{q/s}d\sigma(\zeta)
\lesssim \|\A_{\,\mu,\,s}^k\|_{H^p\rightarrow L^q(\mathbb S_n)}^q\|\lambda\|_{T_2^p(Z)}^q.
$$
Replacing $\lambda_m$ by $\lambda_mr_m(t)$, where $r_m$ denotes the $k$th Rademacher function,
and integrating with respect to $t$ to obtain
\begr
&~&\int_0^1\int_{\mathbb{S}_n}\left(\int_{\Gamma(\zeta)}\left|\sum_{m}\lambda_m
\frac{(1-|a_m|)^{nt}}{(1-\langle a_m,z\rangle)^{nt+k}}r_m(t)\right|^s\frac{d\mu(z)}{(1-|z|)^n}\right)^{q/s}\,d\sigma(\zeta)\,dt\nonumber\\
&~&\qquad\lesssim \|\A_{\,\mu,\,s}^k\|_{H^p\rightarrow L^q(\mathbb S_n)}^q\|\lambda\|_{T_2^p(Z)}^q.\nonumber
\endr
By an application of Fubini's theorem, Kahane's inequality and Khinchine's inequality, we have
\begr
&~&\int_{\mathbb{S}_n}\left(\int_{\Gamma(\zeta)}\left(\sum_{m}|\lambda_m|^2
\frac{(1-|a_m|)^{2nt}}{|1-\langle a_m,z\rangle|^{2nt+2k}}\right)^{s/2}\frac{d\mu(z)}{(1-|z|)^n}\right)^{q/s}d\sigma(\zeta)\nonumber\\
&~&\qquad\lesssim\|\A_{\,\mu,\,s}^k\|_{H^p\to L^q(\mathbb S_n)}^q\|\lambda\|_{T_2^p(Z)}^q.\nonumber
\endr
Thus, by Lemma~\ref{covering} (notice that each $z$ belongs to at most $N$ of the sets $D(a_j,2r)$), we have
\begr\label{abf}
&~&\|\A_{\,\mu,\,s}^k\|_{H^p\to L^q(\mu)}^q\|\lambda\|_{T_2^p(Z)}^q\nonumber\\
&~&\qquad\gtrsim\int_{\mathbb{S}_n}\left(\sum_{j}\int_{\Gamma(\zeta)\cap D(a_j,2r)}\left(\sum_{m}|\lambda_m|^2
\frac{(1-|a_m|)^{2nt}}{|1-\langle a_m,z\rangle|^{2nt+2k}}\right)^{s/2}\frac{d\mu(z)}{(1-|z|)^n}\right)^{q/s}d\sigma(\zeta)\nonumber\\
&~&\qquad\gtrsim\int_{\mathbb{S}_n}\left(\sum_{a_j\in \Gamma(\zeta)}\left(|\lambda_j|^2
\frac{(1-|a_j|)^{2nt}}{(1-|a_j|)^{2nt+2k}}\right)^{s/2}
\frac{\mu(D(a_j,2r))}{(1-|a_j|)^n}\right)^{q/s}d\sigma(\zeta)\nonumber\\
&~&\qquad=\int_{\mathbb{S}_n}\left(\sum_{a_j\in \Gamma(\zeta)}|\lambda_j|^s
\left(\frac{\mu(D(a_j,2r))}{(1-|a_j|)^{n+ks}}\right)\right)^{q/s}\,d\sigma(\zeta)\nonumber\\
&~&\qquad=\int_{\mathbb{S}_n}\left(\sum_{a_j\in \Gamma(\zeta)}|\lambda_j|^s\Phi_{\mu,2r}(a_j)\right)^{q/s}\,d\sigma(\zeta)\nonumber.
\endr
The proof is complete.
\end{proof}

Next, we consider the case $0<p=q<\infty$, $0<s<2$.

\begin{thm}\label{thm1.1b}
Let $k$ be a positive integer, let $0<p<\infty$, $0<s<2$, and let $0<r<1$. 
Let $\mu$ be a positive Borel measure on $\mathbb{B}_n$ that is finite
on compact subsets of $\mathbb{B}_n$. 
Let $\Phi_{\mu}(z)=\mu(D(z,r))/(1-|z|)^{n+ks}$.
Then the following statements are equivalent.
\begin{itemize}
\item[(i)] $\mathcal{A}^k_{\mu,s}: H^p \rightarrow L^p(\mathbb S_n)$ is bounded.
\item[(ii)] For some (or any) $r$-lattice $Z=\{a_j\}$ in $\B$, $\{\Phi_{\mu,2r}(a_j)\}\in T^{\infty}_{2/(2-s)}(Z)$.
\item[(iii)] $\Phi_{\mu}\in T^{\infty}_{2/(2-s)}(\tau)$.
\end{itemize}
\end{thm}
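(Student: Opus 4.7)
The plan is to prove the chain of implications $\mathrm{(iii)} \Rightarrow \mathrm{(i)} \Rightarrow \mathrm{(ii)} \Rightarrow \mathrm{(iii)}$. The equivalence $\mathrm{(ii)} \Leftrightarrow \mathrm{(iii)}$ is a routine discretization/continuization argument: for $z \in D(a_j, r)$ one has $(1-|z|) \asymp (1-|a_j|)$ and $\Phi_{\mu}(z) \lesssim \Phi_{\mu, 2r}(a_j)$, and conversely, so by Lemma~\ref{covering} the discrete sum $\sum_{a_j \in \Lambda(a)} \Phi_{\mu, 2r}(a_j)^{2/(2-s)}(1-|a_j|)^n$ is comparable to the integral $\int_{\Lambda(a)} \Phi_{\mu}(z)^{2/(2-s)}(1-|z|)^n\,d\tau(z)$, and the two formulations of the tent-space condition coincide.

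For $\mathrm{(iii)} \Rightarrow \mathrm{(i)}$, fix an $r$-lattice $Z = \{a_j\}$ and apply Lemma~\ref{dmk} with $m = 1$ and exponent $2$ on each Bergman ball $D(a_j, r)$, so that $|\mathcal{R}^k f(z)|^2 \lesssim (1-|a_j|)^{-2(k-1)}\int_{D(a_j, 2r)} |\mathcal{R} f(w)|^2\,d\tau(w)$ for $z \in D(a_j, r)$. Raising to the $s/2$ power and summing over the $r$-lattice produces
\begin{align*}
(\mathcal{A}^k_{\mu, s} f(\zeta))^s
\lesssim \sum_{a_j \in \widetilde{\Gamma}(\zeta)} \Phi_{\mu, r}(a_j)\,\lambda_j^s,
\qquad \lambda_j := (1-|a_j|)\left(\int_{D(a_j, 2r)} |\mathcal{R} f(w)|^2\,d\tau(w)\right)^{1/2}.
\end{align*}
Since $s < 2$, the exponent relations $\tfrac{1}{1} = \tfrac{s}{2} + \tfrac{2-s}{2}$ and $\tfrac{1}{p/s} = \tfrac{1}{p/s} + \tfrac{1}{\infty}$ satisfy the hypotheses of Lemma~\ref{fact}, whose forward direction yields
\begin{align*}
\bigl\|\{\Phi_{\mu, r}(a_j)\,\lambda_j^s\}\bigr\|_{T^{p/s}_1(Z)}
\lesssim \bigl\|\{\Phi_{\mu, r}(a_j)\}\bigr\|_{T^{\infty}_{2/(2-s)}(Z)}\cdot\|\lambda\|_{T^p_2(Z)}^s.
\end{align*}
Converting the lattice sum defining $\lambda_j^2$ back to an integral via $(1-|a_j|) \asymp (1-|w|)$ on $D(a_j, 2r)$ and the bounded-overlap property of Lemma~\ref{covering}, the Calder\'on area theorem (Lemma~\ref{hp}) gives $\|\lambda\|_{T^p_2(Z)} \lesssim \|f\|_{H^p}$; combined with the already-established $\mathrm{(iii)} \Rightarrow \mathrm{(ii)}$, this completes $\|\mathcal{A}^k_{\mu, s} f\|_{L^p(\mathbb S_n)} \lesssim \|f\|_{H^p}$.

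For $\mathrm{(i)} \Rightarrow \mathrm{(ii)}$, apply Lemma~\ref{test-b} with $q = p$. The substitution $\beta_j = |\lambda_j|^s$ is a bijection between nonnegative sequences in $T^p_2(Z)$ and in $T^{p/s}_{2/s}(Z)$ satisfying $\|\beta\|_{T^{p/s}_{2/s}(Z)}^{p/s} = \|\lambda\|_{T^p_2(Z)}^p$, so the conclusion of Lemma~\ref{test-b} reads
\begin{align*}
\bigl\|\{\beta_j \Phi_{\mu, 2r}(a_j)\}\bigr\|_{T^{p/s}_1(Z)}
\lesssim \|\beta\|_{T^{p/s}_{2/s}(Z)}
\qquad \text{for every nonnegative } \beta \in T^{p/s}_{2/s}(Z).
\end{align*}
That is, pointwise multiplication by $\{\Phi_{\mu, 2r}(a_j)\}$ is bounded from $T^{p/s}_{2/s}(Z)$ into $T^{p/s}_1(Z)$, and the converse direction of Lemma~\ref{fact} identifies this multiplier space with $T^{\infty}_{2/(2-s)}(Z)$, yielding $\mathrm{(ii)}$.

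The main obstacle is this last ``converse of factorization'' step. When $p \ge s$, so that $p/s \ge 1$, the tent-sequence-space duality of Lemma~\ref{dual} pairs $T^{p/s}_1(Z)$ against a $T^\infty$-space, and the required multiplier characterization follows by taking supremum on the dual side and reading off the $T^{\infty}_{2/(2-s)}(Z)$ norm. In the remaining regime $0 < p < s$ standard duality fails, and I expect one must test the multiplication inequality directly with extremal sequences of the form $\lambda_j = \Phi_{\mu, 2r}(a_j)^{1/(2-s)} \mathbf{1}_{\Lambda(a)}(a_j)$; combined with H\"older's inequality and the estimate $\sigma(\{\zeta : \Gamma(\zeta) \cap \Lambda(a) \ne \emptyset\}) \asymp (1-|a|)^n$, this should recover the Carleson-type bound $\sum_{a_j \in \Lambda(a)} \Phi_{\mu, 2r}(a_j)^{2/(2-s)}(1-|a_j|)^n \lesssim (1-|a|)^n$ that characterizes $T^{\infty}_{2/(2-s)}(Z)$.
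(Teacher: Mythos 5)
Your outer skeleton overlaps substantially with the paper where it is sound: the implication (ii)$\Rightarrow$(i) via Lemma~\ref{dmk} with $m=1$, the forward (multiplication) half of Lemma~\ref{fact} giving $T^{p/s}_{2/s}(Z)\cdot T^{\infty}_{2/(2-s)}(Z)\subseteq T^{p/s}_{1}(Z)$, and Lemma~\ref{hp} for $\|\lambda\|_{T^p_2(Z)}\lesssim\|f\|_{H^p}$ is exactly the paper's argument, and starting (i)$\Rightarrow$(ii) from Lemma~\ref{test-b} is the right move. But the step you yourself flag as the obstacle is a genuine gap, and neither of your proposed fixes covers all $p$. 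In your duality regime, testing the adjoint of the multiplier bound with $\varphi_j=\mathbf{1}_{\Lambda(a)}(a_j)$ yields $\int_{\mathbb{S}_n}G_a(\zeta)^{t}\,d\sigma(\zeta)\lesssim(1-|a|)^n$, where $G_a(\zeta)=\sum_{a_j\in\Gamma(\zeta)\cap\Lambda(a)}\Phi_{\mu,2r}(a_j)^{2/(2-s)}$ and $t=p(2-s)/(2(p-s))$; passing to the Carleson bound $\int_{\mathbb{S}_n}G_a\,d\sigma\lesssim(1-|a|)^n$ by H\"older needs $t\ge1$, i.e.\ $p\le2$, and the endpoint $p=s$ is outside Lemma~\ref{dual} altogether (no dual of $T^1_1(Z)$ is provided). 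Your extremal sequence $\lambda_j=\Phi_{\mu,2r}(a_j)^{1/(2-s)}\mathbf{1}_{\Lambda(a)}(a_j)$ does rescue every $p\ge s$: the self-consistent choice of exponent bootstraps to $\int_{\mathbb{S}_n}G_a^{p/s}\,d\sigma\lesssim(1-|a|)^n$, and H\"older with $p/s\ge1$ finishes. But for $0<p<s$ — a regime the theorem includes, and the hardest one — you obtain only this $L^{p/s}$ bound with $p/s<1$, and deducing the $L^1$ bound from it is a reverse-H\"older step that fails without a John--Nirenberg-type self-improvement for $T^\infty_{2/(2-s)}(Z)$ which you neither prove nor cite; ``should recover'' is precisely where the proof is missing. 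The paper's device, absent from your proposal, is a fractional-power trick that works uniformly in $p$: choose $\rho>1$ with $p\rho>s$, identify $T^{\infty}_{2\rho/(2-s)}(Z)=\bigl(T^{1}_{(2\rho/(2-s))'}(Z)\bigr)^{\ast}$ by Lemma~\ref{dual}, factor each test sequence as $\varphi_j=\gamma_j\lambda_j^{s/\rho}$ with $\gamma\in T^{p\rho/(p\rho-s)}_{\rho'}(Z)$ and $\lambda\in T^{p}_{2}(Z)$ by Lemma~\ref{fact}, and conclude $\Phi^{1/\rho}_{\mu,2r}\in T^{\infty}_{2\rho/(2-s)}(Z)$ from H\"older plus Lemma~\ref{test-b}.

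The second gap is your claim that (ii)$\Leftrightarrow$(iii) is a routine discretization. The pointwise ``and conversely'' is false: for $z\in D(a_j,r)$ one does have $\Phi_{\mu}(z)\lesssim\Phi_{\mu,2r}(a_j)$, but $\Phi_{\mu,2r}(a_j)\lesssim\Phi_{\mu}(z)$ fails — take $\mu$ a point mass located in $D(a_j,2r)$ at Bergman distance greater than $r$ from $z$, so the right-hand side vanishes while the left does not. The direction your chain actually needs, (iii)$\Rightarrow$(ii) (also implicitly used inside your (iii)$\Rightarrow$(i), where the lattice condition is extracted from the continuous one), therefore forces a radius enlargement, and one must then know that the continuous condition $\Phi_{\mu,r}\in T^{\infty}_{2/(2-s)}(\tau)$ is independent of $r$ — which in the paper is a \emph{consequence} of the theorem (Remark~\ref{r-thm1.1b}), not an a priori fact. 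This is exactly why the paper does not argue as you do: it proves (ii)$\Rightarrow$(iii) through the kernel characterization of Carleson measures (Lemma~\ref{carleson}), which also absorbs the tent-mismatch between $\sum_{a_j\in\Lambda(a)}$ and $\int_{\Lambda(a)}$ that you gloss over, and it gets (iii)$\Rightarrow$(ii) by the analytic loop (iii)$\Rightarrow$(i) for $p=s$ (via the continuous $T^1_{2/s}(\tau)$--$T^{\infty}_{2/(2-s)}(\tau)$ pairing) followed by (i)$\Rightarrow$(ii). Your discretization can be repaired — Fubini gives $\mu(D(a_j,2r))\lesssim\int_{D(a_j,3r)}\mu(D(v,r))\,d\tau(v)$, then Jensen (since $2/(2-s)>1$), the bounded overlap of Lemma~\ref{covering}, and a tent enlargement — but as written the justification is wrong, and since both ends of your chain lean on it, the gap propagates through (iii)$\Rightarrow$(i) as well.
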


\begin{proof} We first prove that (i) and (ii) are equivalent.
Suppose (ii) holds for some $r$-lattice $Z=\{a_j\}$ in $\B$.
From Lemma~\ref{dmk} and H\"{o}lder's inequality, we get
\begin{eqnarray}\label{1}
(\A^k_{\mu, s}f(\zeta))^s
&\leq &\sum_{a_j\in\widetilde{\Gamma}(\zeta)}\int_{D(a_j,r)}|\mathcal{R}^kf(z)|^s\frac{d\mu(z)}{(1-|z|)^n}\nonumber\\
&\lesssim&\sum_{a_j\in\widetilde{\Gamma}(\zeta)}\int_{D(a_j,r)}
\frac{1}{(1-|z|)^{(k-1)s}}\left(\int_{D(z,r)}|\mathcal{R}f(w)|^2\,d\tau(w)\right)^{s/2}\frac{d\mu(z)}{(1-|z|)^n}\nonumber\\
&\lesssim&\sum_{a_j\in\widetilde{\Gamma}(\zeta)}\sup_{z\in D(a_j,r)}\left(\int_{D(z,r)}|\mathcal{R}f(w)|^2\,d\tau(w)\right)^{s/2}
\int_{D(a_j,r)}\frac{d\mu(z)}{(1-|z|)^{n+(k-1)s}}\nonumber\\
&\lesssim&\sum_{a_j\in\widetilde{\Gamma}(\zeta)}\left(\int_{D(a_j, 2r)}|\mathcal{R} f(w)|^2d\tau(w)\right)^{s/2}\frac{\mu(D(a_j, r))}{(1-|a_j|)^{n+(k-1)s}}\nonumber\\
&\asymp&\sum_{a_j \in \widetilde{\Gamma}(\zeta)}\left(\int_{D(a_j, 2r)}|\mathcal{R} f(w)|^2(1-|w|)^2d\tau(w)\right)^{s/2}
\frac{\mu(D(a_j, r))}{(1-|a_j|)^{n+ks}}.\nonumber
\end{eqnarray}
Let $\lambda=\{\lambda_j\}$ with 
$$
\lambda_j=\left(\int_{D(a_j, 2r)}|\mathcal{R} f(w)|^2(1-|w|)^2d\tau(w)\right)^{s/2},
$$
and let 
$$
\Phi_{\mu}(a_j)= \Phi_{\mu,r}(a_j)=\frac{\mu(D(a_j, r))}{(1-|a_j|)^{n+ks}}.
$$
Then, obviously, (ii) implies that $\{\Phi_{\mu}(a_j)\}\in T^{\infty}_{2/(2-s)}(Z)$.
From the above inequality we have
\begin{equation}\label{12}
(\A^k_{\mu, s}f(\zeta))^s
\lesssim\sum_{a_j \in \widetilde{\Gamma}(\zeta)}\lambda_j\cdot\,\Phi_{\mu}(a_j).
\end{equation}
By Lemma~\ref{hp}, we get that 
\begin{eqnarray*}
\|\lambda\|_{T^{p/s}_{2/s}(Z)}^{p/s}
&=&\int_{\mathbb{S}_n}\left(\sum_{a_j\in\Gamma(\zeta)}\int_{D(a_j,2r)}|\mathcal{R}f(w)|^2(1-|w|)^2\,d\tau(w)\right)^{p/2}\,d\sigma(\zeta)\\
&\lesssim& \int_{\mathbb{S}_n}\left(\int_{\widetilde{\Gamma}(\zeta)}|\mathcal{R}f(w)|^2(1-|w|)^2\,d\tau(w)\right)^{p/2}\,d\sigma(\zeta)\\
&\asymp&\|f\|_{H^p}^p.
\end{eqnarray*}
Thus $\lambda\in T^{p/s}_{2/s}(Z)$.
Since $\Phi_{\mu}(a_j)\in T^{\infty}_{2/(2-s)}(Z)$, by Lemma~\ref{fact}, we know that
$$
\left\{\lambda_j\cdot \Phi_{\mu}(a_j)\right\} \in T^{p/s}_{2/s}(Z)\cdot T^\infty_{{2}/{(2-s)}}(Z)=T^{p/s}_1(Z),\
$$
with
\begin{equation}\label{lambda-phi1}
\|\{\lambda_j\cdot \Phi_{\mu}(a_j)\}\|_{T^{p/s}_1(Z)}
\lesssim \|\{\Phi_{\mu}(a_j)\}\|_{T^{\infty}_{2/(2-s)}(Z)}\,\|\lambda\|_{T^{p/s}_{s/2}(Z)}
\lesssim \|\{\Phi_{\mu}(a_j)\}\|_{T^{\infty}_{2/(2-s)}(Z)}\,\|f\|_{H^p}^s.
\end{equation}
Putting these estimates together, we obtain that
\begin{eqnarray*}
\|\A^k_{\mu,s}(f)\|_{L^p(\mathbb{S}_n)}^p
&\lesssim&\int_{\mathbb{S}_n}\left(\sum_{a_j \in \widetilde{\Gamma}(\zeta)}\lambda_j\cdot \Phi_{\mu}(a_j)\right)^{p/s}d\sigma(\zeta)\\
&\asymp&\left\|\lambda_j\cdot \Phi_{\mu}(a_j) \right\|_{T^{p/s}_{1}(Z)}^{p/s}
\lesssim\|\{\Phi_{\mu}(a_j)\}\|_{T^{\infty}_{2/(2-s)}(Z)}^{p/s}\,\|f\|_{H^p}^p.
\end{eqnarray*} 
Thus, $\A^k_{\mu,s}: H^p\to L^p(\mathbb S_n)$ is bounded, or (i) is true.

Conversely, suppose that (i) holds, that is, $\A^k_{\mu,s}: H^p\to L^p(\mathbb S_n)$ is bounded.
Let $Z=\{a_j\}$ be any $r$-lattice in $\B$. 
By Lemma~\ref{test-b},
for any complex sequence $\lambda=\{\lambda_j\}\in T^p_2(Z)$ we have
$$
\int_{\mathbb{S}_n}\left(\sum_{z_j\in \Gamma(\zeta)}|\lambda_j|^s\Phi_{\mu,2r}(z_j)\right)^{q/s}\,d\sigma(\zeta)
\lesssim\|\A_{\,\mu,\,s}^k\|_{H^p\to L^q(\mu)}\|\lambda\|_{T_2^p(Z)}^q.
$$
Let $\rho>1$ be large enough such that $p\rho>s$.
By Lemma~\ref{dual} and Lemma~\ref{fact} we have 
$$
T^{\infty}_{{2\rho}/{(2-s)}}(Z)
=\left(T^{1}_{({2\rho}/{(2-s)})'}(Z)\right)^\ast
=\left(T^{p\rho / (p\rho-s)}_{\rho'}(Z)\cdot T^{p\rho/s}_{2\rho/s}(Z)\right)^\ast.
$$
Thus, given any $\varphi=\{\varphi_j\} \in T^{1}_{({2\rho}/{(2-s)})'}(Z)$, 
we can write $\varphi_j=\gamma_j\lambda_j^{s/\rho}$, where
$\gamma=\{\gamma_j\}\in T^{{p\rho}/{(p\rho-s)}}_{\rho'}(Z)$
and 
$\lambda=\{\lambda_j\}\in T^p_2(Z)$.
Let $\varphi$ be a measurable function on $\B$ such that $\varphi(z_j)=\varphi_j$,
and $d\mu(z)=\sum_jd\delta_{z_j}(z)$, where $\delta_{z_j}$ is the Dirac measure at $z_j$.
By Fubini's theorem, 
\begin{eqnarray*}
\sum_j|\varphi_j|\,\Phi^{1/\rho}_{\mu,2r}(z_j)(1-|z_j|)^n
&\asymp&\int_{\B}|\varphi(z)|\Phi^{1/\rho}_{\mu,2r}(z)\int_{I(z)}\,d\sigma(\zeta)\,d\mu(z)\\
&\asymp&\int_{\mathbb{S}_n}\int_{\Gamma(\zeta)}|\varphi(z)|\Phi^{1/\rho}_{\mu,2r}(z)\,d\mu(z)\,d\sigma(\zeta)\\
&\asymp&\int_{\mathbb{S}_n}\sum_{z_j\in\Gamma(\zeta)}|\varphi_j|\,\Phi^{1/\rho}_{\mu,2r}(z_j)\,d\sigma(\zeta)\\
&\asymp&\int_{\mathbb{S}_n}\sum_{z_j\in\Gamma(\zeta)}|\gamma_j\lambda_j^{s/\rho}|\,\Phi^{1/\rho}_{\mu,2r}(z_j)\,d\sigma(\zeta).
\end{eqnarray*}
By H\"{o}lder's inequality and Lemma~\ref{test-b}, we get
\begin{eqnarray}\label{4}
&~&\sum_j|\varphi_j|\,\Phi^{1/\rho}_{\mu,2r}(z_j)(1-|z_j|)^n\\
&~&\qquad\leq \int_{\mathbb{S}_n}\left(\sum_{z_j\in \Gamma(\zeta)} |\gamma_j|^{\rho'}\right)^{1/\rho'}
\left(\sum_{z_j\in \Gamma(\zeta)}|\lambda_j|^s \Phi_{\mu,2r}(z_j)\right)^{1/\rho}\,d\sigma(\zeta)\nonumber\\
&~&\qquad\leq\|\gamma\|_{T_{\rho'}^{p\rho/(p\rho-s)}(Z)}\left(\int_{\mathbb{S}_n}\left(\sum_{z_j\in \Gamma(\zeta)}
|\lambda_j|^s \Phi_{\mu,2r}(z_j)\right)^{q/s}\,d\sigma(\zeta)\right)^{s/q\rho} \nonumber\\
&~&\qquad\lesssim\|\gamma\|_{T_{\rho'}^{p\rho/(p\rho-s)}(Z)}
\|\mathcal{A}^k_{\mu,s}\|^{s/\rho}_{H^p\rightarrow L^q\mathbb{S}_n)}\|\lambda\|_{T^p_2(Z)}^{s/\rho}\nonumber\\
&~&\qquad\lesssim\|\mathcal{A}^k_{\mu,s}\|^{s/\rho}_{H^p\rightarrow L^q(\mathbb{S}_n)}\|\varphi\|_{T^1_{({2\rho}/{(2-s)})'}(Z)}.\nonumber
\end{eqnarray}
By Lemma~\ref{dual} we obtain that 
$\Phi_{\mu,2r}^{1/{\rho}}(z_j)\in T^{\infty}_{2\rho/(2-s)}(Z)$,
or equivalently,
$\Phi_{\mu,2r}(z_j)\in T^{\infty}_{2/(2-s)}(Z)$. Thus (ii) is true.

Next, we prove that (ii) implies (iii). 
Fix $0<r<1$.
Suppose (ii) holds for some $r$-lattice $\{a_j\}$ in $\B$.
It is equivalent to that
$d\omega(z)=\sum_{j=1}^{\infty}|\Phi_{\mu,2r}(a_j)|^{2/(2-s)}(1-|a_j|)^n\,d\delta_{a_j}(z)$
is a Carleson measure.
Let $d\widetilde{\omega}(z)=|\Phi_{\mu}(z)|^{2/(2-s)}(1-|z|)^n\,d\tau(z)$.
Then, by Lemma~\ref{covering}, Lemma 2.20 and Lemma 2.27 in \cite{KZ2005},
we obtain that, for any $a\in\B$,
\begin{eqnarray}\label{carleson-cd}
&~&\int_{\B}\frac{(1-|a|)^n}{|1-\langle a,z\rangle|^{2n}}\,d\widetilde{\omega}(z)\\
&~&\qquad\lesssim\sum_{j=1}^{\infty}\int_{D(a_j,r)}\frac{(1-|a|)^n}{|1-\langle a,z\rangle|^{2n}}
\left(\frac{\mu(D(z,r))}{(1-|z|)^{n+ks}}\right)^{2/(2-s)}(1-|z|)^n\,d\tau(z)\nonumber\\
&~&\qquad\lesssim\sum_{j=1}^{\infty}\frac{(1-|a|)^n}{|1-\langle a,a_j\rangle|^{2n}}
\left(\frac{\mu(D(a_j,2r))}{(1-|a_j|)^{n+ks}}\right)^{2/(2-s)}(1-|a_j|)^n\nonumber\\
&~&\qquad\asymp\int_{\B}\frac{(1-|a|)^n}{|1-\langle a,z\rangle|^{2n}}\,d\omega(z).\nonumber
\end{eqnarray}
By Lemma~\ref{carleson},
we get that $d\widetilde{\omega}(z)$ is a Carleson measure,
which is equivalent to that
$$
\|\Phi_{\mu}\|_{T_{2/(2-s)}^{\infty}(\tau)}
=\sup_{a\in\B}\frac{1}{(1-|a|)^n} \int_{\Lambda(a)}|\Phi_{\mu}(z)|^{2/(2-s)}(1-|z|)^n\,d\tau(z)<\infty.
$$
Thus (iii) is true.

Finally, we prove that (iii) implies (i) with $p=s<2$,
that is, $\A_{\mu,s}^k:\,H^s\to L^s(\mathbb{S}_n)$ is bounded.
Since we have proved that (i) (with $p=s<2$) implies (ii),
this will show that (iii) and (ii) are equivalent, and thus complete the proof of the theorem.
Suppose that (iii) holds. 
For any $f\in H^s$, let $L(w)=|\mathcal{R}f(w)|^s(1-|w|)^s$. 
By H\"older's inequality and Lemma~\ref{hp}, we get
\begin{eqnarray*}
\|L\|_{T^1_{2/s}(\tau)}
=\int_{\mathbb{S}_n}\left(\int_{\Gamma(\zeta)}|\mathcal{R}f(w)|^{2}(1-|w|)^2\,d\tau(w)\right)^{s/2}\,d\sigma(\zeta)
\lesssim\|f\|_{H^s}^s.
\end{eqnarray*}
Thus, by Lemma~\ref{dmk}, Fubini's theorem and the duality of tent spaces 
(see, for example, Lemma 2.3 in \cite{LLZ2022}) we get,
\begin{eqnarray*}
&~&\|\A_{\mu,s}^k(f)\|_{L^s(\mathbb{S}_n)}^s\\
&~&\qquad=\int_{\mathbb{S}_n}\int_{\Gamma(\zeta)}|\mathcal{R}^k f(z)|^s\frac{d\mu(z)}{(1-|z|)^n}\,d\sigma(\zeta)\\
&~&\qquad\lesssim\int_{\mathbb{S}_n}\int_{\Gamma(\zeta)}
\frac{1}{(1-|z|)^{(k-1)s}}\int_{D(z,r)}|\mathcal{R} f(w)|^s\,d\tau(w)
\frac{d\mu(z)}{(1-|z|)^n}\,d\sigma(\zeta)\\
&~&\qquad\lesssim\int_{\mathbb{S}_n}\int_{\widetilde{\Gamma}(\zeta)}|\mathcal{R}f(w)|^s(1-|w|)^s
\Phi_{\mu}(w)\,d\tau(w)\,d\sigma(\zeta)\\
&~&\qquad\asymp\int_{\mathbb{B}_n}|\mathcal{R}f(w)|^s(1-|w|)^s\Phi_{\mu}(w)\int_{\widetilde{\Lambda}(w)}\,d\sigma(\zeta)\,d\tau(w)\\
&~&\qquad\asymp\int_{\mathbb{B}_n} H(w)\Phi_{\mu}(w)(1-|w|)^n\,d\tau(w)\\
&~&\qquad\lesssim\|H\|_{T^1_{2/s}(\tau)}\|\Phi_{\mu}\|_{T^{\infty}_{2/(2-s)}(\tau)}
\lesssim\|f\|_{H^s}^s\|\Phi_{\mu}\|_{T^{\infty}_{2/(2-s)}(\tau)}.
\end{eqnarray*}
Here $\widetilde{\Lambda}(w)=\{\zeta\in\mathbb{S}_n:\,w\in\widetilde{\Gamma}(\zeta)\}$.
Thus $\mathcal{A}^k_{\mu,s}: H^s \rightarrow L^s(\mathbb S_n)$ is bounded, or (i) holds for $p=s<2$.
The proof is complete.
\end{proof}

\begin{rem}\label{r-thm1.1b}
Carefully examining the proof, we can see that 
$\|\Phi_{\mu}\|_{T^{\infty}_{2/(2-s)}(\tau)}
\asymp\|\Phi_{\mu,2r}\|_{T^{\infty}_{2/(2-s)}(Z)}$.
\end{rem}

Our last result in this section deals with the case $0<q<p<\infty$.

\begin{thm}\label{thm1.1c}
Let $k$ be a positive integer, let $0<p, q ,s<\infty$ and let $0<r<1$. 
Let $\mu$ be a positive Borel measure on $\mathbb{B}_n$ that is finite
on compact subsets of $\mathbb{B}_n$. 
Let $\Phi_{\mu}(z)=\mu(D(z,r))/(1-|z|)^{n+ks}$.
Then the following statements hold.
\begin{itemize}
\item[(i)] If  $p>q$ and $s \geq 2$, then $\mathcal{A}_{\mu, s}^k: H^p \to L^q(\mathbb{S}_n)$
 is bounded if and only if $\Phi_{\mu} \in T_{\infty}^{pq/(s(p-q))}(\tau)$.
\item[(ii)] If $p>q$ and  $s <2$, then $\mathcal{A}_{\mu,s}^k: H^p \to L^q(\mathbb{S}_n)$ is bounded 
if and only if $\Phi_{\mu}\in T_{2/(2-s)}^{pq/(s(p-q))}(\tau)$.
\end{itemize}
\end{thm}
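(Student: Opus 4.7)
The plan is to establish both parts by treating sufficiency and necessity separately, mirroring with rebalanced exponents the two flavors of argument already developed in Theorem~\ref{thm1.1a} and Theorem~\ref{thm1.1b}.

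\emph{Sufficiency.} For case (i) ($s\ge 2$), the starting point is the pointwise estimate obtained in the proof of Theorem~\ref{thm1.1a} for $p=q$, i.e.
\[
(\mathcal{A}^k_{\mu,s}f(\zeta))^s\lesssim (f^*(\zeta))^{s-2}\int_{\widetilde{\widetilde{\Gamma}}(\zeta)}|\mathcal{R} f(w)|^2(1-|w|)^{1-n}\Phi_\mu(w)\,dV(w),
\]
derived via Lemma~\ref{dmk} (with $m=0$ and $m=1$), Fubini's theorem, and Lemma~\ref{covering}. Replacing $\Phi_\mu(w)$ inside the integral by its admissible maximal function $\Phi_\mu^*(\zeta):=\sup_{w\in\widetilde{\widetilde{\Gamma}}(\zeta)}\Phi_\mu(w)$ leaves the Lusin area function $g(\zeta)$ of $\mathcal{R}f$, which satisfies $\|g\|_{L^p(\mathbb{S}_n)}\asymp\|f\|_{H^p}$ by Lemma~\ref{hp}. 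A three-factor H\"older inequality in $L^1(\mathbb S_n)$ with exponents $ps/(q(s-2))$, $ps/(2q)$, and $p/(p-q)$ (whose reciprocals sum to $1$), together with Lemma~\ref{l2.41}, then yields
\[
\|\mathcal{A}^k_{\mu,s}f\|_{L^q(\mathbb{S}_n)}^q\lesssim \|f\|_{H^p}^q\,\|\Phi_\mu\|_{T^{pq/(s(p-q))}_\infty(\tau)}^{q/s}.
\]
The endpoint $s=2$ is handled by reading $(f^*)^0\equiv 1$. For case (ii) ($s<2$) the strategy of Theorem~\ref{thm1.1b} carries over: fix an $r$-lattice $Z=\{a_j\}$; Lemma~\ref{dmk} reduces $(\mathcal{A}^k_{\mu,s}f(\zeta))^s$ to $\sum_{a_j\in\widetilde{\Gamma}(\zeta)}\lambda_j\,\Phi_\mu(a_j)$ with $\lambda_j=\bigl(\int_{D(a_j,2r)}|\mathcal{R}f(w)|^2(1-|w|)^2\,d\tau(w)\bigr)^{s/2}\in T^{p/s}_{2/s}(Z)$ of norm $\lesssim\|f\|_{H^p}^s$ (Lemma~\ref{hp}), and the factorization $T^{p/s}_{2/s}(Z)\cdot T^{pq/(s(p-q))}_{2/(2-s)}(Z)\subset T^{q/s}_1(Z)$, a direct verification of Lemma~\ref{fact} (the identities $s/q=s/p+s(p-q)/(pq)$ and $1=s/2+(2-s)/2$ hold, and the dominance hypotheses are easy from $p>q$ and $s<2$), closes the argument.

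\emph{Necessity.} In both cases the starting point is Lemma~\ref{test-b}: writing $\eta_j=|\lambda_j|^s$, it states precisely that pointwise multiplication by $\Phi_{\mu,2r}$ maps $T^{p/s}_{2/s}(Z)$ boundedly into $T^{q/s}_1(Z)$. Lemma~\ref{fact} yields the factorization
\[
T^{pq/(pq-s(p-q))}_{2/s}(Z)=T^{p/s}_{2/s}(Z)\cdot T^{q/(q-s)}_{\infty}(Z),
\]
and a direct computation with Lemma~\ref{dual}, applied in its two regimes according to whether $s<2$ or $s\ge 2$, identifies the dual of the left-hand side with $T^{pq/(s(p-q))}_{2/(2-s)}(Z)$ in case (ii) and with $T^{pq/(s(p-q))}_\infty(Z)$ in case (i). Pairing the tested multiplier inequality first against any $\beta\in T^{q/(q-s)}_\infty(Z)$ and then against a factored element $\nu=\eta\cdot\beta\in T^{pq/(pq-s(p-q))}_{2/s}(Z)$ exhibits $\Phi_{\mu,2r}$ as a bounded linear functional on $T^{pq/(pq-s(p-q))}_{2/s}(Z)$, hence as an element of the claimed dual tent sequence space. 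A final covering/Carleson-type transfer modelled on the end of the proof of Theorem~\ref{thm1.1b} (via Lemma~\ref{carleson}, Lemma~\ref{covering}, and Lemma~\ref{3}) passes from the $Z$-condition to the continuous $\tau$-condition on $\Phi_\mu$.

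\emph{Main obstacle.} The hardest step is the necessity argument in the regime where the outer exponent $q/s$ of the target space $T^{q/s}_1(Z)$ is at most one: then Lemma~\ref{dual} does not apply to $T^{q/s}_1(Z)$ directly, and the factorization above requires $q>s$ to satisfy the dominance hypothesis $p_iq_i>pq$ of Lemma~\ref{fact}. The standard remedy, already employed in the proof of Theorem~\ref{thm1.1b}, is to introduce an auxiliary parameter $\rho>1$ chosen so that $p\rho$ and $q\rho$ both exceed $s$, run the whole duality/factorization argument at the $\rho$-scaled spaces, and then undo the scaling by raising the resulting sequence condition to the power $1/\rho$. Reproducing the correct exponents cleanly in the presence of the additional freedom $p>q$, and keeping careful track of which endpoint of Lemma~\ref{dual} governs each step, is where the bookkeeping is most delicate.
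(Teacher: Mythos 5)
Your proposal is correct and, for part (i) and for both necessity arguments, it is essentially the paper's proof: the sufficiency in (i) is the paper's estimate \eqref{fwr} followed by H\"older (you use a single three-factor application with exponents $ps/(q(s-2))$, $ps/(2q)$, $p/(p-q)$, the paper iterates two two-factor applications — same arithmetic, and your exponents do check out and are all $\ge 1$ when $s\ge2$, $p>q$), and your necessity is exactly the paper's scheme: Lemma~\ref{test-b} read as boundedness of multiplication by $\Phi_{\mu,2r}$ from $T^{p/s}_{2/s}(Z)$ to $T^{q/s}_1(Z)$, then Lemma~\ref{dual} and Lemma~\ref{fact} on the predual, then a lattice-to-continuous transfer. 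You correctly flagged that your $\rho=1$ presentation of the duality step is only formal — when $q\le s$ the outer exponent $pq/(pq-s(p-q))$ can even be negative and $T^{q/(q-s)}_\infty(Z)$ is meaningless — and your prescribed fix (choose $\rho>1$ with $q\rho>s$, run the argument at the $\rho$-scaled spaces, undo by taking the $1/\rho$ power) is precisely what the paper does, with predual factorization $T^{(pq\rho/(s(p-q)))'}_{\sigma}(Z)=T^{q\rho/(q\rho-s)}_{\rho'}(Z)\cdot T^{p\rho/s}_{2\rho/s}(Z)$ in case (i) (where $\sigma\le1$ since $s\ge2$, triggering the $T_\infty$ endpoint of Lemma~\ref{dual}) and the analogous one with inner index $(2\rho/(2-s))'$ in case (ii). The one genuine divergence is your sufficiency proof of (ii): you run the discrete factorization $T^{p/s}_{2/s}(Z)\cdot T^{pq/(s(p-q))}_{2/(2-s)}(Z)\subset T^{q/s}_1(Z)$ as in Theorem~\ref{thm1.1b}, whereas the paper stays continuous and simply applies H\"older twice in \eqref{fwr} (inner exponents $2/s$, $2/(2-s)$; outer exponents $p/q$, $p/(p-q)$) followed by Lemma~\ref{hp}. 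Both work; the paper's route is shorter, while yours requires the additional (routine but unstated in your sketch) transfer from the continuous hypothesis $\Phi_\mu\in T^{pq/(s(p-q))}_{2/(2-s)}(\tau)$ to the lattice condition $\{\Phi_{\mu,2r}(a_j)\}\in T^{pq/(s(p-q))}_{2/(2-s)}(Z)$, which rests on the quasi-constancy $\Phi_{\mu,r}(a_j)\lesssim\Phi_{\mu,2r}(z)$ for $z\in D(a_j,r)$ together with Lemma~\ref{covering} and Lemma~\ref{l51} — the same mechanism the paper uses in \eqref{carleson-cd} and Remark~\ref{r-thm1.1b}, so nothing is at risk, but you should say it.
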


\begin{proof}
(i). Suppose $p>q$ and $s\ge 2$. Let $\Phi_{\mu}\in T_\infty^{pq/((p-q)s)}(\tau)$,
and let $0<r<1$. 
By Lemma~\ref{dmk}, for $\zeta\in\partial\mathbb{S}_n$ and $w\in\Gamma(\zeta)$,
$$
|\mathcal{R} f(w)|(1-|w|)
\lesssim\int_{D(w,r)}|f(z)|\,d\tau(z)
\lesssim f^*(\zeta).
$$
By Lemma~\ref{dmk}, Lemma~\ref{l51} and and Fubini's theorem we obtain
\begin{eqnarray}\label{fwr}
&~&\|\A_{\mu,s}^k(f)\|_{L^q(\mathbb{S}_n)}^q\\
&~&\qquad=\int_{\mathbb{S}_n}\left(\int_{\Gamma(\zeta)}|\mathcal{R}^k f(z)|^s
\frac{d\mu(z)}{(1-|z|)^n}\right)^{q/s}\,d\sigma(\zeta)\nonumber\\
&~&\qquad\lesssim\int_{\mathbb{S}_n}\left(\int_{\Gamma(\zeta)}
\frac{1}{(1-|z|)^{(k-1)s}}\int_{D(z,r)}|\mathcal{R} f(w)|^s\,d\tau(w)
\frac{d\mu(z)}{(1-|z|)^n}\right)^{q/s}\,d\sigma(\zeta)\nonumber\\
&~&\qquad\lesssim\int_{\mathbb{S}_n}\left(\int_{\widetilde{\Gamma}(\zeta)}|\mathcal{R}f(w)|^s(1-|w|)^s
\Phi_{\mu}(w)\,d\tau(w)\right)^{q/s}\,d\sigma(\zeta).\nonumber
\end{eqnarray}
By H\"older's inequality we get
\begin{eqnarray*}
&~&\|\A_{\mu,s}^k(f)\|_{L^q(\mathbb{S}_n)}^q\\
&~&\qquad\lesssim\int_{\mathbb{S}_n}|f^*(\zeta)|^{(s-2)q/s}\left(\int_{\widetilde{\Gamma}(\zeta)}|\mathcal{R} f(w)|^2(1-|w|)^2
\Phi_{\mu}(w)\,d\tau(w)\right)^{q/s}\,d\sigma(\zeta)\nonumber\\
&~&\qquad\lesssim\left(\int_{\mathbb{S}_n}|f^*(\zeta)|^p\,d\sigma(\zeta)\right)^{(s-2)q/(sp)}\times\nonumber\\
&~&\qquad\quad\times\left(\int_{\mathbb{S}_n}\left(\int_{\widetilde{\Gamma}(\zeta)}|\mathcal{R} f(w)|^2(1-|w|)^2
\Phi_{\mu}(w)\,d\tau(w)\right)^{ {pq}/{(sp-(s-2)q)}}\,d\sigma(\zeta)\right)^{(sp-(s-2)q)/(sp)}.\nonumber
\end{eqnarray*}
By H\"{o}lder's inequality again we get
\begin{eqnarray*}\label{fwr1}
&~&\int_{\mathbb{S}_n}\left(\int_{\widetilde{\Gamma}(\zeta)}|\mathcal{R} f(w)|^2(1-|w|)^2
\Phi_{\mu}(w)\,d\tau(w)\right)^{pq/(sp-(s-2)q)}\,d\sigma(\zeta)\\
&~&\qquad\leq\int_{\mathbb{S}_n}\left(\sup_{w\in \widetilde{\Gamma}(\zeta)}\Phi_{\mu}(w)\right)^{pq/(sp-(s-2)q)}\times\nonumber\\
&~&\qquad\quad\times\left(\int_{\widetilde{\Gamma}(\zeta)}|\mathcal{R} f(w)|^2(1-|w|)^2\,d\tau(w)\right)^{pq/(sp-(s-2)q)}\,d\sigma(\zeta)\nonumber\\
&~&\qquad\leq\left(\int_{\mathbb{S}_n}\left(\sup_{w\in\widetilde{\Gamma}(\zeta)}\Phi_{\mu}(w)\right)^{pq/(s(p-q))}\,
d\sigma(\zeta)\right)^{s(p-q)/(sp-(s-2)q)}\times\nonumber\\
&~&\qquad\quad\times\left(\int_{\mathbb{S}_n}\left(\int_{\widetilde{\Gamma}(\zeta)}
|\mathcal{R}f(w)|^2(1-|w|)|^2\,d\tau(w)\right)^{p/2}\,d\sigma(\zeta)\right)^{2q/(sp-(s-2)q)}.\nonumber
\end{eqnarray*}
Combining the above two inequalities, by Lemma~\ref{l2.41} and Lemma~\ref{hp} we obtain that
$$
\|\A_{\,\mu,\,s}^k(f)\|_{L^q(\mathbb{S}_n)}^q
\lesssim \|f\|_{H^p}^q\,\|\Phi_{\mu}\|_{T_\infty^{pq/(s(p-q))}(\tau)}^{q/s}.
$$
Hence, $\A_{\,\mu,\,s}^k:H^p\rightarrow L^q(\mathbb{S}_n)$ is bounded.

Conversely, 
by Lemma~\ref{3},
we only need to prove that 
$\{\Phi_{\mu,2r}(z_j)\}\in T_{\infty}^{pq/(s(p-q))}(Z)$ for any $r$-lattice $Z=\{z_j\}$. 
Choose $\rho>1$ such that $q\rho>s$. Then $pq\rho/(s(p-q)>1$.
We just need to prove that $\{\Phi^{1/\rho}_{\mu,2r}(z_j)\}\in T_{\infty}^{pq\rho/(s(p-q))}(Z)$.
Let $\sigma>0$ satisfy
$$
\frac{1}{\sigma}=\frac{1}{\rho'}+\frac{s}{2\rho}.
$$
Then $\sigma \leq 1$ since $s\geq 2.$ 
By Lemma~\ref{dual} and Lemma~\ref{fact} we get 
$$
T^{{pq\rho/(s(p-q))}}_{\infty}(Z)
=\left(T^{({pq\rho/(s(p-q))})'}_{\sigma}(Z)\right)^\ast
=\left(T^{{q\rho/(q\rho-s)}}_{\rho'}(Z)\cdot T^{p\rho/s}_{2\rho/s}(Z)\right)^\ast.
$$
Using Lemma~\ref{test-b},
similar to the proof of \eqref{4}, we can get that 
$$
\sum_j|\varphi_j|\,\Phi^{1/\rho}_{\mu,2r}(z_j)(1-|z_j|)^n
\lesssim\|\mathcal{A}^k_{\mu,s}\|^{s/\rho}_{H^p\to L^q(\mathbb{S}_n)}\|\varphi\|_{T^{({pq\rho/(s(p-q))})'}_{\sigma}(Z)}
$$
for any $\varphi=\{\varphi_j\}\in T^{({pq\rho/(s(p-q))})'}_{\sigma}(Z)$.
By Lemma~\ref{dual} we obtain that 
$\{\Phi_{\mu,2r}(z_j)\}\in T_{\infty}^{pq/(s(p-q))}(Z)$.

(ii). 
Suppose $p>q$ and $0<s<2$.
Suppose $\Phi_{\mu}\in T_{2/(2-s)}^{pq/(s(p-q))}(\tau)$. 
By (\ref{fwr}) and H\"{o}lder's inequality we have
\begin{eqnarray*}
&~&\|\A_{\mu,s}^k(f)\|_{L^q(\mathbb{S}_n)}^q\\
&~&\qquad\lesssim\int_{\mathbb{S}_n}\left(\int_{\widetilde{\Gamma}(\zeta)}|\mathcal{R}f(w)|^s(1-|w|)^s
\Phi_{\mu}(w)\,d\tau(w)\right)^{q/s}d\sigma(\zeta)\\
&~&\qquad\lesssim\int_{\mathbb{S}_n}\left(\int_{\widetilde{\Gamma}(\zeta)}|\mathcal{R}f(w)|^2(1-|w|)^2\,d\tau(w)\right)^{q/2}\times\\
&~&\qquad\quad\times\left(\int_{\widetilde{\Gamma}(\zeta)}\left(\Phi_{\mu}(w)\right)^{2/(2-s)}d\tau(w)\right)^{(2-s)q/2s}d\sigma(\zeta)\\
&~&\qquad\lesssim\left(\int_{\mathbb{S}_n}\left(\int_{\widetilde{\Gamma}(\zeta)}|\mathcal{R}f(w)|^2(1-|w|)^2\,
d\tau(w)\right)^{p/2}d\sigma(\zeta)\right)^{q/p}\times\\
&~&\qquad\quad\times\left(\int_{\mathbb{S}_n}
\left(\int_{\widetilde{\Gamma}(\zeta)}\left(\Phi_{\mu}(w)\right)^{2/(2-s)}d\tau(w)\right)^{((2-s)/2)pq/(s(p-q))}d\sigma(\zeta)\right)^{(p-q)/p}.
\end{eqnarray*}
Therefore, by Lemma~\ref{hp} we get
\[
\|\A_{\,\mu,\,s}^k(f)\|_{L^q(\mathbb{S}_n)}\lesssim \|f\|_{H^p}\|\Phi_{\mu,r}\|_{T_{2/(2-s)}^{pq/(s(p-q))}}^{1/s}.
\]
Thus, $\A_{\,\mu,\,s}^k:H^p\rightarrow L^q(\mathbb{S}_n)$ is bounded.

Conversely,  
again, 
By Lemma \ref{3}, 
we only need to prove that 
$\{\Phi_{\mu,2r}(z_j)\}\in T_{2 /(2-s)}^{pq/(s(p-q))}(Z)$ for any $r$-lattice $Z=\{z_j\}$ in $\B$, 
which is equivalent to
$$
\{\Phi^{1/\rho}_{\mu,2r}(z_j)\}
\in T_{2\rho /(2-s)}^{pq\rho/(s(p-q))}(Z),
$$ 
where $\rho>1$ such that $q\rho>s$.
By Lemma~\ref{dual} and Lemma~\ref{fact} we get that
$$
T^{{pq\rho/(s(p-q))}}_{{2\rho/(2-s)}}(Z)
=\left(T^{({pq\rho/(s(p-q))})'}_{({2\rho/(2-s)})'}(Z)\right)^*
=\left(T^{{q\rho/(q\rho-s)}}_{\rho'}(Z)\cdot T^{p\rho/s}_{2\rho/s}(Z)\right)^*.
$$
Using Lemma~\ref{test-b},
similar to the proof of \eqref{4}, we can get that 
$$
\sum_j|\varphi_j|\,\Phi^{1/\rho}_{\mu,2r}(z_j)(1-|z_j|)^n
\lesssim\|\mathcal{A}^k_{\mu,s}\|^{s/\rho}_{H^p\rightarrow L^q(\mathbb{S}_n)}
\|\varphi\|_{T^{({pq\rho/(s(p-q))})'}_{({2\rho/(2-s)})'}(Z)}
$$
for any $\varphi=\{\varphi_j\}\in T^{({pq\rho/(s(p-q))})'}_{({2\rho/(2-s)})'}(Z)$.
By Lemma~\ref{dual} we obtain that 
$\{\Phi_{\mu,2r}(z_j)\}\in T^{{pq\rho/(s(p-q))}}_{{2\rho/(2-s)}}(Z)$.
The proof is complete.
\end{proof}

The proofs of Theorem~\ref{thm1.1a}, Theorem~\ref{thm1.1b} and  Theorem~\ref{thm1.1c} actually provide
alternative proofs of the following well-known result on embedding derivatives of Hardy spaces into Lebesgue spaces. 
The result for part (i) below was first proved by Luecking in \cite{HL1985} for the unit disk,
and then by Kang and Koo in \cite{KK2000} for the unit ball (in a more general form).
The result for other cases was first proved by  Luecking in \cite{HL1991} for $\mathbb{R}^n$.
See also \cite{JM1995}, \cite{JM1996} and \cite{MA1999} for some corresponding results
on embedding derivatives of  $\mathcal{M}$-harmonic Hardy spaces into Lebesgue spaces.

\begin{cor}\label{b-cm}
Let $0<p, q ,s<\infty$, let $k \in \mathbb N$, let $0<r<1$ and 
let $\mu$ be a positive Borel measure on $\mathbb{B}_n$, finite on compact subsets of $\B$. 
Let $\Phi_{\mu}(z)=\mu(D(z,r))/(1-|z|)^{n+kq}$.
Then the following statements hold.
\begin{itemize}
\item[(i)] If $p<q$ or $p=q\ge 2$ 
then $\mathcal{R}^k: H^p\to L^q(\mu)$  is bounded if and only if 
$$
\sup_{z\in\B}\frac{\mu(D(z,r))}{(1-|z|)^{kq+nq/p}}<\infty.
$$
\item[(ii)] If $p=q<2$
then $\mathcal{R}^k: H^p\to L^q(\mu)$  is bounded if and only if 
$
\Phi_{\mu}\in T^{\infty}_{2/(2-q)}(\tau).
$
\item[(iii)] If $p>q\ge 2$ 
then $\mathcal{R}^k: H^p\to L^q(\mu)$  is bounded if and only if 
$
\Phi_{\mu}\in T^{p/(p-q)}_{\infty}(\tau).
$
\item[(iv)] If $p>q$ and $q<2$ 
then $\mathcal{R}^k: H^p\to L^q(\mu)$  is bounded if and only if 
$
\Phi_{\mu}\in T^{p/(p-q)}_{2/(2-q)}(\tau).
$
\end{itemize}
\end{cor}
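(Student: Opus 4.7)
The plan is to derive this corollary as a direct consequence of Theorem~\ref{thm1.1} (and its constituents, Theorems~\ref{thm1.1a}, \ref{thm1.1b}, \ref{thm1.1c}) by choosing the parameter $s$ in the area operator equal to $q$. The key observation is that when $s=q$, the $L^q(\mathbb{S}_n)$-norm of $\mathcal{A}^k_{\mu,q}f$ collapses, via Fubini's theorem, to the $L^q(\mu)$-norm of $\mathcal{R}^k f$.

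More precisely, I would first write
$$
\|\mathcal{A}^k_{\mu,q}(f)\|_{L^q(\mathbb{S}_n)}^q
=\int_{\mathbb{S}_n}\int_{\Gamma(\zeta)}|\mathcal{R}^k f(z)|^q\frac{d\mu(z)}{(1-|z|)^n}\,d\sigma(\zeta),
$$
switch the order of integration, and use the identity $\sigma(I(z))\asymp (1-|z|)^n$ from \eqref{ei(z)} to obtain
$$
\|\mathcal{A}^k_{\mu,q}(f)\|_{L^q(\mathbb{S}_n)}^q
\asymp\int_{\mathbb{B}_n}|\mathcal{R}^k f(z)|^q\,d\mu(z)
=\|\mathcal{R}^k f\|_{L^q(\mu)}^q.
$$
This equivalence, valid for any $f\in H(\mathbb{B}_n)$, immediately shows that $\mathcal{R}^k:H^p\to L^q(\mu)$ is bounded if and only if $\mathcal{A}^k_{\mu,q}:H^p\to L^q(\mathbb{S}_n)$ is bounded.

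Once this reduction is in place, each of the four cases of the corollary becomes a translation of the corresponding case of Theorem~\ref{thm1.1} with $s=q$. In case (i), the conditions $p<q$ or $p=q\geq 2$ fall under Theorem~\ref{thm1.1}(i), and the Carleson exponent $ks/n+1+s/p-s/q$ with $s=q$ becomes $kq/n+q/p$, giving the bound $(1-|z|)^{kq+nq/p}$ as claimed. Case (ii) corresponds to Theorem~\ref{thm1.1}(ii) with $s=q<2$, producing $\Phi_\mu\in T^\infty_{2/(2-q)}(\tau)$. Similarly, cases (iii) and (iv) follow from Theorem~\ref{thm1.1}(iii) and (iv), where $pq/(s(p-q))$ with $s=q$ simplifies to $p/(p-q)$, matching the tent space indices in the corollary.

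The main obstacle is essentially bookkeeping: one must verify that the definition of $\Phi_\mu$ in the corollary (with exponent $n+kq$) agrees with the definition in Theorem~\ref{thm1.1} (with exponent $n+ks$) under the specialization $s=q$, which it does. No new analytical input is needed beyond the Fubini computation above.
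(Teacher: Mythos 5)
Your proposal is correct and follows exactly the paper's own proof: the authors likewise reduce the corollary to Theorem~\ref{thm1.1} with $s=q$ via the Fubini computation in \eqref{Aps}, using $\sigma(I(z))\asymp(1-|z|)^n$ to identify $\|\mathcal{A}^k_{\mu,q}(f)\|_{L^q(\mathbb{S}_n)}^q$ with $\|\mathcal{R}^kf\|_{L^q(\mu)}^q$, and then read off each case from Theorems~\ref{thm1.1a}, \ref{thm1.1b} and \ref{thm1.1c}. Your index bookkeeping (the exponent $kq+nq/p$ in case (i) and the simplification $pq/(s(p-q))=p/(p-q)$ at $s=q$) matches the paper's statements.
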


\begin{proof}
By Fubini's theorem we have
\begin{eqnarray}\label{Aps}
\|\A_{\mu,\,q}^k(f)\|_{L^q(\mathbb{S}_n)}^q
&=&\int_{\mathbb{S}_n}\int_{\Gamma(\zeta)}|\mathcal{R}^kf(z)|^q\frac{d\mu(z)}{(1-|z|)^n}\,d\sigma(\zeta)\\
&=&\int_{\mathbb{B}_n}|\mathcal{R}^kf(z)|^q\int_{\mathbb{S}_n}\chi_{I(z)}(\zeta)\,d\sigma(\zeta)\frac{d\mu(z)}{(1-|z|)^n}\nonumber\\
&\asymp&\int_{\mathbb{B}_n}|\mathcal{R}^kf(z)|^q\,d\mu(z)\nonumber.
\end{eqnarray}
Thus 
$\mathcal{R}^k: H^p\rightarrow L^q(\mu)$ is bounded
if and only if 
$\A_{\mu,\,q}^k:H^p\rightarrow L^q(\mathbb{S}_n)$ is bounded. 
The result is then obtained by Theorem~\ref{thm1.1a}, Theorem~\ref{thm1.1b} and  Theorem~\ref{thm1.1c}  with $s=q$.
\end{proof}


\section{Compactness}

We need the following result. Similar results for many bounded linear operators
are easy to prove. However, for the sublinear operator $\A_{\mu,s}^k$, the proof
is surprisingly involved.

\begin{lem}\label{compact}
Let $0<p\le q<\infty$, let $0<s<\infty$, and $\mu$ be a positive Borel measure on $\B$.
Suppose that $\A_{\mu,s}^k:H^p\rightarrow L^q(\mathbb{S}_n)$ is bounded.
Then the following statements are equivalent.
\begin{itemize}
\item[(i)] $\A_{\mu,s}^k:H^p\rightarrow L^q(\mathbb{S}_n)$ is compact.
\item[(ii)] For any bounded sequence $\{f_i\}$ in $H^p$ such that $\{f_i\}$
converges to 0 uniformly on compact subsets of $\mathbb{B}_n$, we have
$$
\lim_{i\to\infty}\|\A_{\mu,s}^k(f_i)\|_{L^q(\mathbb{S}_n)}=0.
$$
\end{itemize}
\end{lem}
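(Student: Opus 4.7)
The direction $(ii) \Rightarrow (i)$ follows the classical template. For a bounded sequence $\{f_i\}\subset H^p$, Montel's theorem supplies a subsequence $\{f_{i_j}\}$ converging uniformly on compact subsets of $\B$ to some $f\in H(\B)$, and a Fatou-type estimate on integral means places $f\in H^p$. The differences $g_j:=f_{i_j}-f$ satisfy the hypotheses of (ii), so $\|\A_{\mu,s}^k g_j\|_{L^q(\mathbb{S}_n)}\to 0$. Since $\A_{\mu,s}^k$ depends only on $|\mathcal{R}^k f|$ it is invariant under $f\mapsto -f$, and the Minkowski-type sublinearity supplies the reverse triangle inequality $|\A_{\mu,s}^k f_{i_j}(\zeta)-\A_{\mu,s}^k f(\zeta)|\le \A_{\mu,s}^k g_j(\zeta)$. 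Taking $L^q$-norms then yields $\A_{\mu,s}^k f_{i_j}\to \A_{\mu,s}^k f$ in $L^q(\mathbb{S}_n)$, so $\A_{\mu,s}^k$ is compact.

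For $(i)\Rightarrow(ii)$, I would argue by contradiction. Assume $\|\A_{\mu,s}^k f_i\|_{L^q(\mathbb{S}_n)}\not\to 0$; after passing to a subsequence, $\|\A_{\mu,s}^k f_{i_k}\|_{L^q(\mathbb{S}_n)}\ge\varepsilon>0$. Compactness extracts a further subsequence, still denoted $\{f_{i_k}\}$, with $\A_{\mu,s}^k f_{i_k}\to h$ in $L^q(\mathbb{S}_n)$ and $\|h\|_{L^q(\mathbb{S}_n)}\ge\varepsilon$; Lemma~\ref{rudin} together with Remark~\ref{rudin1} (the latter handling $q<1$) yields a further sub-subsequence with $\A_{\mu,s}^k f_{i_k}(\zeta)\to h(\zeta)$ for $\sigma$-a.e.\ $\zeta$. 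It suffices to show $h(\zeta)=0$ almost everywhere.

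Fix such a $\zeta$. Cauchy's estimates together with uniform convergence on compacts give $\mathcal{R}^k f_{i_k}(z)\to 0$ for every $z\in\B$. For any $\rho\in(0,1)$,
\[
\A_{\mu,s}^k f_{i_k}(\zeta)^s
=\int_{\Gamma(\zeta)\cap B(0,\rho)}|\mathcal{R}^k f_{i_k}|^s\,\frac{d\mu}{(1-|z|)^n}
+\int_{\Gamma(\zeta)\setminus B(0,\rho)}|\mathcal{R}^k f_{i_k}|^s\,\frac{d\mu}{(1-|z|)^n},
\]
and the first term is bounded by $(1-\rho)^{-n}\mu(\overline{B(0,\rho)}\cap\B)\sup_{\overline{B(0,\rho)}}|\mathcal{R}^k f_{i_k}|^s$, which tends to $0$ uniformly in $\zeta$ as $k\to\infty$.

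The main obstacle is the tail, and I would control it through the auxiliary claim
\[
\lim_{\rho\to 1}\|\A_{\mu_\rho,s}^k\|_{H^p\to L^q(\mathbb{S}_n)}=0,\qquad \mu_\rho:=\mu|_{\B\setminus\overline{B(0,\rho)}}.
\]
Granted this claim, taking $L^q$-norms in the sublinear splitting above, letting $k\to\infty$, and then $\rho\to 1$, yields $\|h\|_{L^q(\mathbb{S}_n)}=0$, contradicting $\|h\|_{L^q(\mathbb{S}_n)}\ge\varepsilon$. To prove the auxiliary claim (this is where the assumption $p\le q$ is used), I would apply compactness to the standard test family $f_{a,p}(z)=(1-|a|^2)^{mn}/(1-\langle a,z\rangle)^{(m+1/p)n}$. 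This family is bounded in $H^p$ uniformly in $a$; along a subsequence $a_n$ with $|a_n|\to 1$ and $a_n/|a_n|\to\zeta_0$, the kernel concentrates at $\zeta_0$, so a direct computation shows $\A_{\mu,s}^k f_{a_n,p}(\zeta)\to 0$ for every $\zeta\ne\zeta_0$; compactness then upgrades this to $\|\A_{\mu,s}^k f_{a_n,p}\|_{L^q(\mathbb{S}_n)}\to 0$. The lower-bound test-function computation already used in the proof of Theorem~\ref{thm1.1} converts this into the vanishing of the relevant Carleson/tent-space density of $\mu$ at the boundary, and replaying the sufficiency estimates of Section~4 with $\mu_\rho$ in place of $\mu$ delivers the operator-norm decay. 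Bypassing the sublinearity without circular appeal to (ii) is the genuinely delicate point, and it relies on the concentration geometry of the specific test functions $f_{a,p}$.
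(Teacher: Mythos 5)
Your overall architecture --- a contradiction argument with a.e.\ identification of the limit for (i)$\Rightarrow$(ii), and subtraction of the Montel limit for (ii)$\Rightarrow$(i) --- matches the paper's skeleton, but two steps fail as written. In (ii)$\Rightarrow$(i) you invoke ``Minkowski-type sublinearity'' to get the pointwise reverse triangle inequality $|\A_{\mu,s}^k f_{i_j}(\zeta)-\A_{\mu,s}^k f(\zeta)|\le \A_{\mu,s}^k g_j(\zeta)$. That is Minkowski's inequality for the inner $L^s$ integral and is valid only for $s\ge 1$; for $0<s<1$ the quantity $\A_{\mu,s}^k(f)(\zeta)$ is only a quasi-norm in $f$ and the reverse triangle inequality is false (already for a two-atom measure and $s=1/2$). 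Since the lemma asserts all $0<s<\infty$, this is a genuine gap. The paper handles $0<s<1$ separately: one only has $|(\A_{\mu,s}^k f_{i_j})^s-(\A_{\mu,s}^k f)^s|\le(\A_{\mu,s}^k g_j)^s$, so $(\A_{\mu,s}^k f_{i_j})^s\to(\A_{\mu,s}^k f)^s$ in $L^{q/s}(\mathbb{S}_n)$; it then extracts an a.e.-convergent subsequence via Lemma~\ref{rudin} and Remark~\ref{rudin1}, proves convergence of the $L^q$-norms, and concludes by the Riesz-type Lemma~\ref{duren}. Some version of this detour is unavoidable in your proof.

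In (i)$\Rightarrow$(ii), your auxiliary claim $\lim_{\rho\to1}\|\A^k_{\mu_\rho,s}\|_{H^p\to L^q(\mathbb{S}_n)}=0$ is true, but the proposed derivation only works when $p<q$ (any $s$) or $p=q$, $s\ge2$. The family $f_{a,p}$ detects only the pointwise Carleson density: via the computation (\ref{t5.42}), $\|\A_{\mu,s}^k f_{a,p}\|_{L^q(\mathbb{S}_n)}\to0$ yields $\mu(D(a,r))/(1-|a|)^{ks+(1+s/p-s/q)n}\to0$, and in those regimes Theorem~\ref{thm1.1a}'s sufficiency estimate is phrased exactly in terms of this density, so replaying it with $\mu_\rho$ gives the operator-norm decay. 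But for $p=q$ and $s<2$ --- included in the lemma's hypotheses --- boundedness is governed by the tent-space condition $\Phi_{\mu}\in T^{\infty}_{2/(2-s)}(\tau)$ of Theorem~\ref{thm1.1b}, which is strictly stronger than finiteness of the pointwise density; there is no sufficiency estimate in terms of the density alone, so ``replaying the sufficiency estimates of Section~4 with $\mu_\rho$'' has nothing to replay, and extracting the needed decay $\|(\Phi_\mu)_R\|_{T^\infty_{2/(2-s)}(\tau)}\to0$ from compactness requires the Rademacher--Kahane duality machinery of Lemmas~\ref{test-b} and~\ref{test-c} --- i.e.\ it is as hard as Theorem~\ref{van2}(i), which the paper proves \emph{after}, and \emph{using}, this lemma. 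The paper circumvents the issue entirely: from boundedness it only uses the finite supremum $\sup_{w}\mu(D(w,r))/(1-|w|)^{n+ks}<\infty$, sacrifices an $\eta>0$ in the exponent so that $\mu(D(w,r))/(1-|w|)^{n+ks-\eta}\to0$ automatically near the boundary, absorbs the loss with H\"older's inequality, the admissible maximal function, and Hardy-to-Bergman-tent embeddings, and thereby proves only $\|\A_{\mu,s}^k(f_i)\|_{L^{p/2}(\mathbb{S}_n)}\to0$; compactness is then used, by the same contradiction/a.e.\ mechanism you employ, to upgrade null convergence in the weaker norm $L^{p/2}$ to $L^q$. If you want to salvage your route in the regime $p=q$, $s<2$, the natural repair is precisely this $\eta$-loss into a weaker norm rather than the tail-operator-norm claim.
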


\begin{proof} (i)$\Rightarrow$(ii). 
Let $0<p\le q<\infty$.
Assume that $\A_{\mu,s}^k:H^p\rightarrow L^q(\mathbb{S}_n)$ is compact.
Since $L^q(\mathbb{S}_n)\subseteq L^p(\mathbb{S}_n)$, and this embedding is continuous, 
we know that $\A_{\mu,s}^k:H^p\rightarrow L^p(\mathbb{S}_n)$ is compact.
Let $\{f_i\}$ be a bounded sequence in $H^p$ such that $\{f_i\}$
converges to 0 uniformly on compact subsets of $\mathbb{B}_n$.
Choose a real number $t$ such that $0<t<s$. Similarly to the proof of (\ref{Rin})
we get
\begin{equation}\label{c1}
(\A^k_{\mu, s}f_i(\zeta))^s
\lesssim(f_i^*(\zeta))^{s-t}
\int_{\Gamma(\zeta)}|\mathcal{R}f_i(w)|^t(1-|w|)^{t}\frac{\mu(D(w,r))}{(1-|w|)^{n+ks}}\,d\tau(w).
\end{equation}
Since $\A_{\mu,s}^k:H^p\rightarrow L^p(\mathbb{S}_n)$ is compact, it is also bounded.
By Theorem~\ref{thm1.1a}, we know that
\begin{equation}\label{c2}
\sup_{w\in\B}\frac{\mu(D(w,r))}{(1-|w|)^{n+ks}}<\infty.
\end{equation}
Choose a real number $\eta$ small enough such that $0<\eta<t$. Then
\begin{equation*}
\lim_{|w|\to1}\frac{\mu(D(w,r))}{(1-|w|)^{n+ks-\eta}}=0.
\end{equation*}
Thus, there is a number $\delta$, $0<\delta<1$, such that
\begin{equation}\label{c3}
\sup_{w\in\B\setminus\overline{B(0,\delta)}}\frac{\mu(D(w,r))}{(1-|w|)^{n+ks}}<\e^s.
\end{equation}
Let
$$
I_1=\int_{\Gamma(\zeta)\cap\overline{B(0,\delta)}}|\mathcal{R}f_i(w)|^t(1-|w|)^{t}\frac{\mu(D(w,r))}{(1-|w|)^{n+ks}}\,d\tau(w)
$$
and
\begin{eqnarray*}
I_2&=&\int_{\Gamma(\zeta)\setminus\overline{B(0,\delta)}}|\mathcal{R}f_i(w)|^t(1-|w|)^{t}
\frac{\mu(D(w,r))}{(1-|w|)^{n+ks}}\,d\tau(w)\\
&=&\int_{\Gamma(\zeta)\setminus\overline{B(0,\delta)}}|\mathcal{R}
f_i(w)|^t(1-|w|)^{t-\eta}\frac{\mu(D(w,r))}{(1-|w|)^{n+ks-\eta}}\,d\tau(w)
\end{eqnarray*}
Then (\ref{c1}) can be written as
\begin{equation}\label{c4}
(\A^k_{\mu, s}f_i(\zeta))^s
\lesssim(f_i^*(\zeta))^{s-t}(I_1+I_2).
\end{equation}
By (\ref{c3}), Lemma~\ref{l2.41} and H\"older's inequality we get
\begin{eqnarray}\label{c5}
&~&\int_{\mathbb{S}_n}[(f_i^*(\zeta))^{(s-t)/s}I_2^{1/s}]^{p/2}\,d\sigma(\zeta)\\
&~&\qquad\lesssim\e^{p/2}\int_{\mathbb{S}_n}(f_i^*(\zeta))^{p(s-t)/(2s)}
\left(\int_{\Gamma(\zeta)\setminus\overline{B(0,\delta)}}|\mathcal{R}f_i(w)|^t(1-|w|)^{t-\eta}\,d\tau(w)\right)^{p/(2s)}\,d\sigma(\zeta)\nonumber\\
&~&\qquad\lesssim\e^{p/2}\left(\int_{\mathbb{S}_n}(f_i^*(\zeta))^{p}\,d\sigma(\zeta)\right)^{(s-t)/2s}\times\nonumber\\
&~&\qquad\qquad\times
\left(\int_{\mathbb{S}_n}\left(\int_{\Gamma(\zeta)\setminus\overline{B(0,\delta)}}|\mathcal{R}f_i(w)|^t(1-|w|)^{t-\eta}\,d\tau(w)\right)^{p/(s+t)}
\,d\sigma(\zeta)\right)^{(s+t)/2s}\nonumber\\
&~&\qquad\lesssim\e^{p/2}\|f_i\|_{H^p}^{p(s-t)/2s}
\|f_i\|_{HT_{t,\alpha}^{(pt)/(s+t)}}^{pt/(2s)},\nonumber
\end{eqnarray}
where $\alpha=t-\eta-n-1$.
Choose a real number $\lambda$ such that
$$
\left(\frac{s+t}{pt}+\frac{t-\eta}{nt}\right)^{-1}<\lambda<\frac{pt}{s+t}<p.
$$
Let
$$
\beta=\frac{n\lambda}{pt/(s+t)}+\frac{\lambda(n+1+\alpha)}{t}-n-1
=n\lambda\left[\frac{s+t}{pt}+\frac{t-\eta}{nt}\right]-n-1.
$$
By Lemma 3.1 in \cite{WZ2} and Theorem 2.16 in \cite{KZ2005}
we know that
$$
\|f_i\|_{HT_{t,\alpha}^{(pt)/(s+t)}}
\lesssim\|\mathcal{R}f_i\|_{A^{\lambda}_{\beta+\lambda}}
\asymp\|f_i\|_{\mathrm{A}^{\lambda}_{\beta}}.
$$
From our choice of $\lambda$ we can easily see that $\beta>-1$.
Since $\lambda<p$, we get
$$
H^p\subseteq \mathrm{A}^p_{\beta}\subseteq \mathrm{A}^{\lambda}_{\beta},
$$
and the embedding is continuous. 
Thus $\|f_i\|_{\mathrm{A}^{\lambda}_{\beta}}\lesssim\|f_i\|_{H^p}$, and so
$$
\|f_i\|_{HT_{t,\alpha}^{(pt)/(s+t)}}
\lesssim\|f_i\|_{H^p}.
$$
Combining with (\ref{c5}) we get
\begin{equation}\label{c6}
\int_{\mathbb{S}_n}[(f_i^*(\zeta))^{(s-t)/s}I_2^{1/s}]^{p/2}\,d\sigma(\zeta)
\lesssim\e^{p/2}\|f_i\|_{H^p}^{p(s-t)/2s}\|f_i\|_{H^p}^{pt/(2s)}
=\e^{p/2}\|f_i\|_{H^p}^{p/2}.
\end{equation}
For $I_1$, since $\{\mathcal{R}f_i\}$ converges to 0 uniformly on compact subsets of $\B$,
by (\ref{c2}), there is a positive integer $N$ such that, for any $i\ge N$, we have
$$
I_1\le \e^s.
$$
Therefore, by H\"older's inequality
\begin{eqnarray}\label{c7}
&~&\int_{\mathbb{S}_n}[(f_i^*(\zeta))^{(s-t)/s}I_1^{1/s}]^{p/2}\,d\sigma(\zeta)\\
&~&\qquad=\int_{\mathbb{S}_n}(f_i^*(\zeta))^{p(s-t)/(2s)}I_1^{p/(2s)}\,d\sigma(\zeta)\nonumber\\
&~&\qquad\lesssim\left(\int_{\mathbb{S}_n}(f_i^*(\zeta))^{p}\,d\sigma(\zeta)\right)^{(s-t)/2s}
\left(\int_{\mathbb{S}_n}I_1^{p/(s+t)}\,d\sigma(\zeta)\right)^{(s+t)/2s}\nonumber\\
&~&\qquad\lesssim\e^{p/2}\|f_i\|_{H^p}^{p(s-t)/2s}.\nonumber
\end{eqnarray}
Combining (\ref{c4}), (\ref{c6}) and (\ref{c7}) we get
\begin{equation}\label{c8}
\lim_{i\to\infty}\|A^k_{\mu, s}(f_i)\|_{L^{p/2}(\mathbb{S}_n)}=0.
\end{equation}
We want to show that
\begin{equation}\label{c9}
\lim_{i\to\infty}\|\A_{\mu,s}^k(f_i)\|_{L^q(\mathbb{S}_n)}=0.
\end{equation}
Suppose (\ref{c9}) fails. Then there is a number $\gamma>0$ such that
$\{f_i\}$ has a subsequence $\{f_{i_j}\}$ such that
\begin{equation}\label{c10}
\|\A_{\mu,s}^k(f_{i_j})\|_{L^q(\mathbb{S}_n)}\ge\gamma
\end{equation}
for all $j=1,2,3...$. Since $\{f_i\}$ is a bounded sequence and 
$\A_{\mu,s}^k:\,H^p\rightarrow L^q(\mathbb{S}_n)$ is compact,
$\{f_{i_j}\}$ further has a subsequence $\{f_{i_{j_m}}\}$
such that
\begin{equation}\label{c11}
\lim_{m\to\infty}\|\A_{\mu,s}^k(f_{i_{j_m}})-g\|_{L^q(\mathbb{S}_n)}=0
\end{equation}
for some $g\in L^q(\mathbb{S}_n)$.
Since $0<p/2<p\le q<\infty$, we have $L^q(\mathbb{S}_n)\subseteq L^{p/2}(\mathbb{S}_n)$
and the embedding is continuous. Thus, by (\ref{c11}),
\begin{equation}\label{c12}
\lim_{i\to\infty}\|\A_{\mu,s}^k(f_{i_{j_m}})-g\|_{L^{p/2}(\mathbb{S}_n)}=0.
\end{equation}
Combining (\ref{c8}) and (\ref{c12}) we get
$$
\|g\|_{L^{p/2}(\mathbb{S}_n)}
\lesssim\|g-\A_{\mu,s}^k(f_{i_{j_m}})\|_{L^{p/2}(\mathbb{S}_n)}
+\|\A_{\mu,s}^k(f_{i_{j_m}})\|_{L^{p/2}(\mathbb{S}_n)}\to0
$$
as $m\to\infty$. Thus $g(\zeta)=0$ a.e. on $\mathbb{S}_n$.
Therefore $\|g\|_{L^q(\mathbb{S}_n)}=0$. Combining with (\ref{c11})
we get
$$
\lim_{m\to\infty}\|\A_{\mu,s}^k(f_{i_{j_m}})\|_{L^q(\mathbb{S}_n)}=0,
$$
which contradicts (\ref{c10}). Hence, we must have (\ref{c9}),
or (ii) is true.

(ii)$\Rightarrow$(i).
Suppose (ii) holds. 
Let $\{h_i\}$ be a bounded sequence in $H^p$. 
Then $\{h_i\}$ is a normal family in $\B$, and so it has a subsequence $\{h_{i_j}\}$
such that $\{h_{i_j}\}$ converges uniformly on compact subsets of $\B$ to a holomorphic function $h$. 
By Fatou's Lemma we know that $h\in H^p$.
Since $\A_{\mu,s}^k:H^p\rightarrow L^q(\mathbb{S}_n)$ is bounded, we know that
$\A_{\mu,s}^k(h_i)\in L^q(\mathbb{S}_n)$, and $\A_{\mu,s}^k(h)\in L^q(\mathbb{S}_n)$.

Let $g_j=h_{i_j}-h$. 
Then clearly $\{g_j\}$ is a also bounded sequence in $H^p$, 
and it converges to 0 uniformly on compact subsets of $\B$.
By (ii), we know that 
\begin{equation}\label{g0}
\lim_{j\to\infty}\|\A_{\mu,s}^k(g_j)\|_{L^q(\mathbb{S}_n)}=0.
\end{equation}

We first consider the case $s\ge1$. 
By Minkowski's inequality with index $s\ge 1$ we get that
$$
0\le |\A^k_{\mu,s}(h_{i_j})-\A^k_{\mu,s}(h)|\le \A^k_{\mu,s}(g_{j}).
$$
Thus
$$
\lim_{j\to\infty}\|\A^k_{\mu,s}(h_{i_j})-\A^k_{\mu,s}(h)\|_{L^q(\mathbb{S}_n)}
=\lim_{j\to\infty}\|\A^k_{\mu,s}(g_{j})\|_{L^q(\mathbb{S}_n)}=0.
$$
Hence, $\{\A^k_{\mu,s}(h_{i})\}$ has a convergent subsequence in $L^q(\mathbb{S}_n)$,
and so  $\A^k_{\mu,s}:\,H^p\to L^q(\mathbb{S}_n)$ is compact.

Next, we consider the case $0<s<1$.
In this case we obviously have
$$
|(\A^k_{\mu,s}(h_{i_j}))^s-(\A^k_{\mu,s}(h))^s|\le (\A^k_{\mu,s}(g_{j}))^s.
$$
Thus
$$
\|(\A^k_{\mu,s}(h_{i_j}))^s-(\A^k_{\mu,s}(h))^s\|^{q/s}_{L^{q/s}(\mathbb{S}_n)}
\le\|\A^k_{\mu,s}(g_{j})\|^q_{L^{q}(\mathbb{S}_n)}.
$$
Therefore, by (\ref{g0}),
\begin{equation}\label{lim1}
\lim_{j\to\infty}\|(\A^k_{\mu,s}(h_{i_j}))^s-(\A^k_{\mu,s}(h))^s\|_{L^{q/s}(\mathbb{S}_n)}=0.
\end{equation}
By Lemma~\ref{rudin} and Remark~\ref{rudin1},
we know that $\{h_{i_j}\}$ has a subsequence of $\{h_{i_{j_m}}\}$ such that 
\begin{equation}\label{limae}
\A^k_{\mu,s}(h_{i_{j_m}})\to \A^k_{\mu,s}(h)\quad \textrm{a.e. on }\mathbb{S}_n.
\end{equation}
Correspondingly, we denote by $g_{m}=h_{i_{j_m}}-h$.
Now we consider two cases. 
First, let $q\ge s$. Since $0<s<1$ and $q/s\ge1$, 
by Minkowski's inequality with index $q/s$ we get
$$
\|(\A^k_{\mu,s}(h_{i_{j_m}}))^s\|_{L^{q/s}(\mathbb{S}_n)}
\le\|(\A^k_{\mu,s}(h))^s\|_{L^{q/s}(\mathbb{S}_n)}+\|(\A^k_{\mu,s}(g_{m}))^s\|_{L^{q/s}(\mathbb{S}_n)},
$$
and similarly,
$$
\|(\A^k_{\mu,s}(h))^s\|_{L^{q/s}(\mathbb{S}_n)}
\le\|(\A^k_{\mu,s}(h_{i_{j_m}})^s\|_{L^{q/s}(\mathbb{S}_n)}+\|(\A^k_{\mu,s}(g_{m}))^s\|_{L^{q/s}(\mathbb{S}_n)}.
$$
By taking limits with $m\to\infty$ on both sides of the above inequalities,
and applying (\ref{g0}), we get
\begin{equation}\label{ak1}
\lim_{i\to\infty}\|\A^k_{\mu,s}(h_{i_{j_m}})\|^s_{L^q(\mathbb{S}_n)}
=\lim_{i\to\infty}\|(\A^k_{\mu,s}(h_{i_{j_m}}))^s\|_{L^{q/s}(\mathbb{S}_n)}
=\|(\A^k_{\mu,s}(h))^s\|_{L^{q/s}(\mathbb{S}_n)}
=\|(\A^k_{\mu,s}(h))\|^s_{L^{q}(\mathbb{S}_n)}.
\end{equation}
Next, let $0<q<s$. Since $0<q/s<1$ we have
Since
\begin{eqnarray*}
[\A^k_{\mu,s}(h_{i_{j_m}})(\zeta)]^q
&=&|\A^k_{\mu,s}((h_{i_{j_m}})(\zeta))^s|^{q/s}\\
&=&|(\A^k_{\mu,s}(h_{i_{j_m}})(\zeta))^s-(\A^k_{\mu,s}(h)(\zeta))^s+(\A^k_{\mu,s}(h)(\zeta))^s|^{q/s}\\
&\le&|(\A^k_{\mu,s}(h_{i_{j_m}})(\zeta))^s-(\A^k_{\mu,s}(h)(\zeta))^s|^{q/s}+(\A^k_{\mu,s}(h)(\zeta))^q,
\end{eqnarray*}
we get that, by (\ref{lim1}),
\begin{equation}\label{a1}
\|\A^k_{\mu,s}(h_{i_{j_m}})\|^q_{L^{q}(\mathbb{S}_n)}
\le\|(\A^k_{\mu,s}(h_{i_{j_m}}))^s-(\A^k_{\mu,s}(h))^s\|^{q/s}_{L^{q/s}(\mathbb{S}_n)}
+\|\A^k_{\mu,s}(h)\|^q_{L^{q}(\mathbb{S}_n)}
\end{equation}
Similarly, we can get
\begin{equation}\label{a2}
\|\A^k_{\mu,s}(h)\|^q_{L^{q}(\mathbb{S}_n)}
\le\|(\A^k_{\mu,s}(h))^s-(\A^k_{\mu,s}(h_{i_{j_m}}))^s\|^{q/s}_{L^{q/s}(\mathbb{S}_n)}
+\|\A^k_{\mu,s}(h_{i_{j_m}})\|^q_{L^{q}(\mathbb{S}_n)}
\end{equation}
Taking limit with $m\to\infty$ on both sides of (\ref{a1}) and (\ref{a2}), and using (\ref{lim1}),
we again obtain that
\begin{equation}\label{ak2}
\lim_{i\to\infty}\|\A^k_{\mu,s}(h_{i_{j_m}})\|^q_{L^{q}(\mathbb{S}_n)}
=\|\A^k_{\mu,s}(h)\|^q_{L^{q}(\mathbb{S}_n)}.
\end{equation}
Combining (\ref{ak1}), (\ref{ak2}) and (\ref{limae}), by Lemma~\ref{duren}, we get that
$$
\lim_{i\to\infty}
\|\A^k_{\mu,s}(h_{i_{j_m}})-\A^k_{\mu,s}(h)\|_{L^{q}(\mathbb{S}_n)}=0.
$$
Thus $\A^k_{\mu,s}:\,H^p\to L^q(\mathbb{S}_n)$ is compact, or (i) is true. The proof is complete.
\end{proof}

\begin{remark}
The proof for (ii)$\Rightarrow$(i) works for any $0<p,q,s<\infty$.
\end{remark}

\begin{thm}\label{van1}
Suppose $0<p<q<\infty$ or $p=q, s\geq 2$. Let  $\mu$ be a positive Borel measure on $\mathbb{B}_n$, finite
on compact sets of $\mathbb{B}_n$, and let $0<r<1$. 
Then for $k \in \mathbb N$,
$\A_{\mu,s}^k:H^p\rightarrow L^q(\mathbb{S}_n)$
is compact if and only if $\mu$ is a vanishing $(ks/n+1+s/p-s/q)$-Carleson measure,
that is
\begr\label{van0}
\lim_{|a|\rightarrow 1}\frac{\mu(D(a,r))}{(1-|a|)^{ks+(1+s/p-s/q)n}}=0.
\endr
\end{thm}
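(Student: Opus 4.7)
\medskip

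\noindent\textbf{Proof proposal for Theorem~\ref{van1}.}
The plan is to use the characterization of compactness given by Lemma~\ref{compact}, together with the boundedness result of Theorem~\ref{thm1.1a}, exactly in the spirit of how vanishing Carleson measures are shown to characterize compactness of embedding operators.

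For the necessity, I would use the same family of test functions
$f_{a,p}(z) = (1-|a|^2)^{mn}/(1-\langle a,z\rangle)^{(1/p+m)n}$ used in the proof of Theorem~\ref{thm1.1a}. These functions satisfy $\|f_{a,p}\|_{H^p} \asymp 1$ uniformly in $a$, and, as $|a|\to 1$, the sequence $\{f_{a,p}\}$ converges to $0$ uniformly on compact subsets of $\mathbb{B}_n$. Assuming $\mathcal{A}^k_{\mu,s}:H^p\to L^q(\mathbb{S}_n)$ is compact, Lemma~\ref{compact} gives $\|\mathcal{A}^k_{\mu,s}(f_{a,p})\|_{L^q(\mathbb{S}_n)} \to 0$. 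The chain of inequalities \eqref{t5.42} then gives
\[
\frac{\mu(D(a,r))}{(1-|a|)^{ks+(1+s/p-s/q)n}} \lesssim \|\mathcal{A}^k_{\mu,s}(f_{a,p})\|_{L^q(\mathbb{S}_n)}^{s/q} \longrightarrow 0
\]
as $|a|\to 1$, which is precisely \eqref{van0}.

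For the sufficiency, I would apply Lemma~\ref{compact} in the other direction. Let $\{f_i\}$ be a bounded sequence in $H^p$ converging to $0$ uniformly on compact subsets of $\mathbb{B}_n$; the goal is to show $\|\mathcal{A}^k_{\mu,s}(f_i)\|_{L^q(\mathbb{S}_n)}\to 0$. For $0<R<1$ decompose $\mu = \mu^{(1)}_R + \mu^{(2)}_R$ with $\mu^{(1)}_R = \mu|_{\overline{B(0,R)}}$ and $\mu^{(2)}_R = \mu|_{\mathbb{B}_n\setminus\overline{B(0,R)}}$, so that
\[
(\mathcal{A}^k_{\mu,s}(f_i)(\zeta))^s = (\mathcal{A}^k_{\mu^{(1)}_R,s}(f_i)(\zeta))^s + (\mathcal{A}^k_{\mu^{(2)}_R,s}(f_i)(\zeta))^s.
\]
Raising to the power $q/s$ and using either Minkowski's inequality (when $q/s\ge 1$) or the subadditivity $(a+b)^{q/s}\le a^{q/s}+b^{q/s}$ (when $q/s<1$), I obtain
\[
\|\mathcal{A}^k_{\mu,s}(f_i)\|_{L^q(\mathbb{S}_n)}^q \lesssim \|\mathcal{A}^k_{\mu^{(1)}_R,s}(f_i)\|_{L^q(\mathbb{S}_n)}^q + \|\mathcal{A}^k_{\mu^{(2)}_R,s}(f_i)\|_{L^q(\mathbb{S}_n)}^q.
\]
The first term is easy: on the compact set $\overline{B(0,R)}$, the uniform convergence $\mathcal{R}^k f_i\to 0$ gives a crude pointwise bound that is uniform in $\zeta$ and that tends to $0$ as $i\to\infty$ (using $(1-|z|)^{-n}\le(1-R)^{-n}$ and the finiteness of $\mu$ on compact subsets).

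The main task is to control the second term by $\epsilon \|f_i\|_{H^p}^q$ for $R$ close to $1$. For this I would verify that the vanishing condition \eqref{van0} forces
\[
\sup_{z\in\mathbb{B}_n}\frac{\mu^{(2)}_R(D(z,r))}{(1-|z|)^{ks+(1+s/p-s/q)n}} \longrightarrow 0 \quad \text{as } R\to 1.
\]
Here the observation is that in a Bergman ball $D(z,r)$ one has $1-|w|\asymp 1-|z|$, so if $|z|$ is sufficiently far from $1$ (depending only on $R$ and $r$), then $D(z,r)\subset \overline{B(0,R)}$ and $\mu^{(2)}_R(D(z,r))=0$; for the remaining $z$ with $|z|$ close to $1$, $\mu^{(2)}_R(D(z,r))\le \mu(D(z,r))$ and \eqref{van0} makes this quotient small uniformly. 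Once this is established, Theorem~\ref{thm1.1a} applied to $\mu^{(2)}_R$ gives $\|\mathcal{A}^k_{\mu^{(2)}_R,s}\|_{H^p\to L^q(\mathbb{S}_n)}\to 0$, and combining both estimates yields $\limsup_{i\to\infty}\|\mathcal{A}^k_{\mu,s}(f_i)\|_{L^q(\mathbb{S}_n)}\le \epsilon$ for every $\epsilon>0$. Lemma~\ref{compact} then completes the proof.

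The main obstacle I anticipate is the bookkeeping around the decomposition for the full range $0<p\le q<\infty$, $0<s<\infty$ with $s\ge 2$ when $p=q$: one must handle both $q/s\ge 1$ and $q/s<1$ when squaring out the area-integral decomposition, and the verification that $\mu^{(2)}_R$ has vanishing Carleson norm in the sense of Theorem~\ref{thm1.1a} requires the Bergman-geometric comparison of $1-|w|$ with $1-|z|$ on $D(z,r)$ together with a uniformity argument in $z$. All other steps are routine applications of results already established earlier in the paper.
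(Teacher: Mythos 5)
Your proposal is correct, and the necessity half is verbatim the paper's argument: the test functions $f_{a,p}$, Lemma~\ref{compact}, and the chain \eqref{t5.42}. The sufficiency half reaches the same conclusion by a genuinely different packaging. The paper keeps $\mu$ intact and splits the lattice sum in \eqref{11} (resp.\ \eqref{Rin} when $p=q$, $s\ge 2$) over $a_j\in\widetilde{\Gamma}(\zeta)\cap\overline{B(0,\delta)}$ and its complement, rerunning the boundedness estimates inline (this is \eqref{11-2} together with \eqref{i2} and \eqref{i1}); you instead decompose the measure, $\mu=\mu|_{\overline{B(0,R)}}+\mu|_{\B\setminus\overline{B(0,R)}}$, use the exact additivity $(\A^k_{\mu,s}f)^s=(\A^k_{\mu^{(1)}_R,s}f)^s+(\A^k_{\mu^{(2)}_R,s}f)^s$, and invoke Theorem~\ref{thm1.1a} as a black box on the tail measure. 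The one point you must make explicit is that the statement of Theorem~\ref{thm1.1a} is qualitative (``bounded iff''), whereas your argument needs the quantitative form
\begin{equation*}
\|\A^k_{\mu,s}\|^s_{H^p\to L^q(\mathbb{S}_n)}
\lesssim \sup_{z\in\B}\frac{\mu(D(z,r))}{(1-|z|)^{ks+(1+s/p-s/q)n}},
\end{equation*}
with implicit constant independent of $\mu$; this does follow from the paper's proof (via \eqref{11-0} and the continuous embedding $H^p\subseteq HT^q_{s,\alpha}$ when $p<q$, and via \eqref{Rin} plus H\"older, Lemma~\ref{l2.41} and Lemma~\ref{hp} when $p=q$, $s\ge2$), but it is not literally what the theorem asserts, so you should state it and point to the proof. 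Your geometric verification that $\sup_z \mu^{(2)}_R(D(z,r))(1-|z|)^{-ks-(1+s/p-s/q)n}\to0$ as $R\to1$ — using $1-|w|\asymp 1-|z|$ on $D(z,r)$ so that $D(z,r)\subset\overline{B(0,R)}$ once $|z|\le 1-(1-R)/c_r$ — is sound and is the same fact the paper uses implicitly when splitting at $\overline{B(0,\delta)}$; likewise both arguments need, and have, the observation that \eqref{van0} plus finiteness on compacts yields \eqref{t5.41}, so $\A^k_{\mu,s}$ is bounded and Lemma~\ref{compact} applies. The trade-off: your route is more modular and reusable (the standard ``compactness from quantitative boundedness of tail measures'' scheme, in the spirit of Lemma~\ref{carleson-v}(iii)), at the cost of upgrading Theorem~\ref{thm1.1a} to a norm estimate; the paper's inline splitting avoids that upgrade but repeats the boundedness computation.
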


\begin{proof}
First, we consider the necessity. 
Let $\A_{\mu,s}^k:H^p\rightarrow L^q(\mathbb{S}_n)$ be compact.
Set
$$
f_{a,p}(z)=\frac{(1-|a|^2)^{mn}}{(1-\langle a,z\rangle)^{(1/p+m)n}}.
$$
It can be easily checked that $f_{a,p}\in H^{p}$,
$\sup_{a\in\B}\|f_{a,p}\|_{H^{p}}\lesssim 1$, 
and $f_{a,p}$ converges uniformly to zero on compact subsets of $\B$ as $|a|\to1$.
Since $\A_{\,\mu,\,s}^k:H^p\rightarrow L^q(\mathbb{S}_n)$
is compact, by Lemma~\ref{compact}, we get that
\[
\lim_{|a|\to1}\|\A_{\,\mu,\,s}^k(f_{a,p})\|_{L^q(\mathbb{S}_n)}=0.
\]
Therefore, by (\ref{t5.42}) we have
\[
\lim_{|a|\to 1}\frac{\mu(D(a,r))}{(1-|a|)^{ks+(1+s/p-s/q)n}}=0.
\]

Conversely, we first consider the case $0<p<q<\infty$.
Assume that (\ref{van0}) holds. 
Then obviously (\ref{t5.41}) holds and so, by Theorem~\ref{thm1.1a},
$\A_{\,\mu,\,s}^k:H^p\rightarrow L^q(\mathbb{S}_n)$ is bounded.
Let $\{g_i\}_{i=1}^\infty$ be a bounded sequence in $H^p$
and converges to 0 uniformly on compact subsets of $\B$.
By (\ref{van0}), for any $\varepsilon>0$, there is a $\delta\in(0,1)$
such that
\begin{equation}\label{ep}
\frac{\mu(D(a,r))}{(1-|a|)^{ks+(1+s/p-s/q)n}}<\varepsilon^{s/p}
\end{equation}
for any $a\in\B\setminus\overline{B(0,\delta)}$.
Let $\alpha=ns\left(\frac{1}{p}-\frac{1}{q}\right)-n-1$.
By (\ref{11}) we have
\begr\label{11-2}
&~&(\A^k_{\mu, s}g_i(\zeta))^s\\
&~&\qquad\lesssim\sum_{a_j \in \widetilde{\Gamma}(\zeta)}\int_{D(a_j, 2r)}|g_i(w)|^s(1-|w|)^{\alpha}\,dV(w)
\frac{\mu(D(a_j,r))}{(1-|a_j|)^{ks+n(1+\frac{s}{p}-\frac{s}{q})}}\nonumber\\
&~&\qquad\lesssim\sum_{a_j \in \widetilde{\Gamma}(\zeta)\cap \overline{B(0,\delta)}}
\int_{D(a_j,2r)}|g_i(w)|^s(1-|w|)^{\alpha}\,dV(w)\frac{\mu(D(a_j, r))}{(1-|a_j|)^{ks+n(1+\frac{s}{p}-\frac{s}{q})}}\nonumber\\
&~&\qquad\quad+\sum_{a_j \in \widetilde{\Gamma}(\zeta)\setminus\overline{B(0,\delta)}}
\int_{D(a_j,2r)}|g_i(w)|^s(1-|w|)^{\alpha}\,dV(w)\frac{\mu(D(a_j, r))}{(1-|a_j|)^{ks+n(1+\frac{s}{p}-\frac{s}{q})}}\nonumber\\
&~&\qquad=I_1+I_2.\nonumber
\endr
By Lemma~\ref{l51} and (\ref{ep}),
\begin{eqnarray*}
I_2\lesssim \varepsilon^{s/p}\int_{\widetilde{\widetilde{\Gamma}}(\zeta)}|g_i(w)|^s(1-|w|)^{\alpha}\,dV(w)
\end{eqnarray*}
Thus, by (\ref{h-ht}),
\begin{eqnarray}\label{i2}
\int_{\mathbb{S}_n}I_2^{q/s}\,d\sigma(\zeta)
&\lesssim&\varepsilon\int_{\mathbb{S}_n}\left(\int_{\widetilde{\widetilde{\Gamma}}(\zeta)}
|g_i(w)|^s(1-|w|)^{\alpha}\,dV(w)\right)^{q/s}\,d\sigma(\zeta)\\
&\lesssim&\varepsilon\,\|g_i\|_{HT_{s,\alpha}^q}^q
\lesssim\varepsilon\,\|g_i\|_{H^p}^q
\lesssim\varepsilon.\nonumber
\end{eqnarray}
For $I_1$, since $w\in D(a_j,2r)$, and $a_j\in\widetilde{\Gamma}(\zeta)\cap\overline{B(0,\delta)}$,
there is a $\delta'\in(0,1)$ such that $w\in\overline{B(0,\delta')}$.
Since $g_i\to0$ uniformly on compact subsets of $\B$, by (\ref{t5.41}) we get that
$$
I_1\lesssim\int_{\overline{B(0,\delta')}}|g_i(w)|^s(1-|w|)^{\alpha}\,dV(w)\to 0
$$
as $i\to\infty$. Thus there is a positive integer $N$ such that
\begin{equation}\label{i1}
\int_{\mathbb{S}_n}I_1^{q/s}\,d\sigma(\zeta)\lesssim\varepsilon
\end{equation}
for any $i\ge N$. Combining (11-2), (\ref{i2}) and (\ref{i1}) we get that
$$
\|\A_{\mu,\,q}^k(g_i)\|_{L^q(\mathbb{S}_n)}^q
\lesssim\int_{\mathbb{S}_n}(\A_{\mu,\,q}^k(g_i)(\zeta))^{q/s}\,d\sigma(\zeta)
\lesssim\int_{\mathbb{S}_n}(I_1^{q/s}+I_2^{q/s})\,d\sigma(\zeta)
\lesssim\varepsilon
$$
for any $i\ge N$. Thus
\begin{equation*}
\lim_{i\to\infty}\|\A_{\mu,\,q}^k(g_i)\|_{L^q(\mathbb{S}_n)}^q=0.
\end{equation*}
By Lemma~\ref{compact}, 
$\A_{\mu,s}^k:H^p\rightarrow L^q(\mathbb{S}_n)$ is compact.

Next, we consider the case $p=q, s\geq 2$,
From \eqref{Rin}, 
by dividing 
$\widetilde{\Gamma}(\zeta)$ into $\widetilde{\Gamma}(\zeta)\cap\overline{B(0,\delta)}$
and 
$\widetilde{\Gamma}(\zeta)\setminus\overline{B(0,\delta)}$
and arguing as in the previous case we can also see that 
$$
\lim_{i\to\infty}\|\A_{\mu,\,q}^k(g_i)\|_{L^q(\mathbb{S}_n)}^q=0.
$$
We omit the details here.
Thus, by Lemma~\ref{compact},
$\A^k_{\,\mu,s}: H^p\rightarrow L^q(\mathbb{S}_n)$ is compact.
The proof is complete.
\end{proof}

For other cases, we need some lemmas.

\begin{lem}\label{tent01}
Let $0<p,s<\infty$, let $k$ be a nonnegative integer, and let $\mu$ be a finite, positive Borel measure on $\B$. 
If a family $\mathcal{G}$ of functions in $T^p_s(\mu)$ is relatively compact then
for any $\e>0$, there exists an $r\in(0,1)$ such that 
$$
\sup_{g\in\mathcal{G}}
\int_{\mathbb{S}_n}\left(\int_{\Gamma(\zeta)\setminus{\overline{B(0,r)}})}|g(z)|^s\,d\mu(z)\right)^{p/s}\,d\sigma(\zeta)<\e.
$$ 
\end{lem}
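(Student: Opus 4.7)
The plan is to combine total boundedness of $\mathcal{G}$ with a single-function tail estimate. First I would show that for each fixed $g \in T^p_s(\mu)$,
$$
T_r(g) := \int_{\mathbb{S}_n}\left(\int_{\Gamma(\zeta)\setminus\overline{B(0,r)}}|g(z)|^s\,d\mu(z)\right)^{p/s}\,d\sigma(\zeta) \longrightarrow 0 \qquad (r\to 1^-).
$$
This follows from two applications of dominated convergence: since $g\in T^p_s(\mu)$, the inner integral $\int_{\Gamma(\zeta)}|g|^s\,d\mu$ is finite for $\sigma$-a.e.\ $\zeta$; as $r\to 1^-$, the indicator $\chi_{\Gamma(\zeta)\setminus\overline{B(0,r)}}$ tends to $0$ pointwise on $\B$, so the inner integral vanishes a.e., and raised to $p/s$ it is dominated by the $L^1(\sigma)$-majorant $\bigl(\int_{\Gamma(\zeta)}|g|^s\,d\mu\bigr)^{p/s}$.

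Next I would upgrade this single-function statement to uniform smallness via relative compactness. Because $T^p_s(\mu)$ is a (complete) quasi-Banach space, relative compactness of $\mathcal{G}$ is equivalent to total boundedness, so for any $\delta>0$ we can cover $\mathcal{G}$ by finitely many quasi-balls $B(g_1,\delta),\dots,B(g_N,\delta)$ centered at $g_i\in\mathcal{G}$. For each $i$, the first step supplies $r_i\in(0,1)$ such that $T_r(g_i)<\e/(2C)$ whenever $r\ge r_i$; set $r_0:=\max_i r_i$.

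For arbitrary $g\in\mathcal{G}$, pick $g_i$ with $\|g-g_i\|_{T^p_s(\mu)}<\delta$. The functional $g\mapsto T_r(g)^{1/p}$ is the $T^p_s$ quasi-norm with $\mu$ replaced by the restricted measure $\mu|_{\B\setminus\overline{B(0,r)}}\le\mu$, so a quasi-triangle inequality gives
$$
T_r(g) \le C\bigl(T_r(g_i) + \|g-g_i\|_{T^p_s(\mu)}^{p}\bigr) \le C\,T_r(g_i) + C\delta^{p},
$$
with $C$ depending only on $p,s$. Choosing $\delta^p < \e/(2C)$ forces $T_r(g)<\e$ uniformly in $g\in\mathcal{G}$ and $r\ge r_0$.

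The main obstacle, beyond verifying that $T^p_s(\mu)$ is a genuine quasi-Banach space, is keeping track of the quasi-triangle constant $C$ across the four subregimes $s\gtreqless 1$ and $p/s\gtreqless 1$. When $s\ge 1$ one uses Minkowski on the inner integral, and when $s<1$ one uses the elementary inequality $(a+b)^s\le a^s+b^s$; the outer $L^{p/s}(\mathbb{S}_n)$ quasi-norm is handled analogously. In every case a fixed exponent $\kappa=\min(p,s)\min(p/s,1)$ makes $T_r^{\kappa/p}$ genuinely subadditive, reducing the argument to a standard $\e/N$ bookkeeping.
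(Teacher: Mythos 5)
Your proof is correct and follows essentially the same route as the paper's: a finite net obtained from relative compactness, dominated convergence applied to each net element to kill its tail, and a quasi-triangle inequality to transfer the smallness to all of $\mathcal{G}$ (the paper uses an $(\e/2)^{1/p}$-net and hides the quasi-norm constant in a $\lesssim$, which is exactly your $\e/(2C)$ bookkeeping made explicit). One small correction to your closing remark: the exponent $\kappa=\min(p,s)\min(p/s,1)$ can exceed $1$ (e.g.\ $p=s=2$ gives $\kappa=2$), and no power greater than $1$ of a subadditive functional is subadditive, so the correct choice is $\kappa=\min(1,p,s)$; this does not affect your argument, since your main estimate $T_r(g)\le C\bigl(T_r(g_i)+\|g-g_i\|_{T^p_s(\mu)}^{p}\bigr)$ only requires the quasi-triangle inequality with some constant $C=C(p,s)$.
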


\begin{proof} Since $\mathcal{G}$ is a relatively compact set in $T^p_s(\mu)$,
for any $\e>0$, we can find a finite $(\e/2)^{1/p}$-net $\{g_j\}_{j=1}^l$ for the family $\mathcal{G}$.
Let $g$ be any element in $\mathcal{G}$.
Then there is a $j\in\{1,...,l\}$ such that
$$
\|g-g_j\|_{T^p_s(\mu)}^p<\e/2.
$$
By the dominated convergence theorem, 
for the above $\e>0$, and
for each $j\in\{1,...,l\}$ there is an $r_j\in(0,1)$ such that
$$
\int_{\mathbb{S}_n}
\left(\int_{\Gamma(\zeta)\setminus{\overline{B(0,r_j)}})}|g_j(z)|^s\,d\mu(z)\right)^{p/s}\,d\sigma(\zeta)<\frac{\e}2.
$$
Let $r=\max\{r_j:\,j=1,...,l\}$. 
Then
\begin{eqnarray*}
&~&\int_{\mathbb{S}_n}\left(\int_{\Gamma(\zeta)\setminus{\overline{B(0,r)}})}|g(z)|^s\,d\mu(z)\right)^{p/s}\,d\sigma(\zeta)\\
&~&\qquad\lesssim\int_{\mathbb{S}_n}\left(\int_{\Gamma(\zeta)\setminus{\overline{B(0,r)}})}|g(z)-g_j(z)|^s\,d\mu(z)\right)^{p/s}\,d\sigma(\zeta)\\
&~&\qquad\qquad+\int_{\mathbb{S}_n}\left(\int_{\Gamma(\zeta)\setminus{\overline{B(0,r)}})}|g_j(z)|^s\,d\mu(z)\right)^{p/s}\,d\sigma(\zeta)\\
&~&\qquad<\frac{\e}{2}+\frac{\e}{2}=\e.
\end{eqnarray*}
The proof is complete.
\end{proof}

\begin{cor}\label{tent02}
Let $0<p,q,s<\infty$, let $k$ be a nonnegative integer, and let $\mu$ be a finite, positive Borel measure on $\B$. 
Let $\mathcal{A}_{\mu,s}^k:\,H^p\to L^q(\mathbb{S}_n)$ be compact. 
Let $\mathcal F$ be a bounded family in $H^p$.
Then, for any $\e>0$, there exists an $r\in(0,1)$ such that  
$$ 
\sup_{f\in\mathcal{F}}
\int_{\mathbb{S}_n}\left(\int_{\Gamma(\zeta)\setminus{\overline{B(0,r)}})}
|R^kf(z)|^s\,\frac{d\mu(z)}{(1-|z|)^n}\right)^{q/s}\,d\sigma(\zeta)<\e.
$$
\end{cor}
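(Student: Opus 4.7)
The plan is to reduce Corollary \ref{tent02} to Lemma \ref{tent01} via an operator-theoretic reformulation. Introduce the auxiliary measure $d\nu(z) = d\mu(z)/(1-|z|)^n$ and the linear operator $T \colon H^p \to T^q_s(\nu)$ defined by $Tf = \mathcal{R}^k f$. By the very definitions of $\mathcal{A}_{\mu,s}^k$ and of the tent space norm,
$$
\|Tf\|_{T^q_s(\nu)} = \|\mathcal{A}_{\mu,s}^k(f)\|_{L^q(\mathbb{S}_n)},
$$
so boundedness of $\mathcal{A}_{\mu,s}^k \colon H^p \to L^q(\mathbb{S}_n)$ (which is implied by its compactness) translates into boundedness of $T$.

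The main step is to upgrade the sublinear compactness of $\mathcal{A}_{\mu,s}^k$ to the linear compactness of $T$. Let $\{f_j\}$ be a bounded sequence in $H^p$. Montel's theorem together with Fatou's lemma yield a subsequence $\{f_{j_k}\}$ and some $f \in H^p$ such that $f_{j_k} \to f$ uniformly on compact subsets of $\B$. Then $\{f_{j_k} - f\}$ is bounded in $H^p$ and tends to $0$ uniformly on compact subsets of $\B$, so Lemma \ref{compact} (implication (i)$\Rightarrow$(ii)) yields $\|\mathcal{A}_{\mu,s}^k(f_{j_k} - f)\|_{L^q(\mathbb{S}_n)} \to 0$. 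Equivalently $\|Tf_{j_k} - Tf\|_{T^q_s(\nu)} \to 0$, so $T$ is a compact linear operator.

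Consequently $T(\mathcal{F}) = \{\mathcal{R}^k f : f \in \mathcal{F}\}$ is a relatively compact subset of $T^q_s(\nu)$. Apply Lemma \ref{tent01} to this family (with the exponent $p$ in that lemma replaced by $q$ and with $\mu$ replaced by $\nu$): for any $\varepsilon > 0$, there exists $r \in (0,1)$ such that
$$
\sup_{f \in \mathcal{F}} \int_{\mathbb{S}_n} \left(\int_{\Gamma(\zeta) \setminus \overline{B(0,r)}} |\mathcal{R}^k f(z)|^s\, d\nu(z)\right)^{q/s} d\sigma(\zeta) < \varepsilon,
$$
which is exactly the conclusion of the corollary once one substitutes $d\nu(z) = d\mu(z)/(1-|z|)^n$.

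The expected obstacle is not computational but a small matter of hypothesis matching. Lemma \ref{tent01} is stated for a \emph{finite} positive Borel measure, while $\nu$ is typically infinite on $\B$. However, its proof uses finiteness only to apply the dominated convergence theorem, and this application requires only that each $g$ in the relatively compact family belong to $T^q_s(\nu)$; the dominating functions are $|g|^s \chi_{\Gamma(\zeta)}$ for the inner integral and the $q$-th power of the tent-norm integrand for the outer, both of which are integrable from the membership $g \in T^q_s(\nu)$ alone. So the proof of Lemma \ref{tent01} transfers verbatim, and the argument is complete. A second minor point is that Lemma \ref{compact} is formally stated for $0 < p \le q < \infty$; for $p > q$ one can either invoke the continuous embedding $L^q(\mathbb{S}_n) \hookrightarrow L^{\min(p,q)}(\mathbb{S}_n)$ to reduce to the proved case, or repeat the proof of Lemma \ref{compact}, whose splitting into tents and compact/non-compact parts of $\B$ does not genuinely require $p \le q$.
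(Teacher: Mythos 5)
Your route coincides with the paper's: the paper likewise linearizes the problem by setting $A^k_s f=\mathcal{R}^kf\,\chi_{\Gamma(\zeta)}(1-|z|)^{-n/s}$, notes the norm identity $\|\mathcal{A}_{\mu,s}^k(f)\|_{L^q(\mathbb{S}_n)}=\|A^k_sf\|_{T^q_s(\mu)}$, and then applies Lemma~\ref{tent01}. The only cosmetic difference is bookkeeping: you push the weight $(1-|z|)^{-n}$ into the measure $\nu$, which forces you to argue that the finiteness hypothesis of Lemma~\ref{tent01} is inessential. That observation is correct (the dominated convergence argument there only needs $A_{s,\nu}(g_j)^{q}\in L^1(\sigma)$ and the a.e.\ finiteness of the inner integrals, both of which follow from tent-space membership), but you could avoid the issue entirely by the paper's normalization, which keeps the finite measure $\mu$ and weights the function instead. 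On one point you are genuinely more careful than the paper: the paper asserts in one line that compactness of the sublinear $\mathcal{A}_{\mu,s}^k$ is equivalent to compactness of the linear map into the tent space, whereas you correctly recognize that this transfer is nontrivial (since $\|\mathcal{A}f-\mathcal{A}g\|\neq\|\mathcal{A}(f-g)\|$ for a sublinear operator, relative compactness of the image set does not pass across the norm identity) and supply the Montel-plus-Lemma~\ref{compact} argument that the paper leaves implicit.

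The gap is in your treatment of $p>q$, a case the corollary must cover because it is invoked through Lemma~\ref{test-c} in Theorem~\ref{van2} precisely when $p>q$. Your first patch is vacuous: for $p>q$ one has $\min(p,q)=q$, so the embedding $L^q(\mathbb{S}_n)\hookrightarrow L^{\min(p,q)}(\mathbb{S}_n)$ is the identity and reduces nothing. Your second patch --- that the proof of Lemma~\ref{compact} ``does not genuinely require $p\le q$'' --- is false as stated. The implication (i)$\Rightarrow$(ii) there uses $p\le q$ in at least two essential ways: it invokes Theorem~\ref{thm1.1a} to extract the pointwise bound $\sup_w \mu(D(w,r))/(1-|w|)^{n+ks}<\infty$ from boundedness, which is available only when $p\le q$ (for $p>q$, by Theorem~\ref{thm1.1c}, boundedness yields only tent-space membership of $\Phi_\mu$, which gives no pointwise bound); and it uses the continuous inclusion $L^q(\mathbb{S}_n)\subseteq L^{p/2}(\mathbb{S}_n)$, which requires $q\ge p/2$ and fails, e.g., for $p=4$, $q=1$. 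So your argument is complete for $p\le q$ but open for $p>q$, where the $L^{p/2}$-decay-from-boundedness scheme breaks down and a genuinely different use of compactness is needed. In fairness, the paper's own one-line ``compact if and only if compact'' elides exactly the same difficulty, so your proposal is no weaker than the published proof; but the two shortcuts you offer do not close it.
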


\begin{proof}
Define an operator $A^k_s$ by
$$
A^k_s(z)=\frac{R^kf(z)\,\chi_{\Gamma(\zeta)}(z)}{(1-|z|)^{n/s}}.
$$
Obviously, for any $f\in H^p$,
$\|\mathcal{A}_{\mu,s}^k(f)\|_{L^q(\mathbb{S}_n)}=\|A^k_s(f)\|_{T^q_s(\mu)}$.
Thus $\mathcal{A}_{\mu,s}^k:\,H^p\to L^q(\mathbb{S}_n)$ is compact
if and only if 
$A^k_s:\,H^p\to T^q_s(\mu)$ is compact. 
Therefore, if $\mathcal{F}$ is a bounded family in $H^p$ then
$A^k_s(\mathcal{F})$ is a relatively compact set in $T^p_s(\mu)$, 
and the result then follows form Lemma~\ref{tent01}.
\end{proof}

\begin{lem}\label{test-c}
Let $k$ be a positive integer, let $0<p, q ,s<\infty$ and let $0<r<1$. 
Let $\mu$ be a finite, positive Borel measure on $\mathbb{B}_n$ on $\mathbb{B}_n$.
Let $R\in(0,1)$ 
Let $\Phi_{\mu}(z)=\mu(D(z,r))/(1-|z|)^{n+ks}$, and let 
$(\Phi_{\mu})_R=\Phi_{\mu}\cdot\chi_{\B\setminus\overline{B(0,R)}}$.
If $\mathcal{A}_{\mu, s}^k: H^p \rightarrow L^q\left(\mathbb{S}_n\right)$ is compact then
for any $r$-lattice $Z=\{z_j\}\in\B$ and any complex sequence $\lambda=\{\lambda_j\}\in T^p_2(Z)$,
we have
\begin{equation}\label{lambda-phi-c}
\lim_{R\to 1}\int_{\mathbb{S}_n}\left(\sum_{z_j\in \Gamma(\zeta)}|\lambda_j|^s(\Phi_{\mu,2r})_R(z_j)\right)^{q/s}\,d\sigma(\zeta)
=0.
\end{equation}
\end{lem}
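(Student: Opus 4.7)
The plan is to adapt the proof of Lemma~\ref{test-b}, replacing the a priori boundedness bound by the smallness of the ``boundary tail'' afforded by Corollary~\ref{tent02}. Fix $\lambda=\{\lambda_m\}\in T_2^p(Z)$ and some $t_0>\max(2/p,1)$, and for each $t\in[0,1]$ put
$$
f_t(z):=S_Z(\{\lambda_m r_m(t)\})(z)=\sum_m \lambda_m r_m(t)\frac{(1-|a_m|)^{nt_0}}{(1-\langle a_m,z\rangle)^{nt_0}}.
$$
Since $|r_m(t)|=1$ a.e., the $T_2^p(Z)$-norm of $\{\lambda_m r_m(t)\}$ equals $\|\lambda\|_{T_2^p(Z)}$, so Lemma~\ref{l4.1} shows that $\mathcal F:=\{f_t:t\in[0,1]\}$ is a bounded subset of $H^p$.

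Given $\e>0$, I would then apply Corollary~\ref{tent02} to the family $\mathcal F$ to obtain $R_0\in(0,1)$ such that
$$
\sup_{t\in[0,1]}\int_{\mathbb S_n}\left(\int_{\Gamma(\zeta)\setminus\overline{B(0,R_0)}}|\mathcal R^k f_t(z)|^s\frac{d\mu(z)}{(1-|z|)^n}\right)^{q/s}\,d\sigma(\zeta)<\e.
$$
Set $g_m(z):=(1-|a_m|)^{nt_0}/(1-\langle a_m,z\rangle)^{nt_0+k}$, so that $\mathcal R^k f_t(z)=\sum_m\lambda_m r_m(t)g_m(z)$ up to a constant. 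Integrating the displayed inequality over $t\in[0,1]$, swapping the $t$-integral across the outer $q/s$-power via Kahane's inequality \eqref{002}, and then using Fubini together with Khinchine's inequality \eqref{001} applied to the $s$-power integrand $|\sum_m \lambda_m r_m(t)g_m(z)|^s$, I arrive at
$$
\int_{\mathbb S_n}\left(\int_{\Gamma(\zeta)\setminus\overline{B(0,R_0)}}\left(\sum_m |\lambda_m|^2|g_m(z)|^2\right)^{s/2}\frac{d\mu(z)}{(1-|z|)^n}\right)^{q/s}\,d\sigma(\zeta)\lesssim \e.
$$

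To convert this into the desired discrete sum, I would choose $R_1\in(R_0,1)$ close enough to $1$ that $|a_j|\ge R_1$ forces $D(a_j,2r)\subset\B\setminus\overline{B(0,R_0)}$, which is possible because the Euclidean diameter of $D(w,2r)$ is comparable to $1-|w|$. Then, exactly as in the proof of Lemma~\ref{test-b}, I would decompose the inner integral via the covering $\B=\cup_j D(a_j,r)$ (Lemma~\ref{covering}), retain only the $m=j$ term in the $\ell^2$-sum, and use $|1-\langle a_j,z\rangle|\asymp 1-|a_j|$ on $D(a_j,2r)$ to replace the inner integrand by $|\lambda_j|^s(1-|a_j|)^{-ks}$. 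This yields the lower bound
$$
\int_{\mathbb S_n}\left(\sum_{a_j\in\Gamma(\zeta)}|\lambda_j|^s(\Phi_{\mu,2r})_{R_1}(a_j)\right)^{q/s}\,d\sigma(\zeta)\lesssim\e,
$$
and since $\e>0$ was arbitrary, \eqref{lambda-phi-c} follows.

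The principal obstacle is securing the uniformity in $t\in[0,1]$ needed to invoke Corollary~\ref{tent02}; this is handled precisely because the Rademacher twist preserves the $T_2^p(Z)$-norm, so $\mathcal F$ sits inside a single bounded set in $H^p$ to which compactness applies collectively. The secondary subtlety is the matching of the two radii $R_0$ (from Corollary~\ref{tent02}) and $R_1$ (needed so that the Bergman balls avoid $\overline{B(0,R_0)}$), but this is routine thanks to the local comparability of Bergman and Euclidean geometry near $\mathbb S_n$.
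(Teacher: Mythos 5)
Your argument is correct, and it reaches \eqref{lambda-phi-c} by a route that differs from the paper's in one genuine structural choice: \emph{what gets truncated}. The paper truncates the test functions, introducing the family $S_{Z,R}(\lambda)(z)=\sum_{|z_j|\ge R}\lambda_j\bigl((1-|z_j|)/(1-\langle z_j,z\rangle)\bigr)^{nt}$, applying Corollary~\ref{tent02} uniformly over the bounded set $\{S_{Z,R}(\lambda):\|\lambda\|_{T^p_2(Z)}\le1,\ R\in[0,1)\}$ to make the area integral over $\Gamma(\zeta)\setminus\overline{B(0,\tau)}$ small, and then estimating the remaining integral over the compact piece $\overline{B(0,\tau)}$ separately --- this is where separatedness of $Z$ and $t>1$ give $\sum_{|z_j|\ge R}(1-|z_j|)^{nt}<\varepsilon^2$ for $R$ near $1$, combined with H\"older's inequality and Lemma~\ref{arsen}, so that the paper obtains smallness of the integral over the \emph{whole} cone for the truncated test function; only then does it run the randomization machinery of Lemma~\ref{test-b}, which produces the truncated lattice sum automatically because $S_{Z,R}$ involves only $|z_j|\ge R$. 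You instead keep the single untruncated operator $S_Z$, randomize first (the Rademacher signs leave the $T^p_2(Z)$-norm unchanged, so $\{f_t\}$ is a single bounded family to which the compactness applies collectively via Corollary~\ref{tent02}), and truncate the \emph{region of integration}; the truncation is then transferred to the lattice sum by the radius-matching observation that $D(a_j,2r)\subset\B\setminus\overline{B(0,R_0)}$ once $|a_j|\ge R_1$, and the monotonicity of $R\mapsto(\Phi_{\mu,2r})_R$ upgrades smallness at $R_1$ to the full limit statement. Your version buys a somewhat cleaner proof: no auxiliary family $S_{Z,R}$, no separate inner-region estimate, and no appeal to Lemma~\ref{arsen}; as a minor bonus, restricting the diagonal extraction to $|a_j|\ge R_1$ means the comparison $\bigl|\mathcal{R}^k\bigl[(1-\langle\cdot,a_j\rangle)^{-nt_0}\bigr](z)\bigr|\asymp(1-|a_j|)^{-k}\,(1-|a_j|)^{-nt_0}\,(1-|a_j|)^{nt_0}$ on $D(a_j,2r)$ is invoked only for points well away from the origin, where the radial derivative of the kernel is genuinely nondegenerate --- a point the paper's Lemma~\ref{test-b} computation quietly glosses for lattice points near $0$. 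What the paper's approach buys in exchange is that the uniform-in-$R$ estimate \eqref{szr2} is stated for the truncated test functions themselves, which makes the final sentence (``following the proof of Lemma~\ref{test-b}'') a literal re-run of that proof; in your version one should note explicitly, as you do, that the Kahane--Khinchine step is insensitive to replacing the cone by $\Gamma(\zeta)\setminus\overline{B(0,R_0)}$, since Kahane's constants are independent of the quasi-Banach space $L^s$ built on the restricted measure.
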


\begin{proof}
Suppose that $\A^k_{\,\mu,s}: H^p\to L^q(\mathbb{S}_n)$ is compact. 
Let $0<r<1$, and let $Z=\{z_j\}$ be an $r$-lattice in $\B$ such that $z_j\neq 0$ for all $j$. 
Denote by $\lambda=\{\lambda_j\}$ a sequence complex numbers.
Let $B_{T_2^p(Z)}=\{\lambda\in T_2^p(Z):\|\lambda\|_{T_2^p(Z)}\le1\}$.
For each $0\leq R<1$ and $t$ with $t>\max(2/p,1)$,
consider the operator
$$
S_{Z,R}(\lambda)(z)=\sum_{|z_j|\ge R}\lambda_j\left(\frac{1-|z_j|}{1-\langle z_j,z\rangle}\right)^{nt},\quad z\in\B.
$$
We write $S_{Z,0}(\lambda)=S_{Z}(\lambda)$. 
By Lemma \ref{l4.1}, $S_{Z,R}:T^p_2(Z)\to H^p$ is bounded, and
$$
\|S_{Z,R}(\lambda)\|_{H^p}\le C\|\lambda\|_{T_2^p(Z)}, \quad \textrm{for each } R\in[0,1).
$$
Since $\A^k_{\,\mu,s}: H^p\to L^q(\mathbb{S}_n)$ is compact, by Corollary~\ref{tent02},
for any $\ve>0$, there exists an $\tau\in(0,1)$ such that  
\begin{equation}\label{szr}
\sup_{\lambda\in B_{T^p_2(Z)}}
\int_{\mathbb{S}_n}\left(\int_{\Gamma(\zeta)\setminus{\overline{B(0,\tau)}})}
|\mathcal{R}^k(S_{Z,R}(\lambda))(z)|^s\,\frac{d\mu(z)}{(1-|z|)^n}\right)^{q/s}\,d\sigma(\zeta)<\ve^q
\end{equation}
for any $R\in[0,1)$.

Since $\{z_j\}$ is separated and $t>1$, there is an $R_0$ such that, for any $R\in[R_0,1)$,
we have $\sum_{|z_j|\ge R}(1-|z_j|)^{nt}<\ve^2$. 
Thus, by H\'older's inequality and Lemma~\ref{arsen} we get that, for $|z|\le\tau$ and $R\ge R_0$,
\begin{eqnarray*}
|\mathcal{R}^k\circ S_{Z,R}(\{\lambda)(z)|
&\lesssim&\sum_{|z_j|\ge R}|\lambda_j|(1-|z_j|)^{nt}\\
&\lesssim&\left(\sum_{|z_j|\ge R}|\lambda_j|^2(1-|z_j|)^{nt}\right)^{1/2}\cdot\ve\\
&\lesssim&\inf_{\zeta\in\mathbb{S}_n}\left(\sum_{|z_j|\ge R}|\lambda_j|^2\frac{(1-|z_j|}{|1-\langle z_j,\zeta\rangle|}\right)^{1/2}\cdot \ve\\
&\lesssim&\left(\int_{\mathbb{S}_n}\left(\sum_{|z_j|\ge R}|\lambda_j|^2\frac{(1-|z_j|}{|1-\langle z_j,\zeta\rangle|}\right)^{p/2}\right)^{1/p}\cdot \ve\\
&\lesssim&\|\lambda\|_{T^p_2(Z)}\cdot\ve. 
\end{eqnarray*} 
Combining with (\ref{szr}) we get
\begin{equation}\label{szr1}
\int_{\mathbb{S}_n}\left(\int_{\Gamma(\zeta)}
|\mathcal{R}^k(S_{Z,R}(\lambda))(z)|^s\,\frac{d\mu(z)}{(1-|z|)^n}\right)^{q/s}\,d\sigma(\zeta)
\lesssim\|\lambda\|_{T^p_2(Z)}^q\cdot\ve^q
\end{equation}
for all $R\ge R_0$ and $\lambda\in T^p_2(Z)$.
That is
\begin{equation}\label{szr2}
\int_{\mathbb S_n}\left(\int_{\Gamma(\zeta)}\left|\sum_{|z_j|\geq R}\lambda_j
\frac{(1-|z_j|)^{nt}}{(1-\langle z_j,z\rangle)^{nt+k}}\right|^s
\frac{d\mu(z)}{(1-|z|)^n}\right)^{{q/s}}d\sigma(\zeta)
\lesssim \|\lambda\|_{T^p_2(Z)}^q\cdot\ve^q
\end{equation}
for all $R\in[R_0,1)$ and $\lambda\in T^p_2(Z)$.
Following the proof of Lemma~\ref{test-b} we can get
(\ref{lambda-phi-c}). 
We omit the details here.
\end{proof}

\begin{thm}\label{van2}
Let $0<p, q ,s<\infty$, let $k \in \mathbb N$, let $0<r<1$ and 
let $\mu$ be a positive Borel measure on $\mathbb{B}_n$, finite on compact subsets of $\B$. 
Let $\Phi_{\mu}(z)=\mu(D(z,r))/(1-|z|)^{n+ks}$,
and let $(\Phi_{\mu})_R=\Phi_{\mu}\cdot\chi_{\B\setminus\overline{B(0,R)}}$.
Then the following statements hold.
\begin{itemize}
\item[(i)]If $p=q$ and $s<2$, then
$\mathcal{A}^k_{\mu, s} : H^p \rightarrow L^q(\mathbb S_n)$ is compact if and only if 
$$
\lim_{R\to1}\|(\Phi_{\mu})_R\|_{T^{\infty}_{2/(2-s)}(\tau)}=0.
$$
\item[(ii)] If $p>q$ and $s\ge2$, then 
$\mathcal{A}_{\mu, s}^k: H^p \rightarrow L^q(\mathbb{S}_n)$ is compact if and only if 
$$
\lim_{R\to 1}\|(\Phi_{\mu})_R\|_{T^{pq/(s(p-q)}_{\infty}(\tau)}=0.
$$
\item[(iii)] If $p>q$ and $s<2$, then $\mathcal{A}_{\mu, s}^k: H^p \to L^q(\mathbb{S}_n)$ 
is compact if and only if $\Phi_{\mu} \in T_{2 /(2-s)}^{pq/(s(p-q))}(\tau)$.
\end{itemize}
\end{thm}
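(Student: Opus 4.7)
The plan is to combine three ingredients: (a) the sequential criterion for compactness from Lemma~\ref{compact}, (b) the boundedness proofs of Theorems~\ref{thm1.1b} and \ref{thm1.1c} applied with a boundary cut-off parameter $R$, and (c) the compactness analog of the test-sequence inequality, namely Lemma~\ref{test-c}. In each case the compactness condition is handled as a quantitative localization near $\partial\mathbb{B}_n$ of the corresponding boundedness condition.

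For sufficiency in (i) and (ii), let $\{f_i\}\subset H^p$ be a bounded sequence converging to zero uniformly on compact subsets of $\B$; by Lemma~\ref{compact} it suffices to show $\|\A^k_{\mu,s}(f_i)\|_{L^q(\mathbb{S}_n)}\to 0$. I would revisit the final chain of estimates in the sufficiency proofs of Theorems~\ref{thm1.1b} and \ref{thm1.1c} (for instance \eqref{fwr} in case (ii)), and split $\Phi_\mu=\Phi_\mu\,\chi_{\overline{B(0,R)}}+(\Phi_\mu)_R$. On the compact piece $\overline{B(0,R)}$, uniform convergence of $f_i$ (and hence of $\mathcal R f_i$, by standard derivative estimates for holomorphic functions) forces the corresponding integral to vanish as $i\to\infty$ for each fixed $R$; on the complement the contribution is dominated by a constant multiple of $\|(\Phi_\mu)_R\|$ in the appropriate tent space, which by hypothesis becomes arbitrarily small. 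Choosing $R$ first to make the tail small and then letting $i\to\infty$ finishes the argument.

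For necessity in (i) and (ii), I would mirror the necessity proofs in Theorems~\ref{thm1.1b} and \ref{thm1.1c}, but invoke Lemma~\ref{test-c} in place of Lemma~\ref{test-b}. Concretely, Lemma~\ref{test-c} supplies, for each $r$-lattice $Z=\{z_j\}$, the bound
\[
\lim_{R\to 1}\int_{\mathbb{S}_n}\Bigl(\sum_{z_j\in\Gamma(\zeta)}|\lambda_j|^s(\Phi_{\mu,2r})_R(z_j)\Bigr)^{q/s}\,d\sigma(\zeta)=0
\]
uniformly in $\lambda\in T^p_2(Z)$ with $\|\lambda\|_{T^p_2(Z)}\le1$. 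The duality/factorization machinery of Lemmas~\ref{dual} and \ref{fact}, applied exactly as in the necessity halves of Theorems~\ref{thm1.1b} and \ref{thm1.1c}, then converts this into $\lim_{R\to1}\|(\Phi_{\mu,2r})_R\|_{T^\infty_{2/(2-s)}(Z)}=0$ in case (i) and $\lim_{R\to1}\|(\Phi_{\mu,2r})_R\|_{T^{pq/(s(p-q))}_\infty(Z)}=0$ in case (ii). Finally, the Carleson-measure transfer estimate \eqref{carleson-cd}, combined with Lemma~\ref{carleson-v}, lifts this discrete-lattice vanishing to the continuous $\tau$-tent-space vanishing appearing in the statement.

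Case (iii) splits cleanly. Necessity is immediate from Theorem~\ref{thm1.1c}(ii) since compactness implies boundedness. For sufficiency, the point is that in the off-diagonal regime $p>q$ the outer exponent $pq/(s(p-q))$ is finite, so the assumption $\Phi_\mu\in T^{pq/(s(p-q))}_{2/(2-s)}(\tau)$ together with the dominated convergence theorem already yields
\[
\lim_{R\to1}\|(\Phi_\mu)_R\|_{T^{pq/(s(p-q))}_{2/(2-s)}(\tau)}=0,
\]
and the sufficiency argument of (ii), with $T^{pq/(s(p-q))}_\infty(\tau)$ replaced by $T^{pq/(s(p-q))}_{2/(2-s)}(\tau)$ in the final H\"older step, then applies verbatim. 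The main obstacle I anticipate is the double-limit management in necessity for (i) and (ii): one must extract the vanishing-in-$R$ statement from Lemma~\ref{test-c} while simultaneously running the duality pairing and factorization, making sure the ``$\sup_\lambda$'' and ``$\lim_R$'' can be exchanged and that the implicit constants from Lemmas~\ref{dual} and \ref{fact} do not depend on $R$.
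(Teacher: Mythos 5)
Your architecture is the paper's own: for sufficiency in (i) and (ii) you use the sequential criterion (the (ii)$\Rightarrow$(i) half of Lemma~\ref{compact}, which by the remark following it is valid for all $0<p,q,s<\infty$, so its use in the regime $p>q$ is legitimate), re-run the boundedness estimates with the split $\Phi_\mu=\Phi_\mu\chi_{\overline{B(0,R)}}+(\Phi_\mu)_R$, and kill the compact piece by uniform convergence and the tail by the small tent norm; for necessity you substitute Lemma~\ref{test-c} for Lemma~\ref{test-b} inside the duality/factorization argument of \eqref{4}; and in (iii) you make exactly the paper's observation that necessity is pure boundedness (Theorem~\ref{thm1.1c}(ii)) while dominated convergence upgrades $\Phi_\mu\in T^{pq/(s(p-q))}_{2/(2-s)}(\tau)$ to the vanishing-tail statement needed for the sufficiency argument. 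Your anticipated obstacle about exchanging $\sup_\lambda$ and $\lim_R$ is genuine but resolved inside the proof of Lemma~\ref{test-c}: the key inequality \eqref{szr2} holds for all $R\ge R_0$ uniformly over $\lambda$ with $\|\lambda\|_{T^p_2(Z)}\le 1$, with constants independent of $R$, and that uniform form is what the paper (and you) must feed into the pairing. One small caution in (i): since $p=q$, $s<2$ admits no direct continuous H\"older estimate like \eqref{fwr}, the split must be run on the lattice sum \eqref{12} using the factorization $T^{p/s}_{2/s}(Z)\cdot T^{\infty}_{2/(2-s)}(Z)$, and the continuous hypothesis transfers to the lattice via Remark~\ref{r-thm1.1b}; your phrase ``revisit the final chain of estimates in Theorem~\ref{thm1.1b}'' covers this, but the splitting happens discretely, not on $\Phi_\mu$ itself.

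The one step that fails as written is the discrete-to-continuous transfer in the necessity of (ii). The pair \eqref{carleson-cd} plus Lemma~\ref{carleson-v} is Carleson-measure machinery: it converts vanishing of the lattice norm $\|\cdot\|_{T^{\infty}_{2/(2-s)}(Z)}$ into vanishing of $\|(\Phi_\mu)_R\|_{T^{\infty}_{2/(2-s)}(\tau)}$, because that norm is precisely the Carleson norm of $|\Phi_\mu|^{2/(2-s)}(1-|z|)^n\,d\tau$. This is what the paper uses in case (i), and only there. In case (ii) the target is $\|(\Phi_\mu)_R\|_{T^{pq/(s(p-q))}_{\infty}(\tau)}$ --- a finite outer exponent with an inner $\esssup$ over $\Gamma(\zeta)$ --- and no Carleson measure appears; Lemma~\ref{carleson-v} says nothing about such norms. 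The correct (and elementary) substitute, which the paper invokes through Lemma~\ref{3}, is the covering estimate: for $w\in D(z_j,r)$ one has $\mu(D(w,r))\le\mu(D(z_j,2r))$ and $1-|w|\asymp 1-|z_j|$, so $\sup_{w\in\Gamma(\zeta)}(\Phi_\mu)_R(w)\lesssim\sup\{(\Phi_{\mu,2r})_{R'}(z_j):z_j\in\widetilde\Gamma(\zeta)\}$ for a suitable $R'<R$ with $R'\to1$ as $R\to1$; integrating the $pq/(s(p-q))$ power over $\mathbb{S}_n$ then carries the lattice vanishing to the continuous one. With that replacement your proof goes through and is otherwise the paper's argument.
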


\begin{proof}
(i). We first consider the sufficiency. Suppose that
$$
\lim_{R\to1}\|(\Phi_{\mu})_R\|_{T^{\infty}_{2/(2-s)}(\tau)}=0.
$$
By Remark~\ref{r-thm1.1b},
for any $r$-lattice $Z=\{a_j\}$ in $\B$,
$$
\lim_{R\to1}\|(\Phi_{\mu})_R(a_j)\|_{T^{\infty}_{2/(2-s)}(Z)}=0.
$$
Thus, for any $\ve>0$, there is an $R_0\in(0,1)$ such that
\begin{align}\label{14}
\|(\Phi_{\mu})_R(a_j)\|_{T_{2/(2-s)}^\infty(Z)}
=\sup_{a\in\B}\frac{1}{(1-|a|)^n}\sum_{a_j\in\Lambda(a)}
\left|(\Phi_{\mu})_R(a_j)\right|^{2/(2-s)}(1-|a_j|)^n<\ve^{s}
\end{align}
for any $R\in[R_0,1)$.
Fix such $R$.
Let $\{f_i\}$ be a bounded sequence in $H^p$. 
Then it has a subsequence $\{f_{i_m}\}$ that converges uniformly on compact subsets of $\mathbb{B}_n$ 
to a function $f\in H^p$. 
Let $g_m=f_{i_m}-f$.
Then $\{\mathcal{R}^kg_{m}\}$ converges to 0 uniformly on compact subsets in $\B$. 
Choose $m_0$ such that $|\mathcal{R}^kg_m(z)|<\ve$ for all $m\ge m_0$ and all $z\in \overline{B(0,R)}$.
By \eqref{12}, we obtain
\begin{eqnarray*}
(\mathcal{A}^k_{\mu, s}(g_{m})(\zeta))^s
&\lesssim&\sum_{a_j\in \widetilde{\Gamma}(\zeta)\cap \overline{B(0,R)}}
\left(\int_{D(a_j,2r)}|\mathcal{R} g_{m}(w)|^2(1-|w|)^2\,d\tau(w)\right)^{s/2}\cdot\Phi_{\mu}(a_j)\\
 &~&\qquad+\sum_{a_j \in \widetilde{\Gamma}(\zeta)\setminus\overline{B(0,R)}}
 \left(\int_{D(a_j,2r)}|\mathcal{R} g_{m}(w)|^2(1-|w|)^2\,d\tau(w)\right)^{s/2}\cdot \Phi_{\mu}(a_j)\\
&=&I_1(\zeta)+I_2(\zeta).
\end{eqnarray*}
Let $\gamma=\{\gamma_j\}$, where
$$
\gamma_j=\left(\int_{D(a_j,2r)}|\mathcal{R} g_{m}(w)|^2(1-|w|)^2\chi_{\B\cap\overline{B(0,R)}}\,d\tau(w)\right)^{s/2}
$$
Then, for any $m\ge m_0$,
\begin{eqnarray*}
\|\gamma\|_{T^{p/s}_{2/s}(Z)}^{p/s}
&=&\int_{\mathbb{S}_n}\left(\sum_{a_j\in\Gamma(\zeta)\cap\overline{B(0,R)}}
\int_{D(a_j,2r)}|\mathcal{R}g_{m}(w)|^2(1-|w|)^2\,d\tau(w)\right)^{p/2}\,d\sigma(\zeta)\\
&\lesssim& \int_{\mathbb{S}_n}
\left(\int_{\widetilde{\Gamma}(\zeta)\cap\overline{B(0,R)}}|\mathcal{R}g_{m}(w)|^2(1-|w|)^2\,d\tau(w)\right)^{p/2}\,d\sigma(\zeta)\\
&\lesssim&\ve^{p}.
\end{eqnarray*}
Thus, similar to the proof of (\ref{lambda-phi1}), we get that
$$
\int_{\mathbb{S}_n}(I_1(\zeta))^{p/s}\,d\sigma(\zeta)
\lesssim\ve^p\|\Phi_{\mu}(a_j)\|_{T^{\infty}_{2/(2-s)}(Z)}^{p/s}.
$$
Again, similar to the proof of (\ref{lambda-phi1}), and by (\ref{14}), we get that
$$
\int_{\mathbb{S}_n}(I_2(\zeta))^{p/s}\,d\sigma(\zeta)
\lesssim \|g_{m}\|_{H^p}^p\|(\Phi_{\mu})_R(a_j)\|_{T^{\infty}_{2/(2-s)}(Z)}^{p/s}
\lesssim \|g_{m}\|_{H^p}^p\ve^p.
$$
Combining the above two inequalities, we get that, for all $m\ge m_0$,
\begin{eqnarray*}
\|\mathcal{A}^k_{\mu, s}(g_{m})\|^p_{L^p(\mathbb S_n)}
\lesssim\ve^p\|\Phi_{\mu}\|^{p/s}_{T^\infty_{2/(2-s)}(Z)}
   +\|g_{m}\|^p_{H^p}\ve^p
\lesssim\ve^p.
\end{eqnarray*}
Therefore,
\[
\lim_{m\to\infty}\|\A_{\,\mu,\,s}^k(g_{m})\|_{L^q(\mathbb{S}_n)}^s=0.
\]
Thus, by Lemma \ref{duren} we have
\[
\lim_{m\to\infty}\|\A_{\,\mu,\,s}^k(f_{i_m})-\A_{\mu,s}^k(f)\|_{L^q(\mathbb{S}_n)}=0,
\]
that is, $\A_{\,\mu,\,s}^k:\,H^p\rightarrow L^q(\mathbb{S}_n)$ is compact.

Conversely,  suppose that $\A^k_{\,\mu,s}: H^p\to L^q(\mathbb{S}_n)$ is compact. 
Let $0<r<1$, and let $Z=\{z_j\}$ be an $r$-lattice in $\B$ such that $z_j\neq 0$ for all $j$. 
By Lemma~\ref{test-c}, 
there is an $R_0\in (0,1)$ such that, for any $R\in[R_0,1)$, and
for any $\lambda=\{\lambda_j\}\in T^p_2(Z)$,
$$
\int_{\mathbb{S}_n}\left(\sum_{z_j\in \Gamma(\zeta)}|\lambda_j|^s(\Phi_{\mu,2r})_R(z_j)\right)^{q/s}\,d\sigma(\zeta)<\ve^q.
$$
Following the proof of (\ref{4}), we get that
for $\rho>1$ large enough, 
and for any $\varphi=\{\varphi_j\}\in T^1_{(2\rho/(2-s))'}$,
$$
\sum_j|\varphi_j|\,(\Phi_{\mu,2r})_R^{1/\rho}(z_j)(1-|z_j|)^n<\ve^{\rho}.
$$
for any $R\in[R_0,1)$.
By Lemma~\ref{dual} we obtain that 
$\{(\Phi_{\mu,2r})_R(z_j)\}\in T^{\infty}_{{2}/{(2-s)}}(Z)$, and
$$
\|\{(\Phi_{\mu,2r})_R(z_j)\}\|_{T^{\infty}_{{2}/{(2-s)}}(Z)}<\ve
$$
for any $R\in[R_0,1)$.  
By Lemma~\ref{carleson-v},
this is equivalent to that
$$
d\omega(z)=\sum_{j=1}^{\infty}|(\Phi_{\mu,2r}(z_j))|^{2/(2-s)}(1-|z_j|)^n\,d\delta_{z_j}(z)
$$
is a vanishing Carleson measure on $\B$.
Let $d\widetilde{\omega}(z)=|\Phi_{\mu}(z)|^{2/(2-s)}(1-|z|)^n\,d\tau(z)$.
By (\ref{carleson-cd}), we get that $d\widetilde{\omega}$
is a vanishing Carleson measure on $\B$.
By Lemma~\ref{carleson-v} again, this is equivalent to that
$$
\lim_{R\to1}\|(\Phi_{\mu})_R\|_{T^{\infty}_{2/(2-s)}(\tau)}=0,
$$
completing the proof of (i).

The proofs for (ii) and (iii) are similar. We only give a sketch of these proofs.
Suppose
$$
\lim_{R\to 1}\|(\Phi_{\mu})_R\|_{T^{pq/(s(p-q)}_{\infty}(\tau)}=0.
$$
Then, for a fixed $\ve>0$, there exists $R_0\in (0,1)$ such that
$$
\int_{\mathbb{S}_n}\left(\sup_{z\in\Gamma(\zeta)\setminus\overline{B(0,R)}}
\Phi_{\mu}(z)\right)^{pq/(s(p-q))}d\sigma(\zeta)<\ve^{p/(p-q)}
$$
for any $R\in[R_0,1)$.
Let $\{f_i\}$ be a bounded sequence in $H^p$. 
Then it has a subsequence $\{f_{i_m}\}$ that converges uniformly on compact subsets of $\mathbb{B}_n$ 
to a function $f\in H^p$. 
Let $g_{m}=f_{i_m}-f$.
Then $\{\mathcal{R}^kg_{m}\}$ converges to 0 uniformly on compact subsets in $\B$. 
Choose $m_0$ such that $|\mathcal{R}^kg_{m}(z)|<\ve$ for all $m\geq m_0$ and all $z\in\overline{B(0,R)}$.
By Lemma \ref{dmk}, Fubini's theorem, H\"{o}lder's inequality
and Lemma \ref{duren}  we obtain
\begr
\|\A_{\,\mu,\,s}^k(g_{m})\|_{L^q}^q
&\lesssim&\int_{\mathbb{S}_n}\left(\int_{\Gamma(\zeta)\cap \overline{B(0,R)}}
|g_m(z)|^s\Phi_{\mu}(z)\,d\tau(z)\right)^{q/s}\,d\sigma(\zeta)\nonumber\\
&~&\qquad+\int_{\mathbb{S}_n}\left(\int_{\Gamma(\zeta)\setminus\overline{B(0,R)}}
|g_{m}(z)|^s\Phi_{\mu}(z)\,d\tau(z)\right)^{q/s}\,d\sigma(\zeta)\nonumber\\
&\lesssim&\ve+\int_{\mathbb{S}_n}\left(g^*_m(\zeta)\right)^{(s-2)q/s}
\left(\int_{\Gamma(\zeta)\setminus\overline{B(0,R)}}
\Phi_{\mu}(z)\,d\tau(z)\right)^{q/s}d\sigma(\zeta)\nonumber\\
&\leq&\ve+\|g^*_m\|_{L^p}^q\left(\int_{\mathbb{S}_n}\left(\int_{\Gamma(\zeta)\setminus\overline{B(0,R)}}
\Phi_{\mu}(z)\,d\tau(z)\right)^{pq/(s(p-q))}\,d\sigma(\zeta)\right)^{(p-q)/p}\nonumber\\
&\lesssim&\ve\nonumber.
\endr
Therefore,
\[
\lim_{m\to\infty}\|\A_{\,\mu,\,s}^k(g_{m})\|_{L^q(\mathbb{S}_n)}^q=0.
\]
By arguing as in the previous case we see that $\A^k_{\,\mu,s}: H^p\rightarrow L^q(\mathbb{S}_n)$ is compact.

Conversely, suppose that $\A^k_{\,\mu,s}: H^p\to L^q(\mathbb{S}_n)$ is compact.
Let $0<r<1$, and let $Z=\{z_j\}$ be an $r$-lattice in $\B$ such that $z_j\neq 0$ for all $j$. 
By Lemma~\ref{test-c}, 
there is an $R_0\in (0,1)$ such that, for any $R\in[R_0,1)$, and
for any $\lambda=\{\lambda_j\}\in T^p_2(Z)$,
$$
\int_{\mathbb{S}_n}\left(\sum_{z_j\in \Gamma(\zeta)}|\lambda_j|^s(\Phi_{\mu,2r})_R(z_j)\right)^{q/s}\,d\sigma(\zeta)<\ve^q.
$$
Following the proof of (\ref{4}), we get that
for $\rho>1$ large enough, 
and for any $\varphi=\{\varphi_j\}\in T^{({pq\rho/(s(p-q))})'}_{\sigma}(Z)$,
$$
\sum_j|\varphi_j|\,(\Phi_{\mu,2r})_R^{1/\rho}(z_j)(1-|z_j|)^n<\ve^{\rho}.
$$
for any $R\in[R_0,1)$.
By Lemma~\ref{dual} we obtain that 
$\{\Phi_{\mu,2r}(z_j)\}\in T_{\infty}^{pq/(s(p-q))}(Z)$, and
$$
\|\{(\Phi_{\mu,2r})_R(z_j)\}\|_{T_{\infty}^{pq/(s(p-q))}(Z)}<\ve
$$
for any $R\in[R_0,1)$. 
By Lemma~\ref{3} we get that
$$
\lim_{R\to 1}\|(\Phi_{\mu})_R\|_{T^{pq/(s(p-q)}_{\infty}(\tau)}=0.
$$

(iii) Suppose that $\Phi_{\mu} \in T_{2 /(2-s)}^{pq/(s(p-q))}(\tau)$.
By the dominated convergence theorem, we get
$$
\lim_{R\to1}
\int_{\mathbb{S}_n}
\left(\int_{\Gamma(\zeta)\setminus\overline{B(0,R)}}
\left(\Phi_{\mu}(w)\right)^{2/(2-s)}\,d\tau(w)\right)^{((2-s)/2)pq/(s(p-q))}\,d\sigma(\zeta)=0.
$$
Thus, for any $\ve>0$, there is an $R_0\in(0,1)$ such that, for any $R\in[R_0,1)$,
$$
\int_{\mathbb{S}_n}
\left(\int_{\Gamma(\zeta)\setminus\overline{B(0,R)}}
\left(\Phi_{\mu}(w)\right)^{2/(2-s)}\,d\tau(w)\right)^{((2-s)/2)pq/(s(p-q))}\,d\sigma(\zeta)<\ve^{s(p-q)/(pq)}.
$$
Using this fact, similar to the proof of (ii), 
and following the proof of the sufficiency part of Theorem~\ref{thm1.1c} (ii),
we can get $\A^k_{\,\mu,s}: H^p\rightarrow L^q(\mathbb{S}_n)$ is compact. We omit the details.

The converse is obvious from Theorem~\ref{thm1.1c} (ii). 
The proof is complete.
\end{proof}

Again, by letting $s=q$ in Theorem~\ref{van1} and Theorem~\ref{van2},
and using (\ref{Aps}), we get the following result, which was first
proved on the unit disk by Pel\'{a}ez in \cite{AP2016}.

\begin{cor}\label{v-cm}
Let $0<p, q ,s<\infty$, let $k \in \mathbb N$, let $0<r<1$ and 
let $\mu$ be a positive Borel measure on $\mathbb{B}_n$, finite on compact subsets of $\B$. 
Let $\Phi_{\mu}(z)=\mu(D(z,r))/(1-|z|)^{n+kq}$,
and for $0<R<1$, let $(\Phi_{\mu})_R=\Phi_{\mu}\cdot\chi_{\B\setminus\overline{B(0,R)}}$.
Then the following statements hold.
\begin{itemize}
\item[(i)] If $p<q$ or $p=q\ge 2$ 
then $\mathcal{R}^k: H^p\to L^q(\mu)$  is compact if and only if 
$$
\lim_{|z|\to 1}\frac{\mu(D(z,r))}{(1-|z|)^{kq+nq/p}}=0.
$$
\item[(ii)] If $p=q<2$
then $\mathcal{R}^k: H^p\to L^q(\mu)$  is compact if and only if 
$$
\lim_{R\to1}\|(\Phi_{\mu})_R\|_{T^{\infty}_{2/(2-q)}(\tau)}=0.
$$
\item[(iii)] If $p>q\ge 2$ 
then $\mathcal{R}^k: H^p\to L^q(\mu)$  is compact if and only if 
$$
\lim_{R\to1}\|(\Phi_{\mu})_R\|_{T^{p/(p-q)}_{\infty}(\tau)}=0.
$$
\item[(iv)] If $p>q$ and $q<2$ 
then $\mathcal{R}^k: H^p\to L^q(\mu)$  is compact if and only if 
$
\Phi_{\mu}\in T^{p/(p-q)}_{2/(2-q)}(\tau).
$
\end{itemize}
\end{cor}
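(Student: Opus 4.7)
The plan is to deduce Corollary~\ref{v-cm} directly from the area operator results (Theorem~\ref{van1} and Theorem~\ref{van2}) by taking the parameter $s$ equal to $q$, exactly mirroring the reduction already carried out in the proof of Corollary~\ref{b-cm} for the boundedness case. The crucial bridge is identity \eqref{Aps}, which by Fubini's theorem and $\sigma(I(z)) \asymp (1-|z|)^n$ gives
$$
\|\A_{\mu,q}^k(f)\|_{L^q(\mathbb{S}_n)}^q \asymp \int_{\B}|\mathcal{R}^k f(z)|^q\,d\mu(z) = \|\mathcal{R}^k f\|_{L^q(\mu)}^q
$$
for every $f\in H(\B)$. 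Hence $\mathcal{R}^k:H^p\to L^q(\mu)$ and $\A^k_{\mu,q}:H^p\to L^q(\mathbb{S}_n)$ have equivalent action on $H^p$, and in particular the first operator is bounded (respectively compact) if and only if the second is.

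The equivalence of compactness deserves a brief justification. Boundedness is immediate from the pointwise comparison of $L^q$ norms above. For compactness, I would use Lemma~\ref{compact}: a bounded operator from $H^p$ to $L^q(\mathbb{S}_n)$ or to $L^q(\mu)$ is compact if and only if it sends bounded sequences converging to $0$ uniformly on compact subsets of $\B$ to sequences whose images tend to $0$ in norm. Since the two norms are comparable by the displayed equivalence, this "sequential" compactness criterion transfers literally between the two operators. (Note that Lemma~\ref{compact} is stated for $\A^k_{\mu,s}$, but the analogous statement for $\mathcal{R}^k$ is standard and much easier, since $\mathcal{R}^k$ is linear.)

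With this reduction in hand, the four parts of Corollary~\ref{v-cm} follow by specializing the area-operator results with $s=q$. For part~(i), $p<q$ or $p=q\ge 2$, corresponds to $s=q\ge 2$ in the $p\leq q$ regime, and the characterization is the vanishing Carleson condition from Theorem~\ref{van1}, which with $s=q$ reads $\lim_{|z|\to 1}\mu(D(z,r))/(1-|z|)^{kq+nq/p}=0$. For part~(ii), $p=q<2$ gives $s=q<2$, and Theorem~\ref{van2}(i) yields $\lim_{R\to 1}\|(\Phi_\mu)_R\|_{T^\infty_{2/(2-q)}(\tau)}=0$. Parts~(iii) and~(iv), the cases $p>q$, come from Theorem~\ref{van2}(ii) and Theorem~\ref{van2}(iii) respectively, with $s=q$ producing the exponents $p/(p-q)$ and $2/(2-q)$ that appear in the statement.

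The whole argument is essentially a routine specialization; the only nontrivial point is the transfer of compactness via the $L^q$-norm identity, which is the step I would state carefully. No genuine obstacle arises, because all the hard analysis has already been done in Theorem~\ref{van1} and Theorem~\ref{van2}.
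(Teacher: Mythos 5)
Your proposal is correct and is essentially the paper's own argument: the paper derives Corollary~\ref{v-cm} in exactly this way, by setting $s=q$ in Theorem~\ref{van1} and Theorem~\ref{van2} and using the Fubini identity \eqref{Aps} to identify $\|\A^k_{\mu,q}(f)\|_{L^q(\mathbb{S}_n)}$ with $\|\mathcal{R}^kf\|_{L^q(\mu)}$. Your explicit justification of the compactness transfer (via the sequential criterion, noting that Lemma~\ref{compact} is stated only for $p\le q$ but that the direction needed when $p>q$ follows from the remark after that lemma together with the linearity of $\mathcal{R}^k$) is a detail the paper leaves implicit, and it is handled correctly.
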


\end{document}